\documentclass[fleqn,11pt,twoside]{article}

\usepackage{amsthm,amsthm,amssymb, color, xcolor,epsfig, graphics}
\usepackage{amsmath, graphicx, latexsym, lscape}

\usepackage{url, tikz, float}
\usepackage[colorlinks=true,urlcolor=black]{hyperref}

\makeatletter
\newcommand{\copyrightnote}[2]{{\renewcommand{\thefootnote}{}
 \footnotetext{\small\it
\begin{flushleft}
 \copyright \ #1   #2  
\end{flushleft}}}}

\newcommand{\Name}[1]{\begin{flushleft}
                       \LARGE \bf #1
                       \end{flushleft}\vspace{-3mm}}

\newcommand{\Author}[1]{\begin{flushleft}
                       \it #1 \end{flushleft}}

\newcommand{\Address}[1]{\begin{flushleft}
                       \it #1 \end{flushleft}}

\newcommand{\Date}[1]{\begin{flushleft}
                      \small  \it #1 \end{flushleft}}

\newcommand{\evenhead}{Author \ name}
\newcommand{\oddhead}{Article \ name}

\renewcommand{\@evenhead}{
\hspace*{-3pt}\raisebox{-15pt}[\headheight][0pt]{\vbox{\hbox to \textwidth
{\thepage \hfil \evenhead}\vskip4pt \hrule}}}
\renewcommand{\@oddhead}{
\hspace*{-3pt}\raisebox{-15pt}[\headheight][0pt]{\vbox{\hbox to \textwidth
{\oddhead \hfil \thepage}\vskip4pt\hrule}}}
\renewcommand{\@evenfoot}{}
\renewcommand{\@oddfoot}{}

\long\def\@makecaption#1#2{%
  \vskip\abovecaptionskip
  \sbox\@tempboxa{\small \textbf{#1.}\ \ #2}%
  \ifdim \wd\@tempboxa >\hsize
    {\small \textbf{#1.}\ \ #2}\par
  \else
    \global \@minipagefalse
    \hb@xt@\hsize{\hfil\box\@tempboxa\hfil}%
  \fi
  \vskip\belowcaptionskip}

\newcommand{\JNMPnumberwithin}[3][\arabic]{%
  \@ifundefined{c@#2}{\@nocounterr{#2}}{%
    \@ifundefined{c@#3}{\@nocnterr{#3}}{%
      \@addtoreset{#2}{#3}%
      \@xp\xdef\csname the#2\endcsname{%
        \@xp\@nx\csname the#3\endcsname .\@nx#1{#2}}}}%
}

\renewenvironment{proof}[1][\proofname]{\par
  \normalfont
  \topsep6\p@\@plus6\p@ \trivlist
  \item[\hskip\labelsep\textbf{%
    #1\@addpunct{.}}]\ignorespaces
}{%
  \qed\endtrivlist
}

\newcommand{\resetfootnoterule} {
  \renewcommand\footnoterule{%
  \kern-3\p@
  \hrule\@width.4\columnwidth
  \kern2.6\p@}
}

\renewcommand{\footnoterule}{}

\makeatother

\newcounter{NN}
\newtheorem{proposition}[NN]{Proposition}
\newtheorem{theorem}[NN]{Theorem}
\newtheorem{definition}[NN]{Definition}

\def\A{\mathbf{A}}
\def\B{\mathbf{B}}
\def\C{\mathcal{C}}
\def\Z{\mathbf{Z}}
\def\J{\mathbf{J}}
\def\K{\mathbf{K}}
\def\L{\mathbf{L}}
\def\M{\mathbf{M}}
\def\N{\mathbf{N}}
\def\b{\mathbf{r}}
\def\x{\mathbf{x}}
\def\1{\mathbf{1}}
\def\f{\mathbf{f}}
\def\k{\mathbf{k}}

\begin{document}

\renewcommand{\evenhead}{ {\LARGE\textcolor{blue!10!black!40!green}{{\sf \ \ \ ]ocnmp[}}}\strut\hfill 
P.H. van der Kamp, D.I. McLaren and G.R.W. Quispel
}
\renewcommand{\oddhead}{ {\LARGE\textcolor{blue!10!black!40!green}{{\sf ]ocnmp[}}}\ \ \ \ \  
Liouville integrable Lotka-Volterra systems
}

\thispagestyle{empty}
\newcommand{\FistPageHead}[3]{
\begin{flushleft}
\raisebox{8mm}[0pt][0pt]
{\footnotesize \sf
\parbox{150mm}{{\textcolor{blue!10!black!40!green}{{\bf Open Communications in Nonlinear Mathematical Physics}}}
\ \ {Special Issue: Hietarinta}, 2026\\[0.1cm]
\strut\hfill 
ocnmp:18125
pp #2\hfill {\sc #3}}}\vspace{-13mm}
\end{flushleft}}

\FistPageHead{1}{\pageref{firstpage}--\pageref{lastpage}}{ \ \ }

\strut\hfill

\strut\hfill

\copyrightnote{The authors. Distributed under a Creative Commons Attribution 4.0 International License}

\begin{center}

{\bf {\large A Special OCNMP Issue in Honour of Jarmo Hietarinta}}\\[0.2cm]
{\bf {\large on the Occasion of his 80th Birthday}}
\end{center}

\smallskip

\Name{Liouville integrable Lotka-Volterra systems}

\Author{Peter H.~van der Kamp\footnote{Email: P.vanderKamp@LaTrobe.edu.au}, David I. McLaren and G.R.W. Quispel}

\Address{Department of Mathematical and Physical Sciences,\\
La Trobe University, Victoria 3086, Australia.}

\Date{Received April 30, 2026; Accepted June 11, 2026}

\setcounter{equation}{0}

\smallskip

\noindent
{\bf Citation format for this Article:}\newline
P.H. van der Kamp, D.I. McLaren and G.R.W. Quispel,
Liouville integrable Lotka-Volterra systems,
{\it Open Commun. Nonlinear Math. Phys.}, Special Issue:\,Hietarinta, ocnmp:18125, \pageref{firstpage}--\pageref{lastpage}, 2026.

\strut\hfill

\noindent
{\bf The permanent Digital Object Identifier (DOI) for this Article:}\newline
{\it 10.46298/ocnmp.18125}
\strut\hfill

\begin{abstract}
\noindent 
We present $
\frac{m^{2}}{4}+\frac{m}{2}+\frac{1-\left(-1\right)^{m}}{8}$ homogeneous $(3m-2)$-parameter families of Liouville integrable $(2m)$- and $(2m-1)$-dimensional Lotka-Volterra systems. We also study inhomogeneous versions of these systems.
\end{abstract}

\label{firstpage}


\section{Introduction}
Jarmo Hietarinta has spent a large part of his career searching for new integrable systems \cite{JH1,JH2,JH3,JH4,JH5,JH6,JH7,JH8}, and one of his early interests has been Liouville integrability \cite{JH0}.
In this paper, we search for Liouville integrable Lotka-Volterra (LV) systems, i.e. ordinary differential equations of the form
\begin{equation} \label{LV}
\dot{x}_i=x_i\left(r_i+\sum_{j=1}^n A_{i,j}x_j\right),\quad i=1,2,\ldots,n,
\end{equation}
where the vector $\b$ and matrix $\A$ do not depend on $\x,t$, and $\dot{\x}$ denotes the $t$-derivative of $\x$. They generalise the famous 2-dimensional predator-prey system
\begin{equation} \label{PP}
\begin{split}
\dot{x}_1=x_1\left(r_1+A_{1,2}x_2\right)\\
\dot{x}_2=x_2\left(r_2+A_{2,1}x_1\right)
\end{split}
\end{equation}
derived independently by Lotka and by Volterra \cite{Lot1,Lot2,Lot3,Vol}. So, in essence, we will be extending integrable Hamiltonian LV systems from 2 to $n$ dimensions, cf. \cite{GDRH}.

As a matter of fact, Volterra himself already studied {\em the case of any number whatever of species some of which feed upon others} \cite[$\S$7]{Vol}, and, in \cite[$\S$5]{Vol2}, he presented us with an integrable $n$-species model. In this model, the growth coefficients $r_i$ are arbitrary parameters and the interaction matrix acquires the special form
\begin{equation} \label{VOLA}
A_{i,j}=r_ir_j(a_i-a_j),
\end{equation}
involving a second set of $n$ parameters $a_i$, $i=1,2,\ldots,n$. Volterra's system can be written as a Hamiltonian system $\dot{x_i}=\{x_i,H\}$, where
\[
H=\left(\sum_{i=1}^n x_i\right) + \frac{1}{a_1-a_2}\left(\frac{\ln(x_2)}{r_2}-\frac{\ln(x_1)}{r_1} \right)
\]
and the Poisson bracket is the log-canonical bracket \cite{FerOli,Plank}
\begin{equation} \label{PBA}
\{x_i,x_j\}=A_{i,j}x_ix_j.
\end{equation}
The rank of the matrix $\A$ given by \eqref{VOLA}, and hence of the corresponding bracket \eqref{PBA}, is 2, $n-2$ Casimir functions are given by
\[
x_{1}^{r_{2}r_{i}(a_{2} - a_{i})}
x_{2}^{r_{1}r_{i}(a_{i} - a_{1})}
x_{i}^{r_{1}r_{2}(a_{1} - a_{2})},\quad i=3,\ldots,n,
\]
and Liouville integrability coincides with maximally superintegrability, cf. \cite{RagZul,ScaRagTirZul}. A historical note should be made here, as Volterra did not use the Poisson bracket \eqref{PBA}. He introduced canonical variables on an extended $2n$ dimensional space, and showed (for a general anti-symmetric matrix $\A$) the existence of $n$ (time-dependent) integrals. He then showed that these integrals are in involution for the special rank 2 matrix \eqref{VOLA}. The relation between Volterra's canonical formalism and the log-canonical bracket \eqref{PBA} can be found in \cite{FerOli}. In the present work, we generalise the 2-dimensional rank 2 Hamiltonian LV system \eqref{PP}, cf. \cite{Nut}, to $n$-dimensional rank $n$ systems, when $n$ is even, and to $n$-dimensional rank $n-1$ systems when $n$ is odd.

LV systems \eqref{LV} have been studied extensively and they found many applications in e.g. biology and chemistry \cite{Gor,HadMacSte,Hof,Mur,Now,NM,Ste,Tak}. One of the interesting things about LV systems is that they can exhibit regular or chaotic behaviour, or both. Of course, systems with maximum regularity are integrable.

A Darboux Polynomial (DP) is a polynomial $P$ whose derivative, $\dot{P}$, is proportional to $P$. The entire $n(n+1)$-parameter class of LV systems \eqref{LV} admits $n$ such polynomials, namely the variables $x_i$, $i=1,\ldots,n$. By specialisation of the parameters, we obtain subclasses with additional DPs, which give rise to integrals, or other mathematical structures such as a Poisson bracket. Darboux polynomials may be used to divide the phase space in regions where the dynamics \cite{Gor,HadMacSte} present certain properties (bounded or unbounded motion for example).

Recently, the authors have established large classes of integrable cases. This includes:
\begin{itemize}
\item the study \cite{Poisson}, of inhomogeneous LV systems of the form
\begin{equation} \label{LVF}
\dot{x}_i=x_i\,\big(r_i + \sum_{j>i} x_j - \sum_{j<i}x_j\big),\qquad
r_i=\begin{cases}
b, & i=1,\ldots,k,\\
c, & i=k+1,\ldots, k+l,\\
d, & i=k+l+1,\ldots, k+l+m=n,
\end{cases}
\end{equation}
where $c=b+d$ and $km\neq 0$. The 2-parameter class \eqref{LVF} contains Liouville integrable, Liouville superintegrable, nonholonomically integrable, and 'integrable-or-not' subclasses. For the latter we were not able to find a Hamiltonian structure or more than $n-3$ integrals. Still the entire class was shown to be solvable, using the solutions to contracted 2- and 3-component systems given in terms of the Lambert W function.
\item the construction of LV tree-systems in \cite{Trees1,Trees2}. These
are homogeneous superintegrable $(3n-2)$-parameter classes of $n$-component LV systems. They admit $n-1$ functionally independent integrals that involve $n-1$ linear DPs in two variables (2-DPs), which correspond to the edges of a tree. We note that these systems are not Hamiltonian. A very special case (1-parameter) is the generalised Kovaleskaya system \cite{PS}, with $\b={\bf 0}$ and 
\[
A_{i,j}=\begin{cases} \alpha & i=j \\ 1 & i\neq j. \end{cases}
\]
For this system, $n-1$ functionally independent integrals are given by
\[
\left(\frac{x_i - x_{j}}{x_ix_{j}}\right)^{n + \alpha - 1}\prod_{k=1}^n x_k, \quad j=i+1,i=1,2,\ldots,n-1.
\]
Its graph (in the sense of 2-DPs \cite{MP}) is the complete graph on $n$ vertices.

\item the classification of classes of $n$-component LV systems with linear DPs in any number of variables, for $n<6$, up to linear transformations \cite{Hyper}. These systems are associated to (LV admissible) hypergraphs. One new superintegrable $3n-2=13$ parameter family of $n=5$ component LV systems was found, not equivalent to a tree system.
\end{itemize}

In this paper, we assume that $\A$ is an anti-symmetric constant $n \times n$ matrix, starting with $n=2m$. In the homogeneous case, this implies that the LV system is a Hamiltonian system, with Hamiltonian
\begin{equation} \label{H}
H=x_1+x_2+\cdots+x_n.
\end{equation}
and Poisson bracket \eqref{PBA}. In other words, the equation
\begin{equation} \label{HF}
\dot{x}_i=\{x_i,H\},\quad i=1,\ldots,n,
\end{equation}
equals \eqref{LV} with $\b=0$. We follow a similar approach to the one taken in \cite{Trees1,Hyper,Trees2}, i.e., we require that certain linear functions are DPs of the system (each of them giving rise to an integral). Note that we do not need as many as before. Whereas for superintegrability $n-1$ were needed, here, besides the Hamiltonian, $n/2-1$ DPs suffice. On the other hand, we do require the corresponding integrals to Poisson commute. Thus we establish Liouville integrability which is defined as follows \cite{LI}.
\begin{definition}
An $n$-dimensional Hamiltonian system is Liouville integrable if there are $s$ Casimir functions and $\frac{n-s}{2}$ commuting and functionally independent integrals (including the Hamiltonian).
\end{definition}

We present a large number of homogeneous and inhomogeneous Liouville integrable multi-parameter families of $n$-dimensional Lotka-Volterra systems, for each $n\in\N$. In Section \ref{Seven}, we present
\[
\frac{m^{2}}{4}+\frac{m}{2}+\frac{1-\left(-1\right)^{m}}{8}
\]
homogeneous $n=2m$ dimensional $3m-2$ parameter families of LV systems, which have $m$ integrals in involution. In Section \ref{Sodd}, we present equally many odd $n=2m-1$ dimensional LV systems, for which we have $m-1$ integrals in involution and 1 Casimir. In Section \ref{Snonh}, we show that each even-dimensional system has a subsystem to which one can add a inhomogeneous term of the form
\[
\b=(r,\ldots,r,s).
\]
In odd dimensions, this is not the case. In Section \ref{explex}, we provide explicit details for the families of Liouville integrable LV systems we have found in dimensions 3,4,5,6, as well as for some Liouville integrable LV systems of different type in dimension $n=10$. We give concluding remarks in Section \ref{ConclRemks}.

\section{Homogeneous Liouville integrable LV systems} \label{Shom}
In this section, we consider homogeneous $n$-dimensional Hamiltonian systems \eqref{HF}, with anti-symmetric matrix $\A$ and Hamiltonian \eqref{H}.
Integrals for \eqref{HF} are functions $F$ for which $\dot{F}=\{F,H\}=0$. A {\bf Darboux Polynomial} (DP) is a polynomial $P$ for which
\begin{equation} \label{CP}
\dot{P}=\{P,H\}=CP,
\end{equation}
in which case $C$ is called the cofactor of $P$, also denoted $C[P]$. Clearly, the Hamiltonian itself is an integral, i.e., a DP with cofactor 0. More generally, a {\em function} $P$ which satisfies \eqref{CP} is called a {\bf second integral} \cite[Definition 2.14]{Gor}. A product of powers of DPs,
\begin{equation}
\Pi=\prod_i P_i^{\alpha_i}
\end{equation}
is a second integral with cofactor
\[
C[\Pi]=\sum_i\alpha_iC[P_i].
\]

We will require linear combinations of the form
\begin{equation} \label{DP}
P_I=\alpha_1 x_{i_1} + \alpha_2 x_{i_2} + \cdots + \alpha_k x_{i_k},
\end{equation}
where
\[
I=\{i_1,\ldots,i_k\}\subset\mathbb{N}_n=\{x_1,\ldots,x_n\},
\]
to be DPs. This imposes conditions on the matrix $\A$, as given in \cite{Trees1} for $k=2$, or in \cite{Hyper} for $k>2$. Due to the antisymmetry of $\A$, the coefficients $\alpha_i$ in $P_I$ 
are all equal; we take $\alpha_i=1$ for all $i$. If the matrix $\A$ has full rank, then each DP gives rise to a integral. Indeed, if $\B$ is the $1\times n$ vector such that
\[
C[P_I]=\sum_{i=1}^n B_ix_i,
\]
and $\Z=-\B.\A^{-1}$, then
\begin{equation} \label{integral}
P_I\prod_{i=1}^n x_i^{Z_{i}}
\end{equation}
is an integral, cf. \cite[Section 2]{Trees1}. Further conditions on the matrix $\A$ are imposed by requiring the integrals to Poisson commute. A homogeneous $n$-dimensional Hamiltonian LV system with $k$ integrals of the form \eqref{integral} which are pairwise commuting will be represented by a hypergraph of order $n$ and size $k$ (i.e. a subset, with $k$ elements, of the powerset of $n$ symbols).

\bigskip
For example, the Hamiltonian LV system with matrix
\begin{equation} \label{fa}
\A=\begin{pmatrix}
0 & a_{1} & b_{1} & b_{1} & b_{1} & c_{1} 
\\
 -a_{1} & 0 & b_{1} & b_{1} & b_{1} & c_{1} 
\\
 -b_{1} & -b_{1} & 0 & a_{2} & a_{3} & a_{5} 
\\
 -b_{1} & -b_{1} & -a_{2} & 0 & a_{4} & a_{5} 
\\
 -b_{1} & -b_{1} & -a_{3} & -a_{4} & 0 & a_{5} 
\\
 -c_{1} & -c_{1} & -a_{5} & -a_{5} & -a_{5} & 0 
\end{pmatrix}
\end{equation}
has a 2-DP,
$
x_1+x_2$,
and a 3-DP,
$
x_3+x_4+x_5$.
The corresponding integrals are
\[
\left(x_{1}+x_{2}\right)
\left( 
\frac{x_{4}^{a_{3}}}{x_{3}^{a_{4}}x_{5}^{a_{2}}}
\right)^{\frac{c_{1}}{a_{5} \left(a_{2}-a_{3}+a_{4}\right)}} x_{6}^{\frac{b_{1}}{a_{5}}},\quad 
\left(x_{3}+x_{4}+x_{5}\right) \left( 
\frac{x_{4}^{a_{3}}}{x_{3}^{a_{4}}x_{5}^{a_{2}}}
\right)^{\frac{1}{a_{2}-a_{3}+a_{4}}},
\]
and they Poisson commute. This is represented by the hypergraph in Figure \ref{Fe}.

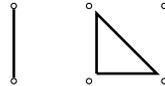
\begin{figure}[h!]
\centering
\begin{tikzpicture}[scale=1]
\tikzstyle{nod}= [circle, inner sep=0pt, fill=white, minimum size=2pt, draw]		
\node[nod] (a) at (0,0) {};
\node[nod] (b) at (0,1) {};
\node[nod] (c) at (1,0) {};
\node[nod] (d) at (1,1) {};
\node[nod] (e) at (2,0) {};
\node[nod] (f) at (2,1) {};
\coordinate (cs) at (1.1,0.1) {};
\coordinate (ds) at (1.1,0.9) {};
\coordinate (es) at (1.9,0.1) {};
\draw[line width=1pt] (a) -- (b);
\draw[line width=1pt] (cs) -- (ds) -- (es) -- (cs);
\end{tikzpicture}
\caption{\label{Fe} The hypergraph on $6$ vertices of size 2, associated to the LV system with matrix \eqref{fa}.}
\end{figure}

We note that this representation differs from the one in \cite{Hyper}, in more than one way. First of all, the matrices in \cite{Hyper} are not, in general, antisymmetric. In the current paper, the matrix $\A$ is assumed to be anti-symmetric which implies that, in all cases, the function $H$ is a DP (and an integral), but we omit the corresponding hyperedge (involving all vertices). Secondly, as stated above, here we require that the specified integrals, obtained from \eqref{integral}, Poisson commute. 

\begin{figure}[H]
\centering
\begin{tikzpicture}[scale=1]
\tikzstyle{nod}= [circle, inner sep=0pt, fill=white, minimum size=2pt, draw]		
\node[nod] (a) at (0,0) {};
\node[nod] (b) at (0,1) {};
\node[nod] (c) at (1,0) {};
\node[nod] (d) at (1,1) {};
\node[nod] (e) at (3,0) {};
\node[nod] (f) at (3,1) {};
\node[nod] (g) at (4,0) {};
\node[nod] (h) at (4,1) {};
\node at (2,1/2) {$\ldots$};
\draw[line width=1pt] (a) -- (b);
\draw[line width=1pt] (c) -- (d);
\draw[line width=1pt] (e) -- (f);
\end{tikzpicture}
\\[8mm]
\begin{tikzpicture}[scale=1]
\tikzstyle{nod}= [circle, inner sep=0pt, fill=white, minimum size=2pt, draw]		
\node[nod] (a) at (0,0) {};
\node[nod] (b) at (0,1) {};
\node[nod] (c) at (1,0) {};
\node[nod] (d) at (1,1) {};
\node[nod] (e) at (3,0) {};
\node[nod] (f) at (3,1) {};
\node[nod] (g) at (4,0) {};
\node[nod] (h) at (4,1) {};
\node[nod] (i) at (5,0) {};
\node[nod] (j) at (5,1) {};
\coordinate (js) at (4.9,0.9) {};
\coordinate (hs) at (4.1,0.9) {};
\coordinate (is) at (4.9,0.1) {};
\node at (2,1/2) {$\ldots$};
\draw[line width=1pt] (a) -- (b);
\draw[line width=1pt] (c) -- (d);
\draw[line width=1pt] (e) -- (f);
\draw[line width=1pt] (js) -- (hs) -- (is) -- (js);
\end{tikzpicture}
\\[8mm]
\begin{tikzpicture}[scale=1]
\tikzstyle{nod}= [circle, inner sep=0pt, fill=white, minimum size=2pt, draw]		
\node[nod] (a) at (0,0) {};
\node[nod] (b) at (0,1) {};
\node[nod] (c) at (1,0) {};
\node[nod] (d) at (1,1) {};
\node[nod] (e) at (3,0) {};
\node[nod] (f) at (3,1) {};
\node[nod] (g) at (4,0) {};
\node[nod] (h) at (4,1) {};
\node[nod] (i) at (5,0) {};
\node[nod] (j) at (5,1) {};
\node[nod] (k) at (6,0) {};
\node[nod] (l) at (6,1) {};
\coordinate (js) at (4.9,0.9) {};
\coordinate (hs) at (4.1,0.9) {};
\coordinate (is) at (4.9,0.1) {};
\node at (2,1/2) {$\ldots$};
\draw[line width=1pt] (a) -- (b);
\draw[line width=1pt] (c) -- (d);
\draw[line width=1pt] (e) -- (f);
\draw[line width=1pt] (js) -- (hs) -- (is) -- (js);
\coordinate (ht) at (3.75,1.1) {};
\coordinate (it) at (4.95,-0.1) {};
\coordinate (kt) at (6.1,-0.1) {};
\coordinate (lt) at (6.1,1.1) {};
\draw[line width=1pt] (ht) -- (lt) -- (kt) -- (it) -- (ht);
\end{tikzpicture}
\\[8mm]
$\vdots$
\\[8mm]
\begin{tikzpicture}[scale=1]
\tikzstyle{nod}= [circle, inner sep=0pt, fill=white, minimum size=2pt, draw]		
\node[nod] (a) at (0,0) {};
\node[nod] (b) at (0,1) {};
\node[nod] (c) at (1,0) {};
\node[nod] (d) at (1,1) {};
\node[nod] (e) at (3,0) {};
\node[nod] (f) at (3,1) {};
\node[nod] (g) at (4,0) {};
\node[nod] (h) at (4,1) {};
\node[nod] (i) at (5,0) {};
\node[nod] (j) at (5,1) {};
\node[nod] (k) at (6,0) {};
\node[nod] (l) at (6,1) {};
\node[nod] (m) at (8,0) {};
\node[nod] (n) at (8,1) {};
\coordinate (js) at (4.9,0.9) {};
\coordinate (hs) at (4.1,0.9) {};
\coordinate (is) at (4.9,0.1) {};
\node at (2,1/2) {$\ldots$};
\node at (7,1/2) {$\ldots$};
\draw[line width=1pt] (a) -- (b);
\draw[line width=1pt] (c) -- (d);
\draw[line width=1pt] (e) -- (f);
\draw[line width=1pt] (js) -- (hs) -- (is) -- (js);
\coordinate (ht) at (3.75,1.1) {};
\coordinate (it) at (4.95,-0.1) {};
\coordinate (kt) at (6.1,-0.1) {};
\coordinate (lt) at (6.1,1.1) {};
\draw[line width=1pt] (ht) -- (lt) -- (kt) -- (it) -- (ht);
\coordinate (hr) at (3.5,1.2) {};
\coordinate (ir) at (4.9,-0.2) {};
\coordinate (kr) at (8.1,-0.2) {};
\coordinate (lr) at (8.1,1.2) {};
\draw[line width=1pt] (hr) -- (lr) -- (kr) -- (ir) -- (hr);
\end{tikzpicture}
\\[8mm]
\begin{tikzpicture}[scale=1]
\tikzstyle{nod}= [circle, inner sep=0pt, fill=white, minimum size=2pt, draw]		
\node[nod] (a) at (-1,0) {};
\node[nod] (b) at (-1,1) {};
\node[nod] (c) at (0,0) {};
\node[nod] (d) at (0,1) {};
\node[nod] (e) at (2,0) {};
\node[nod] (f) at (2,1) {};
\node[nod] (ex) at (3,0) {};
\node[nod] (fx) at (3,1) {};
\node[nod] (g) at (4,0) {};
\node[nod] (h) at (4,1) {};
\node[nod] (i) at (5,0) {};
\node[nod] (j) at (5,1) {};
\node[nod] (k) at (6,0) {};
\node[nod] (l) at (6,1) {};
\node[nod] (m) at (8,0) {};
\node[nod] (n) at (8,1) {};
\coordinate (js) at (4.9,0.9) {};
\coordinate (hs) at (4.1,0.9) {};
\coordinate (is) at (4.9,0.1) {};
\node at (1,1/2) {$\ldots$};
\node at (7,1/2) {$\ldots$};
\draw[line width=1pt] (a) -- (b);
\draw[line width=1pt] (c) -- (d);
\draw[line width=1pt] (e) -- (f);
\draw[line width=1pt] (js) -- (hs) -- (is) -- (js);
\coordinate (exp) at (3.1,0.1) {};
\coordinate (fxp) at (3.1,.9) {};
\coordinate (gxp) at (3.9,0.1) {};
\draw[line width=1pt] (exp) -- (fxp) -- (gxp) -- (exp);
\coordinate (ht) at (3.75,1.1) {};
\coordinate (it) at (4.95,-0.1) {};
\coordinate (kt) at (6.1,-0.1) {};
\coordinate (lt) at (6.1,1.1) {};
\draw[line width=1pt] (ht) -- (lt) -- (kt) -- (it) -- (ht);
\coordinate (hr) at (3.5,1.2) {};
\coordinate (ir) at (4.9,-0.2) {};
\coordinate (kr) at (8.1,-0.2) {};
\coordinate (lr) at (8.1,1.2) {};
\draw[line width=1pt] (hr) -- (lr) -- (kr) -- (ir) -- (hr);
\end{tikzpicture}
\\[8mm]
$\vdots$
\\[8mm]
\begin{tikzpicture}[scale=1]
\tikzstyle{nod}= [circle, inner sep=0pt, fill=white, minimum size=2pt, draw]		
\node[nod] (a) at (0,0) {};
\node[nod] (b) at (0,1) {};
\node[nod] (c) at (1,0) {};
\node[nod] (d) at (1,1) {};
\node[nod] (e) at (3,0) {};
\node[nod] (f) at (3,1) {};
\node[nod] (a1) at (4,0) {};
\node[nod] (b1) at (4,1) {};
\node[nod] (c1) at (5,0) {};
\node[nod] (d1) at (5,1) {};
\node[nod] (e1) at (6,0) {};
\node[nod] (h1) at (7,1) {};
\coordinate (a2) at (4.1,0.1) {};
\coordinate (b2) at (4.1,0.9) {};
\coordinate (c2) at (4.9,0.1) {};
\draw[line width=1pt] (a2) -- (b2) -- (c2) -- (a2);
\coordinate (a3) at (3.9,-0.1) {};
\coordinate (b3) at (3.9,1.1) {};
\coordinate (d3) at (5.05,1.1) {};
\coordinate (e3) at (6.25,-0.1) {};
\draw[line width=1pt] (a3) -- (b3) -- (d3) -- (e3) -- (a3);
\coordinate (a4) at (3.8,-0.2) {};
\coordinate (b4) at (3.8,1.2) {};
\coordinate (d4) at (7,1.2) {};
\coordinate (e4) at (8.4,-0.2) {};
\draw[line width=1pt] (a4) -- (b4) -- (d4) -- (e4) -- (a4);
\node[nod] (g) at (8,0) {};
\node[nod] (h) at (8,1) {};
\node[nod] (i) at (9,0) {};
\node[nod] (j) at (9,1) {};
\node[nod] (k) at (10,0) {};
\node[nod] (l) at (10,1) {};
\node[nod] (m) at (12,0) {};
\node[nod] (n) at (12,1) {};
\coordinate (js) at (8.9,0.9) {};
\coordinate (hs) at (8.1,0.9) {};
\coordinate (is) at (8.9,0.1) {};
\node at (2,1/2) {$\ldots$};
\node at (11,1/2) {$\ldots$};
\node at (6.7,1/2) {$\ldots$};
\draw[line width=1pt] (a) -- (b);
\draw[line width=1pt] (c) -- (d);
\draw[line width=1pt] (e) -- (f);
\draw[line width=1pt] (js) -- (hs) -- (is) -- (js);
\coordinate (ht) at (7.75,1.1) {};
\coordinate (it) at (8.95,-0.1) {};
\coordinate (kt) at (10.1,-0.1) {};
\coordinate (lt) at (10.1,1.1) {};
\draw[line width=1pt] (ht) -- (lt) -- (kt) -- (it) -- (ht);
\coordinate (hr) at (7.5,1.2) {};
\coordinate (ir) at (8.9,-0.2) {};
\coordinate (kr) at (12.1,-0.2) {};
\coordinate (lr) at (12.1,1.2) {};
\draw[line width=1pt] (hr) -- (lr) -- (kr) -- (ir) -- (hr);
\end{tikzpicture}
\\[5mm]
\caption{\label{F0} A tower of hypergraphs. Each hypergraph has a number of edges of degree 2 and at most two sets of nested hyperedges of odd degrees $3,5,\ldots$.}
\end{figure}
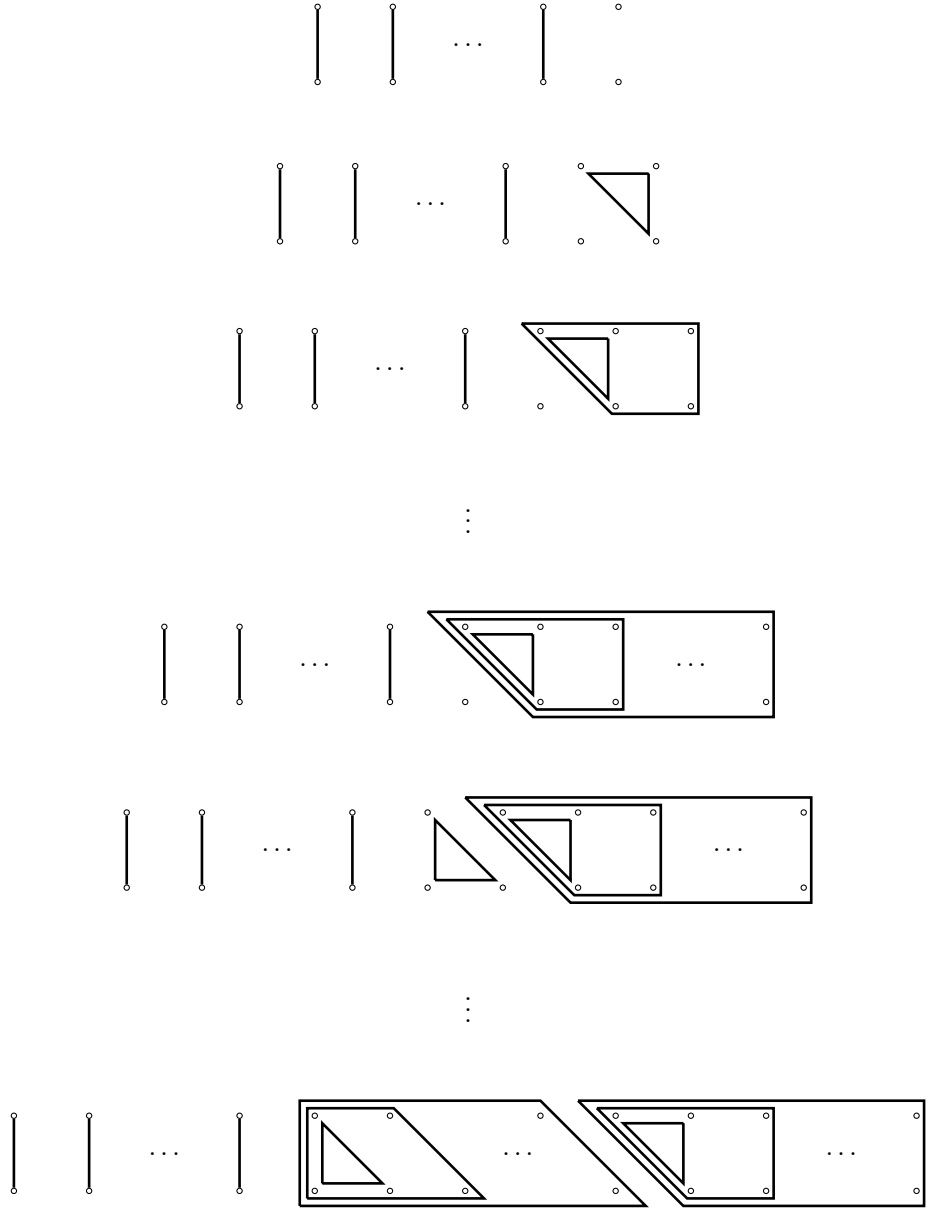

In Section \ref{Seven}, we consider even-dimensional LV systems. In each case, the matrix $\A$ will have full rank, so integrals can be calculated using the formula \eqref{integral}. We do however present different integrals, that are (or can be made) homogeneous of weight zero. This allows us to extend the systems to the inhomogeneous case, which we do in Section \ref{Snonh}. In Section \ref{Sodd} we show that each even-dimensional system can be reduced to an odd-dimensional system, whilst retaining its Liouville integrability. In the $n=2m$ dimensional case, we have $m$ functionally independent integrals (including the Hamiltonian) in involution. In the $n=2m-1$ dimensional case, there are $m-1$ functionally independent integrals (including the Hamiltonian) in involution, as well as a Casimir. 
    
\subsection{Even-dimensional homogeneous Liouville integrable LV systems} \label{Seven}
We will show there exists a $2m$-dimensional homogeneous Liouville integrable LV system with $3m-2$ parameters associated to each hypergraph of order $2m$  displayed in Fig. \ref{F0}.
These hypergraphs will be  labeled by the number of degree 2 edges and the two (ordered) numbers of nested edges. For example, the hypergraph in Fig. \ref{Fe} is labeled by $[1,1,0]$, and the hypergraphs displayed explicitly in Fig. \ref{F0} are labeled by $[m-1,0,0]$, $[m-2,1,0]$, $[m-3,2,0]$, $[m-k-1,k,0]$, $[m-k-2,k,1]$, and $[m-k-l-1,k,l]$ respectively (with $k\geq l\geq 0$ and $m>k+l$).


\subsubsection{The even-dimensional homogeneous type $[j,0,0]$ LV system}
The type $[j,0,0]$ LV system corresponds to the graph in Fig. \ref{F1}.

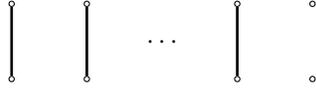
\begin{figure}[H]
\centering
\begin{tikzpicture}[scale=1]
\tikzstyle{nod}= [circle, inner sep=0pt, fill=white, minimum size=2pt, draw]		
\node[nod] (a) at (0,0) {};
\node[nod] (b) at (0,1) {};
\node[nod] (c) at (1,0) {};
\node[nod] (d) at (1,1) {};
\node[nod] (e) at (3,0) {};
\node[nod] (f) at (3,1) {};
\node[nod] (g) at (4,0) {};
\node[nod] (h) at (4,1) {};
\node at (2,1/2) {$\ldots$};
\draw[line width=1pt] (a) -- (b);
\draw[line width=1pt] (c) -- (d);
\draw[line width=1pt] (e) -- (f);
\end{tikzpicture}
\caption{\label{F1} A forest on $2(j+1)$ vertices.}
\end{figure}

We denote by $\1^{i,k}$ the $i\times k$ matrix with entries $1$. Using $2\times 2$ matrices
\begin{equation} \label{JKLehj00}
\J_i=\begin{pmatrix}
0 & a_i \\
-a_i & 0
\end{pmatrix},\
\K^i_k=\K^i_k(h)=\frac{b_ic_k-b_kc_i}{a_{h}}\1^{2,2},
\end{equation}
we define the $2j\times2j$ matrix
\begin{equation} \label{Mjh}
\M^j_h=\begin{pmatrix}
\J_1 & \K^1_2 & \K^1_3 & \cdots & \K^1_j\\
-\K^1_2 & \J_2 & \K^2_3 & \cdots & \K^2_j\\
-\K^1_3 & - \K^2_3 & \J_3 & \cdots & \K^3_j\\
\vdots & \vdots & \vdots & \ddots & \vdots\\
-\K^1_j & -\K^2_j & -\K^3_j & \cdots & \J_{j}
\end{pmatrix},
\end{equation}
and we define the $2j\times2$ matrix
\begin{equation} \label{Lj}
\L_j=\begin{pmatrix}
b_1\1^{2,1} & c_1\1^{2,1} \\
b_2\1^{2,1} & c_2\1^{2,1} \\
\vdots & \vdots \\
b_j\1^{2,1}  & c_j\1^{2,1}
\end{pmatrix}.
\end{equation}
The transpose of a matrix will be indicated by a superscript $^T$.

\begin{proposition} \label{P1}
The homogeneous $2(j+1)$-dimensional Lotka-Volterra system \eqref{LV} with matrix
\begin{equation} \label{A1}
\A=\begin{pmatrix}
\M^j_{j+1} & \L_j \\
-\L_j^T & \J_{j+1}
\end{pmatrix}
\end{equation}
is Liouville integrable, with $j$ pairwise Poisson commuting integrals,
\begin{equation} \label{fki}
F_i=(x_{2i-1}+x_{2i})^{a_{j+1}}x_{n-1}^{-c_{i}}x_{n}^{b_{i}},\quad i=1,\ldots, j.
\end{equation}
\end{proposition}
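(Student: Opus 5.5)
The plan is to work throughout with the log-canonical bracket \eqref{PBA}, $\{x_i,x_k\}=A_{i,k}x_ix_k$, and with $H$ given by \eqref{H}. Write $n=2j+2$ and $P_i=x_{2i-1}+x_{2i}$. The vertices $2i-1,2i$ form the $i$-th pair, and the last two vertices $n-1,n$ are the distinguished pair. There are three things to prove: each $F_i$ is an integral, the $F_i$ commute pairwise, and $H,F_1,\dots,F_j$ are functionally independent with $\A$ of full rank. Together these give $m=j+1$ commuting integrals on a $2m$-dimensional space with no Casimirs, as the Definition requires.

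\emph{Step 1: cofactors.} For $k$ outside pair $i$, the two rows $2i-1$ and $2i$ of $\A$ agree in column $k$. The entries are $\pm(b_ic_l-b_lc_i)/a_{j+1}$ for $k$ in another pair $l$, and $b_i$ or $c_i$ for $k=n-1$ or $k=n$. Inside the pair the entries are $\pm a_i$. Then
\[
\{P_i,H\}=a_ix_{2i-1}x_{2i}-a_ix_{2i}x_{2i-1}+P_i\sum_{k\notin\{2i-1,2i\}}A_{2i-1,k}x_k ,
\]
so $C[P_i]=\sum_{l\neq i}\frac{b_ic_l-b_lc_i}{a_{j+1}}(x_{2l-1}+x_{2l})+b_ix_{n-1}+c_ix_n$. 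For the monomials, $C[x_{n-1}]=-\sum_l b_lP_l+a_{j+1}x_n$ and $C[x_n]=-\sum_l c_lP_l-a_{j+1}x_{n-1}$. Hence $C[F_i]=a_{j+1}C[P_i]-c_iC[x_{n-1}]+b_iC[x_n]$, and we check that it vanishes:
\begin{itemize}
\item the $x_{n-1}$ and $x_n$ terms are $a_{j+1}b_ix_{n-1}+a_{j+1}c_ix_n-c_ia_{j+1}x_n-b_ia_{j+1}x_{n-1}=0$;
\item the $P_l$ terms, $l\neq i$, are $(b_ic_l-b_lc_i)+c_ib_l-b_ic_l=0$;
\item the $P_i$ terms are $c_ib_i-b_ic_i=0$.
\end{itemize}
So each $F_i$ is a first integral. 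This is exactly why $\K^i_k$ carries the factor $1/a_{j+1}$.

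\emph{Step 2: involutivity.} Using the Leibniz rule and $\{\log x_a,\log x_b\}=A_{a,b}$, expand $\{F_i,F_k\}/(F_iF_k)$ for $i\neq k$. The pieces are:
\begin{itemize}
\item $a_{j+1}^2\{P_i,P_k\}/(P_iP_k)$, which equals $a_{j+1}(b_ic_k-b_kc_i)$, since all four cross entries equal $K$;
\item the cross terms between $P_i$ and the monomial $x_{n-1}^{-c_k}x_n^{b_k}$, equal to $a_{j+1}(-c_kb_i+b_kc_i)$, together with the symmetric counterpart $-a_{j+1}(-c_ib_k+b_ic_k)$;
\item the monomial–monomial term, $(-c_i)b_k\,a_{j+1}+b_i(-c_k)(-a_{j+1})=a_{j+1}(b_ic_k-c_ib_k)$.
\end{itemize}
Summing, we get $a_{j+1}\big[(b_ic_k-b_kc_i)-(b_ic_k-b_kc_i)-(b_ic_k-b_kc_i)+(b_ic_k-b_kc_i)\big]=0$. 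The sign bookkeeping in this sum is the main obstacle. In practice I would first carry it out for $j=2$ against the $6\times6$ example to fix the signs, then generalise, since only the pairs $i$ and $k$ and the last pair enter the computation. Commutation with $H$ is Step 1.

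\emph{Step 3: independence and rank.} Functional independence holds because $F_i$ is the only function involving $P_i$ nontrivially. At a generic point, the gradients of $F_1,\dots,F_j$ restricted to the directions $\partial_{x_{2i-1}}$ are triangular (in fact diagonal) with nonzero entries. The gradient of $H$ contributes an independent direction, for instance $\partial_{x_{2i-1}}-\partial_{x_{2i}}$ combinations, which annihilate every $F_i$ but not $H$.

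For full rank of $\A$, I would compute $\det \A$ at a specialisation, say $b=c=0$, where $\A$ is block diagonal with determinant $\prod_{i=1}^{j+1} a_i^2\neq 0$. It follows that $\det\A$ is a nonzero polynomial, so $\A$ is invertible for generic parameters. Then there are no Casimirs, and $j+1=n/2$ commuting integrals give Liouville integrability.
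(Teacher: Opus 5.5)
Your Steps 1 and 2 are the paper's proof, which consists only of these two computations. The paper likewise checks that $a_{j+1}C[x_{2i-1}+x_{2i}]-c_iC[x_{n-1}]+b_iC[x_n]$ cancels coefficient by coefficient. It then expands $\{F_i,F_k\}$ by the Leibniz rule from the brackets among $P_i,P_k,x_{n-1},x_n$. Your four contributions $a_{j+1}(D-D-D+D)$, with $D=b_ic_k-b_kc_i$, carry the right signs, so the computation is complete as written and no $j=2$ sanity check is needed. The paper never argues functional independence or full rank (it only asserts that $\A$ has full rank), so your Step 3 goes beyond it. However, Step 3 contains a false claim.

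The vectors $\partial_{x_{2i-1}}-\partial_{x_{2i}}$ do annihilate every $F_l$, but they also annihilate $H$, since $(\partial_{x_{2i-1}}-\partial_{x_{2i}})H=1-1=0$. They are tangent to the common level sets of all of $H,F_1,\dots,F_j$, so no combination of them separates $dH$ from the $dF_l$. A correct witness must involve $x_{n-1}$ or $x_n$, for example
\[
v=\partial_{x_{n-1}}+\frac{1}{a_{j+1}x_{n-1}}\sum_{i=1}^{j}c_iP_i\,\partial_{x_{2i-1}}.
\]
It satisfies $v(F_i)=0$ for every $i$, while $v(H)=1+\frac{1}{a_{j+1}x_{n-1}}\sum_i c_iP_i$, which is not identically zero.

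For the rank, your specialisation $b=c=0$ only gives invertibility for generic parameters, which leaves possible exceptional parameter values. You can get the exact statement instead. Rows $2i-1$ and $2i$ of $\A$ agree outside the $i$-th pair. Hence the congruence $\A\mapsto T^T\A T$ with $Te_{2i}=e_{2i}-e_{2i-1}$ (and $\det T=1$) leaves row $2i$ with the single entry $-a_i$. Induction over the pairs then gives $\mathrm{Pf}(\A)=\pm\prod_{i=1}^{j+1}a_i$. So $\det\A=\prod_{i=1}^{j+1}a_i^2$ for all values of the $b_i,c_i$.
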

\begin{proof}
The integrals can either by obtained from the Darboux polynomials $x_{2i-1}+x_{2i}$, $i=1,\ldots, j$, using Eq. \eqref{integral}, or by contraction to ($n-1$)-component systems and lifting their Casimirs, see \cite[Theorem 2 and Appendix B]{Poisson}. We first show their cofactors vanish. Using the notation
\begin{equation} \label{aik}
a_{i,k}=\frac{b_ic_k-b_kc_i}{a_{j+1}},
\end{equation}
we have
\begin{align}
\dot{x}_{2i-1}+\dot{x}_{2i}&=x_{2i-1}\Big(-\Big(\sum_{k=1}^{i-1}a_{k,i}(x_{2k-1}+x_{2k})\Big)
+a_ix_{2i}+\Big(\sum_{k=i+1}^{j}a_{i,k}(x_{2k-1}+x_{2k})\Big)\notag\\
&\ \ \ +b_ix_{n-1}+c_ix_{n}\Big) +x_{2i}\Big(-\Big(\sum_{k=1}^{i-1}a_{k,i}(x_{2k-1}+x_{2k})\Big)
-a_ix_{2i-1} \notag\\
&\ \ \ +\Big(\sum_{k=i+1}^{j}a_{i,k}(x_{2k-1}+x_{2k})\Big)+b_ix_{n-1}+c_ix_{n}\Big)\notag\\
&=\Big(x_{2i-1}+x_{2i}\Big)\Big(-\Big(\sum_{k=1}^{i-1}a_{k,i}(x_{2k-1}
+x_{2k})\Big)+\Big(\sum_{k=i+1}^{j}a_{i,k}(x_{2k-1}+x_{2k})\Big)\notag\\
&\ \ \ +b_ix_{n-1}+c_ix_{n}\Big)\label{cf1ehj00}\\
\dot{x}_{n-1}&=x_{n-1}\Big(-\Big(\sum_{k=1}^{j}b_{k}(x_{2k-1}+x_{2k})\Big)+a_{j+1}x_{n} \Big)\label{cf2ehj00}\\
\dot{x}_{n}&=x_{n}\Big(-\Big(\sum_{k=1}^{j}c_{k}(x_{2k-1}+x_{2k})\Big)-a_{j+1}x_{n-1} \Big),\label{cf3ehj00}
\end{align}
and hence the cofactor of $F_i$ is the linear combination of cofactors:
\begin{align*}
a_{j+1}&C[x_{2i-1}+x_{2i}]-c_iC[x_{n-1}]+b_iC[x_{n}]\\
&=-\left(\sum_{k=1}^{i-1}\left(a_{j+1} a_{k,i}-c_ib_k+b_ic_k\right)(x_{2k-1}+x_{2k})\right)-\left(-c_ib_i+b_ic_k\right)\left(x_{2i-1}+x_{2i}\right) \\
&\ \ \ +\left(\sum_{k=i+1}^{j}\left(a_{j+1} a_{i,k}+c_ib_k-b_ic_k\right)(x_{2k-1}+x_{2k})\right)+(a_{j+1}b_i+b_i(-a_{j+1}))x_{n-1}\\
&\ \ \ +(a_{j+1}c_i-c_ia_{j+1})x_{n}\\
&=0.
\end{align*}
Next, using the brackets
\begin{equation} \label{tbs}
\begin{split}
\{x_{2i-1}+x_{2i},x_{n-1}\}&=b_i(x_{2i-1}+x_{2i})x_{n-1},\\
\{x_{2i-1}+x_{2i},x_{n}\}&=c_i(x_{2i-1}+x_{2i})x_{n},\\
\{x_{2i-1}+x_{2i},x_{2k-1}+x_{2k}\}&=a_{i,k}(x_{2i-1}+x_{2i})(x_{2k-1}+x_{2k}),\\
\{x_{n-1},x_{n}\}&=a_{j+1}x_{n-1}x_{n},
\end{split}
\end{equation}
we calculate
\begin{align}
&\{F_i,F_k\}=\{(x_{2i-1}+x_{2i})^{a_{j+1}}x_{n-1}^{-c_{i}}x_{n}^{b_{i}},(x_{2k-1}+x_{2k})^{a_{j+1}}x_{n-1}^{-c_{k}}x_{n}^{b_{k}}\} \notag\\
&=a_{j+1}^2(x_{2i-1}+x_{2i})^{a_{j+1}-1}x_{n-1}^{-c_{i}}x_{n}^{b_{i}}(x_{2k-1}+x_{2k})^{a_{j+1}-1}x_{n-1}^{-c_{k}}x_{n}^{b_{k}}\{x_{2i-1}+x_{2i},x_{2k-1}+x_{2k}\}\notag\\
&-\ \ \ a_{j+1}c_{k}(x_{2i-1}+x_{2i})^{a_{j+1}-1}x_{n-1}^{-c_{i}}x_{n}^{b_{i}}(x_{2k-1}+x_{2k})^{a_{j+1}}x_{n-1}^{-c_{k}-1}x_{n}^{b_{k}}\{x_{2i-1}+x_{2i}, x_{n-1}\}\notag\\
&+\ \ \ a_{j+1}b_{k}(x_{2i-1}+x_{2i})^{a_{j+1}}x_{n-1}^{-c_{i}}x_{n}^{b_{i}}(x_{2k-1}+x_{2k})^{a_{j+1}}x_{n-1}^{-c_{k}}x_{n}^{b_{k}-1}\{x_{2i-1}+x_{2i},x_{n}\}\notag\\
&-\ \ \ c_{i}a_{j+1}(x_{2i-1}+x_{2i})^{a_{j+1}}x_{n-1}^{-c_{i}-1}x_{n}^{b_{i}}(x_{2k-1}+x_{2k})^{a_{j+1}-1}x_{n-1}^{-c_{k}}x_{n}^{b_{k}}\{x_{n-1},x_{2k-1}+x_{2k}\}\notag\\
&-\ \ \ c_{i}b_{k}(x_{2i-1}+x_{2i})^{a_{j+1}}x_{n-1}^{-c_{i}-1}x_{n}^{b_{i}}(x_{2k-1}+x_{2k})^{a_{j+1}}x_{n-1}^{-c_{k}}x_{n}^{b_{k}-1}\{x_{n-1},x_{n}\}\notag\\
&+\ \ \ b_{i}a_{j+1}(x_{2i-1}+x_{2i})^{a_{j+1}}x_{n-1}^{-c_{i}}x_{n}^{b_{i}-1}(x_{2k-1}+x_{2k})^{a_{j+1}-1}x_{n-1}^{-c_{k}}x_{n}^{b_{k}}\{x_{n},x_{2k-1}+x_{2k}\}\notag\\
&-\ \ \ b_{i}c_{k}(x_{2i-1}+x_{2i})^{a_{j+1}}x_{n-1}^{-c_{i}}x_{n}^{b_{i}-1}(x_{2k-1}+x_{2k})^{a_{j+1}}x_{n-1}^{-c_{k}-1}x_{n}^{b_{k}}\{x_{n},x_{n-1}\}\notag\\
&=F_iF_k\left(a_{j+1}^2a_{i,k}-a_{j+1}c_{k}b_i+a_{j+1}b_{k}c_i+c_{i}a_{j+1}b_k-c_{i}b_{k}a_{j+1}-b_{i}a_{j+1}c_k+b_{i}c_{k}a_{j+1}\right)\notag\\
&=0.\label{fifk}
\end{align}
\end{proof}
There are $3j+1$ parameters in the matrix \eqref{A1}. 

\subsubsection{The even-dimensional homogeneous type $[0,k,0]$ LV system} \label{S0k0}
The type $[0,k,0]$ LV system corresponds to the hypergraph in Fig. \ref{F2}.

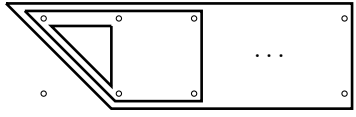
\begin{figure}[H]
\centering
\begin{tikzpicture}[scale=1]
\tikzstyle{nod}= [circle, inner sep=0pt, fill=white, minimum size=2pt, draw]		

\node[nod] (g) at (4,0) {};
\node[nod] (h) at (4,1) {};
\node[nod] (i) at (5,0) {};
\node[nod] (j) at (5,1) {};
\node[nod] (k) at (6,0) {};
\node[nod] (l) at (6,1) {};
\node[nod] (m) at (8,0) {};
\node[nod] (n) at (8,1) {};
\coordinate (js) at (4.9,0.9) {};
\coordinate (hs) at (4.1,0.9) {};
\coordinate (is) at (4.9,0.1) {};
\node at (7,1/2) {$\ldots$};
\draw[line width=1pt] (js) -- (hs) -- (is) -- (js);
\coordinate (ht) at (3.75,1.1) {};
\coordinate (it) at (4.95,-0.1) {};
\coordinate (kt) at (6.1,-0.1) {};
\coordinate (lt) at (6.1,1.1) {};
\draw[line width=1pt] (ht) -- (lt) -- (kt) -- (it) -- (ht);
\coordinate (hr) at (3.5,1.2) {};
\coordinate (ir) at (4.9,-0.2) {};
\coordinate (kr) at (8.1,-0.2) {};
\coordinate (lr) at (8.1,1.2) {};
\draw[line width=1pt] (hr) -- (lr) -- (kr) -- (ir) -- (hr);
\end{tikzpicture}
\caption{\label{F2} A hypergraph of order $2(k+1)$ and size $k$.}
\end{figure}

We define an antisymmetric $n\times n$ matrix $\N^n$ by, with $k<l$,
\begin{equation} \label{DN}
N^n_{k,l}=
\begin{cases}
a_{3j-1} & l=2j \\
a_{3j-2} & k\leq 2j, l=2j+1 \\
a_{3j} & k= 2j, l=2j+1,    
\end{cases}
\end{equation}
and $N^n_{l,k}=-N^n_{k,l}$. We denote $P_{a,b}=x_a+x_{a+1}+\cdots x_b$, following notation introduced in \cite[Section 2]{Poisson}.

\begin{proposition} \label{P2}
The homogeneous $2(k+1)$ dimensional LV system \eqref{LV} with matrix $\A=\N^{2(k+1)}$ is Liouville integrable, with $k$ pairwise commuting integrals
\begin{equation} \label{Kj}
G_j=\frac{x_{2 j}^{a_{3 j -1}}P_{1,2 j + 1}^{a_{3 j -2}-a_{3 j -1}+a_{3 j}}}{P_{1,2 j -1}^{a_{3 j}} x_{2 j +1}^{a_{3 j -2}}},\quad j=1,2,\ldots k.
\end{equation}
\end{proposition}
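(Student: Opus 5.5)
The plan is to mirror the proof of Proposition~\ref{P1}: first show that each $G_j$ is an integral by computing cofactors, then prove $\{G_i,G_j\}=0$ for $i<j$. Throughout I use one structural feature of $\N^n$ visible in \eqref{DN}: for every column $l$, the entries $N^n_{k,l}$ with $k<l$ take the same value for all $k<l$, with one exception. That exception is $l=2j+1$, $k=2j$, where the value is $a_{3j}$ instead of $a_{3j-2}$.

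\emph{Step 1 (Darboux polynomials and cofactors).} Take $s\in\{2j-1,2j+1\}$, i.e.\ $s$ odd. For every $l>s$ the entries $N_{k,l}$, $k\le s$, are all equal; call this common value $N_{1,l}$. Then the antisymmetric internal terms cancel in $\dot P_{1,s}=\sum_{k\le s}x_k(\sum_l N_{k,l}x_l)$, and we get
\[
C[P_{1,s}]=\sum_{l>s}N_{1,l}x_l .
\]
In addition,
\[
C[x_{2j}]=-a_{3j-1}P_{1,2j-1}+a_{3j}x_{2j+1}+\sum_{l>2j+1}N_{1,l}x_l,
\]
\[
C[x_{2j+1}]=-a_{3j-2}P_{1,2j-1}-a_{3j}x_{2j}+\sum_{l>2j+1}N_{1,l}x_l .
\]
$C[G_j]$ is the corresponding linear combination with exponents $a_{3j-1}$, $a_{3j-2}-a_{3j-1}+a_{3j}$, $-a_{3j}$ and $-a_{3j-2}$.
\begin{itemize}
\item The coefficients of $x_l$ for $l>2j+1$ vanish because the exponents sum to zero. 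This also shows that $G_j$ is homogeneous of degree $0$.
\item The coefficients of $x_{2j}$, of $x_{2j+1}$ and of $P_{1,2j-1}$ are checked directly, as in the proof of Proposition~\ref{P1}.
\end{itemize}
Full rank of $\N^{2(k+1)}$, together with the count of $k+1$ independent integrals including $H$, then gives Liouville integrability once involutivity is established.

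\emph{Step 2 (involutivity, main obstacle).} The difficulty is that $\{P_{1,a},P_{1,b}\}$ is not proportional to $P_{1,a}P_{1,b}$. So, unlike \eqref{tbs}, the brackets do not close multiplicatively, and a term-by-term computation like \eqref{fifk} fails. Instead, fix $i<j$. Then $G_i$ depends only on $x_1,\dots,x_{2i+1}$, with $2i+1\le 2j-1$. For $l\le 2j-1$ I compute $\{G_j,x_l\}$ from the log-canonical bracket:
\begin{itemize}
\item $\{x_{2j},x_l\}=-a_{3j-1}x_lx_{2j}$ and $\{x_{2j+1},x_l\}=-a_{3j-2}x_lx_{2j+1}$;
\item $\{P_{1,2j+1},x_l\}=\{P_{1,2j-1},x_l\}-x_l(a_{3j-1}x_{2j}+a_{3j-2}x_{2j+1})$.
\end{itemize}
It follows that, for $l\le 2j-1$,
\[
\{G_j,x_l\}=G_j\big(\alpha\,x_l+\beta\,\{P_{1,2j-1},x_l\}\big),
\]
where $\alpha,\beta$ are functions independent of $l$, built from $x_{2j},x_{2j+1},P_{1,2j-1},P_{1,2j+1}$. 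Hence
\[
\{G_j,G_i\}=G_j\Big(\alpha\sum_{l}x_l\,\partial_lG_i+\beta\,\{P_{1,2j-1},G_i\}\Big).
\]

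\emph{Step 3 (closing the argument).} The first sum vanishes by Euler's theorem, because $G_i$ is homogeneous of degree $0$. For the second term, note that $P_{1,2j-1}$ is the Hamiltonian of the truncated LV system on $x_1,\dots,x_{2j-1}$ with matrix $\N^{2j-1}$. The cofactor computation of Step~1 used only columns up to the relevant index, so it applies verbatim to this truncation and shows that $G_i$ is an integral of it. Since $G_i$ depends only on $x_1,\dots,x_{2i+1}$ and $2i+1\le 2j-1$, the bracket $\{P_{1,2j-1},G_i\}$ is computed entirely within the truncated system, so it equals the truncated system's $\dot G_i=0$. Hence $\{G_i,G_j\}=0$.

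The delicate point is Step~2: verifying that the $P_{1,2j+1}$ contribution really splits into an $l$-independent multiple of $x_l$ plus a multiple of $\{P_{1,2j-1},x_l\}$. This requires the column-constancy of $\N$ for all $l\le 2j-1$.
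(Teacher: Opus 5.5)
Your involutivity argument is correct and takes a different route from the paper. Step 1, however, does not check out as written.

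\textbf{Step 1.} Your cofactors follow \eqref{DN} literally, i.e.\ $N_{k,2j}=a_{3j-1}$ and $N_{k,2j+1}=a_{3j-2}$ for $k<2j$. With them the ``direct check'' fails. The coefficient of $P_{1,2j-1}$ in $C[G_j]$ is
\[
a_{3j-1}(-a_{3j-1})-a_{3j-2}(-a_{3j-2})=a_{3j-2}^2-a_{3j-1}^2,
\]
and the coefficient of $x_{2j}$ is $a_{3j}(a_{3j-2}-a_{3j-1})$. Neither vanishes.

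The cause is that \eqref{DN} has $a_{3j-1}$ and $a_{3j-2}$ interchanged. The matrix for which \eqref{Kj} holds is the one in \eqref{Aa3j} and in the explicit $n=4,6$ matrices of Section \ref{explex}:
\[
N_{k,2j}=a_{3j-2}\ (k<2j),\qquad N_{k,2j+1}=a_{3j-1}\ (k<2j),\qquad N_{2j,2j+1}=a_{3j}.
\]
So the exceptional entry replaces $a_{3j-1}$, not $a_{3j-2}$. Write $S=\sum_{l>2j+1}N_{1,l}x_l$. Then
\[
C[x_{2j}]=-a_{3j-2}P_{1,2j-1}+a_{3j}x_{2j+1}+S,\qquad C[x_{2j+1}]=-a_{3j-1}P_{1,2j-1}-a_{3j}x_{2j}+S,
\]
\[
C[P_{1,2j-1}]=a_{3j-2}x_{2j}+a_{3j-1}x_{2j+1}+S,
\]
and all four coefficients in $C[G_j]$ cancel. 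This is only a relabeling fix, but you need to carry out the check rather than assert it. Step 2 is unaffected, because it only uses that columns $2j$ and $2j+1$ are constant on rows $1,\dots,2j-1$.

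\textbf{Involutivity: comparison with the paper.} The paper argues from the side of $G_i$:
\begin{itemize}
\item For $m>2i+1$, the derivation $\{\cdot,x_m\}$ acts on functions of $x_1,\dots,x_{2i+1}$ as $A_{1,m}x_m$ times the Euler operator. Weight-zero homogeneity then gives $\{G_i,x_m\}=0$.
\item A separate computation, \eqref{PiP}--\eqref{KiP}, gives $\{G_i,P_{1,2i+1}\}=0$.
\item For $j>i$, $G_j$ is a function of $P_{1,2i+1}$ and the $x_m$ with $m>2i+1$, so $\{G_i,G_j\}=0$.
\end{itemize}
You argue from the side of $G_j$. On $x_1,\dots,x_{2j-1}$, its Hamiltonian vector field is a combination of the Euler field and the Hamiltonian field of $P_{1,2j-1}$. $G_i$ is annihilated by the first through homogeneity. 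It is annihilated by the second because it is an integral of the truncated system with matrix $\N^{2j-1}$, the odd system of Prop.~\ref{P6}.

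Note that \eqref{KiP} is precisely the statement that $G_i$ is conserved by the truncated flow of $\N^{2i+1}$. You obtain this by recycling the Step 1 cofactor computation, so you avoid the extra calculation. The paper's version avoids introducing the truncated system and the auxiliary functions $\alpha,\beta$. Both arguments rest on the same two facts: the column structure of $\N$ and the weight-zero homogeneity of the $G$'s.
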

\begin{proof} We first prove that the $G_j$ are constants of motion.
We have
\begin{align*}
&\dot{P}_{1,r}
=\sum_{i=1}^n \left(x_1A_{1,i}+x_2A_{2,i}+\cdots+x_rA_{r,i}\right)x_{i}\\
&=\sum_{i=1}^r \left(x_1A_{1,i}+x_2A_{2,i}+\cdots+x_rA_{r,i}\right)x_{i}+
\sum_{i=r+1}^n \left(x_1A_{1,i}+x_2A_{2,i}+\cdots+x_rA_{r,i}\right)x_{i}.
\end{align*}
The first sum vanishes because $A$ is antisymmetric. From \eqref{DN} it follows that matrix $\A=\N^{2(k+1)}$ satisfies $A_{h,i}=A_{1,i}$ when $i>h+1$ or when $i=h+1$ and $h$ is odd. This implies that when $r$ is odd, the second sum factorises and we observe that $P_{1,r}$ is a DP with cofactor
\begin{equation}\label{cf1eh0k0}
C[P_{1,r}]=\sum_{i=r+1}^n A_{1,i}x_i.
\end{equation}
The cofactors of $x_{2 j}$ and $x_{2 j +1}$ are
\begin{equation}\label{cf2eh0k0}
C[x_{2j}]=A_{2j,1}P_{1,2j-1}+A_{2j,2j+1}x_{2j+1}+\sum_{i=2j+2}^{n}A_{1,i}x_i
\end{equation}
and
\begin{equation}\label{cf3eh0k0}
C[x_{2j+1}]=A_{2j+1,1}P_{1,2j-1}+A_{2j+1,2j}x_{2j}+\sum_{i=2j+2}^{n}A_{1,i}x_i.
\end{equation}
Using this, and the fact that
\begin{equation} \label{Aa3j}
A_{1,2j}=a_{3j-2},\ A_{1,2j+1}=a_{3j-1},\ A_{2j,2j+1}=a_{3j},
\end{equation}
we find
\begin{align*}
C[G_j]&=a_{3 j -1}C[x_{2 j}]+(a_{3 j -2}-a_{3 j -1}+a_{3 j})C[P_{1,2 j + 1}]-a_{3 j}C[P_{1,2 j -1}]\\
&\ \ \ -a_{3 j -2}C[x_{2 j +1}]\\
&=a_{3 j -1}\left(-a_{3j-2}P_{1,2j-1}+a_{3j}x_{2j+1}+\sum_{i=2j+2}^{n}A_{1,i}x_i
\right)\\
&\ \ \ +(a_{3 j -2}-a_{3 j -1}+a_{3 j})\left(\sum_{i=2 j + 2}^n A_{1,i}x_i\right)
-a_{3 j}\left(\sum_{i=2 j}^n A_{1,i}x_i\right)\\
&\ \ \ -a_{3 j -2}\left(-a_{3j-1}P_{1,2j-1}-a_{3j}x_{2j}+\sum_{i=2j+2}^{n}A_{1,i}x_i.
\right)\\
&=a_{3j}\left(\left(a_{3 j -1}-A_{1,2j+1}\right)x_{2j+1}+\left(a_{3 j -2}-A_{1,2j}\right)x_{2j}\right)\\
&=0.
\end{align*}
We next show that the integrals are in involution. It is easy to see that
\begin{equation} \label{pixj}
\{P_{1,i},x_j\}=A_{1,j}P_{1,i}x_j
\end{equation}
when $j>i+1$ and when $j=i+1$ with $i$ odd. This implies that, for $j>2i+1$ (we then also have $A_{2i,j}=A_{2i+1,j}=A_{1,j}$), using the Leibniz rule,
\begin{equation} \label{Kix}
\begin{split}
\{G_i,x_j\}&=\{x_{2 i}^{a_{3 i -1}}P_{1,2 i + 1}^{a_{3 i -2}-a_{3 i -1}+a_{3 i}}P_{1,2 i -1}^{-a_{3 i}} x_{2 i +1}^{-a_{3 i -2}},x_j\}\\
&=a_{3 i -1}x_{2 i}^{a_{3 i -1}-1}P_{1,2 i + 1}^{a_{3 i -2}-a_{3 i -1}+a_{3 i}}P_{1,2 i -1}^{-a_{3 i}} x_{2 i +1}^{-a_{3 i -2}}\{x_{2 i},x_j\}\\
&\ \ \ +(a_{3 i -2}-a_{3 i -1}+a_{3 i})x_{2 i}^{a_{3 i -1}}P_{1,2 i + 1}^{a_{3 i -2}-a_{3 i -1}+a_{3 i}-1}P_{1,2 i -1}^{-a_{3 i}} x_{2 i +1}^{-a_{3 i -2}}\{P_{1,2 i + 1},x_j\}\\
&\ \ \ -a_{3 i}x_{2 i}^{a_{3 i -1}}P_{1,2 i + 1}^{a_{3 i -2}-a_{3 i -1}+a_{3 i}}P_{1,2 i -1}^{-a_{3 i}-1} x_{2 i +1}^{-a_{3 i -2}}\{P_{1,2 i -1},x_j\}\\
&\ \ \ -a_{3 i -2}x_{2 i}^{a_{3 i -1}}P_{1,2 i + 1}^{a_{3 i -2}-a_{3 i -1}+a_{3 i}}P_{1,2 i -1}^{-a_{3 i}} x_{2 i +1}^{-a_{3 i -2}-1}\{x_{2 i +1},x_j\}\\
&=\left(a_{3 i -1}+(a_{3 i -2}-a_{3 i -1}+a_{3 i})-a_{3 i}-a_{3 i -2}\right) A_{1,j} x_j\\
&=0.
\end{split}
\end{equation}
From \eqref{pixj}, and linearity, it follows that
\begin{equation} \label{PiP}
\begin{split}
\{P_{1,2i-1},P_{1,2i+1}\}&=\{P_{1,2i-1},x_{2i}+x_{2i+1}\}\\
&=P_{1,2i-1}\left(A_{1,2i}x_{2i}+A_{1,2i+1}x_{2i+1}\right)\\
&=P_{1,2i-1}\left(a_{3i-2}x_{2i}+a_{3i-1}x_{2i+1}\right).
\end{split}
\end{equation}
We also have
\begin{equation} \label{xiP}
\begin{split}
\{x_{2i},P_{1,2i+1}\}&=x_{2i}\left(A_{2i,1}P_{1,2i-1}+A_{2i,2i+1}x_{2i+1}\right)\\
&=x_{2i}\left(-a_{3i-2}P_{1,2i-1}+a_{3i}x_{2i+1}\right),\\
\{x_{2i+1},P_{1,2i+1}\}&=x_{2i+1}\left(A_{2i+1,1}P_{1,2i-1}-A_{2i,2i+1}x_{2i}\right)\\
&=x_{2i+1}\left(-a_{3i-1}P_{1,2i-1}-a_{3i}x_{2i}\right).
\end{split}
\end{equation}
Using \eqref{PiP} and \eqref{xiP} we find
\begin{equation} \label{KiP}
\begin{split}
\{G_i,P_{1,2i+1}\}&=\{x_{2 i}^{a_{3 i -1}}P_{1,2 i + 1}^{a_{3 i -2}-a_{3 i -1}+a_{3 i}}P_{1,2 i -1}^{-a_{3 i}} x_{2 i +1}^{-a_{3 i -2}},P_{1,2i+1}\}\\
&=a_{3 i -1}K_i\left(-a_{3i-2}P_{1,2i-1}+a_{3i}x_{2i+1}\right)\\
&\ \ \ -a_{3 i}K_i\left(a_{3i-2}x_{2i}+a_{3i-1}x_{2i+1}\right)\\
&\ \ \ -a_{3 i -2}K_i\left(-a_{3i-1}P_{1,2i-1}-a_{3i}x_{2i}\right)\\
&=0.
\end{split}
\end{equation}
For $j>i$ the integral $G_j$ is a function of $P_{1,2i+1}$ and $x_k$ with $k>2i+1$ and therefore, by \eqref{KiP} and \eqref{Kix}, it Poisson commutes with $G_i$.
\end{proof}

\subsubsection{The even-dimensional homogeneous type $[0,k,l]$ LV system} \label{Seh0kl}
The type $[0,k,l]$ LV system corresponds to the hypergraph in Fig. \ref{F3}.

\begin{figure}[H]
\centering
\begin{tikzpicture}[scale=1]
\tikzstyle{nod}= [circle, inner sep=0pt, fill=white, minimum size=2pt, draw]		
\node[nod] (a1) at (4,0) {};
\node[nod] (b1) at (4,1) {};
\node[nod] (c1) at (5,0) {};
\node[nod] (d1) at (5,1) {};
\node[nod] (e1) at (6,0) {};
\node[nod] (h1) at (7,1) {};
\coordinate (a2) at (4.1,0.1) {};
\coordinate (b2) at (4.1,0.9) {};
\coordinate (c2) at (4.9,0.1) {};
\draw[line width=1pt] (a2) -- (b2) -- (c2) -- (a2);
\coordinate (a3) at (3.9,-0.1) {};
\coordinate (b3) at (3.9,1.1) {};
\coordinate (d3) at (5.05,1.1) {};
\coordinate (e3) at (6.25,-0.1) {};
\draw[line width=1pt] (a3) -- (b3) -- (d3) -- (e3) -- (a3);
\coordinate (a4) at (3.8,-0.2) {};
\coordinate (b4) at (3.8,1.2) {};
\coordinate (d4) at (7,1.2) {};
\coordinate (e4) at (8.4,-0.2) {};
\draw[line width=1pt] (a4) -- (b4) -- (d4) -- (e4) -- (a4);
\node[nod] (g) at (8,0) {};
\node[nod] (h) at (8,1) {};
\node[nod] (i) at (9,0) {};
\node[nod] (j) at (9,1) {};
\node[nod] (k) at (10,0) {};
\node[nod] (l) at (10,1) {};
\node[nod] (m) at (12,0) {};
\node[nod] (n) at (12,1) {};
\coordinate (js) at (8.9,0.9) {};
\coordinate (hs) at (8.1,0.9) {};
\coordinate (is) at (8.9,0.1) {};
\node at (11,1/2) {$\ldots$};
\node at (6.7,1/2) {$\ldots$};
\draw[line width=1pt] (js) -- (hs) -- (is) -- (js);
\coordinate (ht) at (7.75,1.1) {};
\coordinate (it) at (8.95,-0.1) {};
\coordinate (kt) at (10.1,-0.1) {};
\coordinate (lt) at (10.1,1.1) {};
\draw[line width=1pt] (ht) -- (lt) -- (kt) -- (it) -- (ht);
\coordinate (hr) at (7.5,1.2) {};
\coordinate (ir) at (8.9,-0.2) {};
\coordinate (kr) at (12.1,-0.2) {};
\coordinate (lr) at (12.1,1.2) {};
\draw[line width=1pt] (hr) -- (lr) -- (kr) -- (ir) -- (hr);
\end{tikzpicture}
\caption{\label{F3} A hypergraph of order $2(k+l+1)$ and size $k+l$.}
\end{figure}
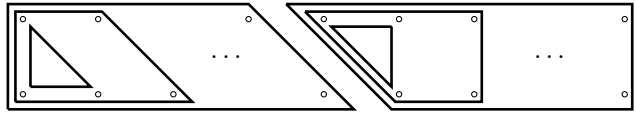

Let $\sigma_z$ be the shift operator $\sigma_z(z_i)=z_{i+1}$. We denote $\N^m_h=\sigma_a^{h-1}\N^m$, and
\begin{equation} \label{gki}
G^k_i=\sigma_x^{2k+1}\sigma_a^{3k+1}G_i.    
\end{equation}

\begin{proposition} \label{P3}
The homogeneous $2(k+l+1)$ dimensional LV system with matrix
\begin{equation} \label{Akl}
\A=\begin{pmatrix}
\N^{2k+1}_1 & a_{3k+1}\1^{2k+1,2l+1} \\
-a_{3k+1}\1^{2l+1,2k+1}  & \N^{2l+1}_{3k+2}
\end{pmatrix}
\end{equation}
is Liouville integrable. It has $3(k+l)+1$ parameters, and admits $k+l$ integrals, $G_1,\ldots,G_k$, $G^k_1,\ldots,G^k_l$, that are mutually in involution.
\end{proposition}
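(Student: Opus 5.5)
The plan is to reduce everything to Proposition \ref{P2} and the structure of the off-diagonal block. Write $n=2(k+l+1)$ and split the variables into the first block $x_1,\ldots,x_{2k+1}$ and the second block $x_{2k+2},\ldots,x_n$. The off-diagonal block of \eqref{Akl} is the constant $a_{3k+1}$. So for every $h\le 2k+1$ and every $i$ in the second block we have $A_{h,i}=a_{3k+1}=A_{1,i}$. Consequently \eqref{cf1eh0k0}, \eqref{cf2eh0k0}, \eqref{cf3eh0k0} and \eqref{pixj}–\eqref{xiP} hold verbatim for the first block, with the sums over $i>2k+1$ simply producing $a_{3k+1}\sum x_i$ terms. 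The same holds in $\N^{2k+1}_1$, since $\N^{2k+1}$ is the upper-left block of $\N^{2(k+1)}$ with the last row and column removed, and $A_{1,2k+2}=a_{3k+1}$ in \eqref{DN}. Hence the computation of $C[G_j]=0$ and of $\{G_i,G_j\}=0$ for $i,j\le k$ from the proof of Proposition \ref{P2} carries over unchanged, as does \eqref{Kix} for $x_j$ in the second block. In short, I will check that \eqref{Akl} agrees with $\N^{n}$ wherever the first-block proof used entries of $\A$.

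For the second family I will use the shifted variables $y_1=x_{2k+2},\ldots,y_{2l+1}=x_n$. In these variables, $\N^{2l+1}_{3k+2}$ is $\N^{2l+1}$ with parameters shifted by $3k+1$. The index shift in \eqref{gki} is arranged so that $G^k_i$ is exactly $G_i$ written in the $y$-variables with shifted parameters. This introduces the block sums $Q_{1,r}=y_1+\cdots+y_r$, for $r$ odd. The entries of $\A$ from the first block to the second are all equal to $a_{3k+1}$, so every $y$-variable receives the same extra contribution to its cofactor. That contribution is $-a_{3k+1}P_{1,2k+1}$, and $Q_{1,r}$ gets the same contribution. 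Here I use that $\{x_h,y_i\}=a_{3k+1}x_hy_i$ uniformly, which gives a uniform extra cofactor of $-a_{3k+1}P_{1,2k+1}$.

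Now $G^k_i$ is a monomial in $y_{2i}$, $Q_{1,2i+1}$, $Q_{1,2i-1}$ and $y_{2i+1}$ with exponents $a_{3i-1}$, $(a_{3i-2}-a_{3i-1}+a_{3i})$, $-a_{3i}$ and $-a_{3i-2}$ (shifted). The exponent sum is $a_{3i-1}+a_{3i-2}-a_{3i-1}+a_{3i}-a_{3i}-a_{3i-2}=0$, so the extra cofactor contributions cancel; this is the same cancellation that appears in \eqref{Kix}. The remaining part of the cofactor vanishes by Proposition \ref{P2} applied inside the second block. The same argument works one level higher. For any $x_h$ in the first block, $\{G^k_i,x_h\}$ is $G^k_ix_h$ times $-a_{3k+1}$ times the exponent sum, which is zero. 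Therefore $G^k_i$ Poisson commutes with every first-block variable and, by the Leibniz rule, with $G_1,\ldots,G_k$. Symmetrically, $G_j$ commutes with each second-block variable by \eqref{Kix}. Involutivity within the second family is Proposition \ref{P2} again, after observing that the uniform coupling adds nothing to brackets of functions of the $y$ variables among themselves.

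It remains to count parameters and check independence. The first block carries $a_1,\ldots,a_{3k}$ and the coupling is $a_{3k+1}$. The second block carries $3l$ parameters, $a_{3k+2},\ldots,a_{3k+3l+1}$, giving $3(k+l)+1$ in total. For Liouville integrability I still need $H$ together with the $k+l$ integrals to be functionally independent, and $\A$ to have full rank, so that $s=0$ and $(n-0)/2=k+l+1$. Independence is clear from the triangular structure: each $G_j$ is the first integral to involve $x_{2j}$ nontrivially, and $H$ involves everything. Commutation with $H$ is just the vanishing cofactor. I expect the main obstacle to be the index bookkeeping, namely verifying that $\N^{2k+1}_1$ truncated at odd size $2k+1$ still has $A_{h,i}=A_{1,i}$ for the required pairs, and that the shifts in \eqref{gki} line up exactly. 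Full rank for generic parameters I would settle by computing $\det\A$ at one specialisation, for example by making the blocks nearly decoupled.
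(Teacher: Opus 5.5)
Your proposal is correct and follows essentially the same route as the paper. The cofactors from Prop.~\ref{P2} acquire a uniform extra term, $a_{3k+1}P_{2k+2,n}$ for first-block factors and $-a_{3k+1}P_{1,2k+1}$ for second-block factors, which cancels because each $G$ has exponent sum zero; involution within each family is Prop.~\ref{P2}, after relabeling for the $G^k_i$; and the cross brackets vanish by the same weight-zero argument. Your independence and rank checks go beyond the paper, but note that the fully decoupled specialisation $a_{3k+1}=0$ makes $\A$ singular (both diagonal blocks have odd size), and that $dH$ lies outside the span of the $dG$'s because $H$ has weight $1$ while every $G$ has weight $0$, not because $H$ ``involves everything''.
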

\begin{proof}
The proof of invariance is similar as in the proof of Prop. \ref{P2}, we adjust the cofactors \eqref{cf1eh0k0}, \eqref{cf2eh0k0} and \eqref{cf3eh0k0} as follows. For $G_j$ we have
\begin{align*}
C[P_{1,2j+1}]&=\Big(\sum_{i=2j+2}^{2k+1} A_{1,i}x_i\Big)+a_{3k+1}P_{2k+2,n}\\
C[x_{2j}]&=A_{2j,1}P_{1,2j-1}+A_{2j,2j+1}x_{2j+1}+\Big(\sum_{i=2j+2}^{2k+1}A_{1,i}x_i\Big)+a_{3k+1}P_{2k+2,n}\\
C[x_{2j+1}]&=A_{2j+1,1}P_{1,2j-1}+A_{2j+1,2j}x_{2j}+\Big(\sum_{i=2j+2}^{2k+1}A_{1,i}x_i\Big) + a_{3k+1}P_{2k+2,n}
\end{align*}
and hence 
\begin{align*}
C[G_j]&=a_{3 j -1}C[x_{2 j}]+(a_{3 j -2}-a_{3 j -1}+a_{3 j})C[P_{1,2 j + 1}]-a_{3 j}C[P_{1,2 j -1}]\\
&\ \ \ - a_{3 j -2}C[x_{2 j +1}]\\
&=a_{3 j -1}\Big(A_{2j,1}P_{1,2j-1}+A_{2j,2j+1}x_{2j+1}+\Big(\sum_{i=2j+2}^{2k+1}A_{1,i}x_i\Big)+a_{3k+1}P_{2k+2,n}\Big)\\
&\ \ \ +(a_{3 j -2}-a_{3 j -1}+a_{3 j})\Big(\Big(\sum_{i=2j+2}^{2k+1} A_{1,i}x_i\Big)+a_{3k+1}P_{2k+2,n}\Big)\\
&\ \ \ -a_{3 j}\Big(\Big(\sum_{i=2j}^{2k+1} A_{1,i}x_i\Big)+a_{3k+1}P_{2k+2,n}\Big)\\
&\ \ \ - a_{3 j -2}\Big(A_{2j+1,1}P_{1,2j-1}
+A_{2j+1,2j}x_{2j}+\Big(\sum_{i=2j+2}^{2k+1}A_{1,i}x_i\Big) + a_{3k+1}P_{2k+2,n}\Big)\\
&=\Big(a_{3 j -1}A_{2j,1}-a_{3 j -2}A_{2j+1,1}\Big)P_{1,2j-1}
-\Big(a_{3j}A_{1,2j}+a_{3 j -2}A_{2j+1,2j}\Big)x_{2j}\\
&\ \ \ +\Big(a_{3 j -1}A_{2j,2j+1}-a_{3 j} A_{1,2j+1}\Big)x_{2j+1}+\Big(a_{3 j -1}+(a_{3 j -2}-a_{3 j -1}+a_{3 j}) \\
&\ \ \ -a_{3 j} - a_{3 j -2}
\Big)\Big(\sum_{i=2j+2}^{2k+1}A_{1,i}x_i+a_{3k+1}P_{2k+2,n}\Big)\\
&=0,
\end{align*}
due to \eqref{Aa3j}. For $G^k_j$ we have
\begin{align*}
C[P_{2k+2,2(k+j+1)}]&=\sum_{i=2j+2}^{2l+1} A_{2k+2,2k+1+i}x_{2k+1+i}-a_{3k+1}P_{1,2k+1}\\
C[x_{2k+1+2j}]&=A_{2k+1+2j,2k+2}P_{2k+2,2k+2j}+A_{2k+2j+1,2k+2j+2}x_{2k+2j+2}\\
&\ \ \ +\sum_{i=2j+2}^{2l+1}A_{2k+1+1,2k+1+i}x_{2k+1+i}-a_{3k+1}P_{1,2k+1}\\
C[x_{2k+2j+2}]&=A_{2k+2j+2,2k+2}P_{2k+2,2k+2j}+A_{2k+2j+2,2k+2j+1}x_{2k+2j+1}\\
&\ \ \ +\sum_{i=2j+2}^{2l+1}A_{2k+2,2k+1+i}x_{2k+1+i} - a_{3k+1}P_{1,2k+1}.
\end{align*}
The proof that $C[G^k_j]=0$ is similar to the above, using
\[
A_{2 k +2,2 k +2 j +1}=a_{3 k +3 j -1},\ 
A_{2 k +2,2 k +2 j +2}=a_{3 k +3 j},\ 
A_{2 k +2 j +1,2 k +2 j +2}=a_{3 k +3 j +1}.
\]

The bracket $\{G_i,G_j\}$ vanishes for the same reason as before, and by relabeling of variables and parameters it also follows that $\{G^k_i,G^k_j\}=0$ (the bottom-right matrix $\N^{2l+1}_{3k+2}$ has the same form as the upper-left matrix $\N^{2k+1}_1$). It remains to establish $\{G_i,G^k_j\}=0$. One verifies that when $r$ is odd and $g>2k+1$ we have $\{P_{1,r},x_g\}=a_{3k+1}P_{1,r}x_g$. Together with the fact that the integral $G_i$ is homogeneous of weight 0, this implies that $\{G_i,x_g\}=0$. Since the same equations hold if we replace $x_g$ with $P_{2k+2,2k+2j\pm1}$, the integrals $G_i,G^k_j$ commute.
\end{proof}
\subsubsection{The even-dimensional homogeneous type $[j,k,l]$ LV system} \label{Sjkl}
The type $[j,k,l]$ LV system is the most general system we consider, it corresponds to the last hypergraph in Fig. \ref{F0}, and it includes the previous cases. We adapt the formulas \eqref{Lj}, \eqref{fki}, and \eqref{gki}:
\begin{equation} \label{Ljkl}
\L_{j,k,l}=\begin{pmatrix}
b_1\1^{2,2k+1} & c_1\1^{2,2l+1} \\
b_2\1^{2,2k+1} & c_2\1^{2,2l+1} \\
\vdots & \vdots \\
b_j\1^{2,2k+1}  & c_j\1^{2,2l+1}
\end{pmatrix},
\end{equation}
\begin{equation} \label{Fb}
\overline{F}_i=P_{2i-1,2i}^{a_{j+3k+1}}P_{2j+1,2(j+k)+1}^{-c_{i}}P_{2(j+k+1),n}^{b_{i}},
\end{equation}
with $n=2(j+k+l+1)$, and
\begin{equation} \label{gjki}
G^{j}_i=\sigma_x^{2j}\sigma_a^{j}G_i,\quad G^{j,k}_i=\sigma_x^{2(j+k)+1}\sigma_a^{j+3k+1}G_i.
\end{equation}
We denote the $2(k+l+1)\times 2(k+l+1)$ matrix \eqref{Akl} by $\A^{k,l}$ , and define $\A^{k,l}_h=\sigma_a^{h-1}\A^{k,l}$. We also use definition \eqref{Mjh} in the following proposition.

\begin{proposition} \label{P4}
The $n=2(j+k+l+1)$ dimensional $3(j+k+l)+1$ parameter family of LV systems with matrix
\begin{equation} \label{m35}
\A=\begin{pmatrix}
\M^j_{j+3k+1} & \L_{j,k,l} \\
-\L_{j,k,l}^T & \A^{k,l}_{j+1}
\end{pmatrix}
\end{equation}
is Liouville integrable. It admits the $j+k+l$ integrals $\overline{F}_1,\ldots,\overline{F}_j,G^j_1,\ldots,G^j_k,G^{j,k}_1,\ldots,G^{j,k}_l$, which are mutually in involution.
\end{proposition}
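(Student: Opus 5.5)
The plan is to reduce Proposition \ref{P4} to Propositions \ref{P1} and \ref{P3} by treating blocks of variables as composite variables. Set $X_i=P_{2i-1,2i}$ for $i=1,\ldots,j$, $Y=P_{2j+1,2(j+k)+1}$ and $Z=P_{2(j+k+1),n}$. The structure of \eqref{m35} gives three facts. First, each pair block $\J_i$ is antisymmetric. Second, all entries coupling the pair $\{2i-1,2i\}$ to a fixed variable outside it are equal: to another pair they equal $a_{i,k}$ with denominator $a_{j+3k+1}$, to the first odd block they equal $b_i$, and to the second odd block they equal $c_i$. Third, in $\A^{k,l}_{j+1}$ all entries between the two odd blocks equal the single parameter $a_{j+3k+1}$. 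From these facts I obtain the bracket relations
\[
\{X_i,x_g\}=b_iX_ix_g,\quad \{X_i,x_h\}=c_iX_ix_h,\quad \{X_i,X_k\}=a_{i,k}X_iX_k,\quad \{Y,Z\}=a_{j+3k+1}YZ,
\]
for $g$ in the first odd block and $h$ in the second. Since the sum over a whole odd block has all internal terms cancelling by antisymmetry, each $X_i$, $Y$ and $Z$ is a Darboux polynomial, and their cofactors are the same linear expressions as \eqref{cf1ehj00}--\eqref{cf3ehj00}, with $x_{n-1},x_n$ replaced by $Y,Z$ and $a_{j+1}$ replaced by $a_{j+3k+1}$.

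With these relations in hand, the statements about $\overline{F}_i$ follow from the proof of Prop.~\ref{P1} verbatim. Both the cofactor cancellation and the involution computation \eqref{fifk} use only the brackets \eqref{tbs}, and these hold with $(x_{n-1},x_n)\to(Y,Z)$. This gives $C[\overline F_i]=0$ and $\{\overline F_i,\overline F_k\}=0$.

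For the $G$-integrals, I restrict to the lower-right block, which is $\A^{k,l}$ with shifted parameters. The cofactor of a component $x_g$ with $g>2j$ picks up an extra term $-\sum_i \beta_i X_i$, where $\beta_i=b_i$ or $c_i$ depending on which odd block $g$ lies in. The same extra term appears in the cofactor of every partial sum $P_{2j+1,r}$ (with $r$ odd) taken within that block. Since the $G^j_i$ and $G^{j,k}_i$ are homogeneous of weight zero in the variables of their own block, these extra terms cancel. The remaining cofactor computation is exactly that of Prop.~\ref{P3}, so the $G$'s are integrals. Their mutual involution also follows from Prop.~\ref{P3}, because the bracket restricted to the variables $x_{2j+1},\ldots,x_n$ is precisely the bracket of \eqref{Akl}.

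The main obstacle is the mixed brackets. To show $\{\overline F_i,G^j_s\}=0$, I compute the bracket of each factor of $G^j_s$ with $X_i$, $Y$ and $Z$. Every factor $u$ of $G^j_s$ is either a single variable or a partial sum within the first odd block, so $\{X_i,u\}=b_iX_iu$ and $\{Z,u\}=-a_{j+3k+1}Zu$. Weight-zero homogeneity of $G^j_s$ then gives $\{X_i,G^j_s\}=\{Z,G^j_s\}=0$. It also gives $\{Y,G^j_s\}=0$, because $Y$ is the top nested sum $P_{1,2k+1}$ in the shifted labelling, and \eqref{KiP} shows that every $G^j_s$ commutes with it. For $G^{j,k}_s$, the argument is the same with the roles of the two blocks exchanged: $Z$ is the top sum of its block, while $X_i$ and $Y$ act through a uniform scalar. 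This completes the involution. Functional independence of the $j+k+l$ integrals together with $H$ would be argued by noting that each integral contains a distinct variable or partial sum to a nonzero power, for generic parameters. Finally, I need $\A$ to have full rank, which I would check generically, for instance by specialising the parameters to a block-diagonal case.
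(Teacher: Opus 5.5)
Your proposal is correct and follows essentially the same route as the paper: the composite Darboux polynomials $P_{2i-1,2i}$, $P_{2j+1,2(j+k)+1}$, $P_{2(j+k+1),n}$ reproduce the brackets \eqref{tbs} for the $\overline{F}_i$, the $G$-integrals are reduced to Prop.~\ref{P3}, and the mixed brackets vanish by weight-zero homogeneity. Your treatment of the mixed brackets is slightly more complete than the paper's. The paper only checks $\{x_i,G\}=0$ for the pair variables and leaves implicit the brackets of the $G$'s with $Y=P_{2j+1,2(j+k)+1}$ and $Z=P_{2(j+k+1),n}$. One small correction: for $s<k$, you need \eqref{Kix} alongside \eqref{KiP} to conclude $\{Y,G^j_s\}=0$.
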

\begin{proof}
The proof that the $\overline{F}_i$ are invariant is similar to what was done in the proof of Prop. \ref{P1}, provided one redefines
\begin{equation} \label{raim}
a_{i,m}=\frac{b_ic_m-b_mc_i}{a_{j+3k+1}},
\end{equation}
and replaces the formulas \eqref{cf1eh0k0}-\eqref{cf3eh0k0} by
\begin{align*}
C[P_{2i-1,2i}]
&=
-\Big(\sum_{m=1}^{i-1}a_{m,i}P_{2m-1,2m}\Big)+\Big(\sum_{m=i+1}^{j}a_{i,m}P_{2m-1,2m}\Big)\\
&\ \ \ +b_iP_{2j+1,2(j+k)+1}+c_iP_{2(j+k+1),n}\\
C[P_{2j+1,2(j+k)+1}]&=-\Big(\sum_{m=1}^{j}b_{m}P_{2m-1,2m}\Big)+a_{j+3k+1}P_{2(j+k+1),n} \\
C[P_{2(j+k+1),n}]&=-\left(\sum_{m=1}^{j}c_{m}P_{2m-1,2m}\right)-a_{j+3k+1}P_{2j+1,2(j+k)+1}.
\end{align*}
We then find
\begin{align*}
C[\bar{F}_i]&=a_{j+3k+1}C[P_{2i-1,2i}]-c_{i}C[ P_{2j+1,2(j+k)+1}]+b_{i}C[P_{2(j+k+1),n}]\\
&=a_{j+3k+1}\Big(-\Big(\sum_{m=1}^{i-1}a_{m,i}P_{2m-1,2m}\Big)+\Big(\sum_{m=i+1}^{j}a_{i,m}P_{2m-1,2m}\Big)+b_iP_{2j+1,2(j+k)+1}\\
&\ \ \ +c_iP_{2(j+k+1),n}\Big)-c_{i}\Big(-\Big(\sum_{m=1}^{j}b_{m}P_{2m-1,2m}\Big)+a_{j+3k+1}P_{2(j+k+1),n}\Big)\\
&\ \ \ +b_{i}\Big(-\left(\sum_{m=1}^{j}c_{m}P_{2m-1,2m}\right)-a_{j+3k+1}P_{2j+1,2(j+k)+1}\Big)\\
&=\Big(-a_{j+3k+1}a_{m,i}+c_ib_m-b_ic_m\Big)\Big(\sum_{m=1}^{i-1}P_{2m-1,2m}\Big)\\
&\ \ \ +\Big(a_{j+3k+1}a_{i,m}+c_ib_m-b_ic_m\Big)\Big(\sum_{m=i+1}^{j}P_{2m-1,2m}\Big)\\
&\ \ \ +(a_{j+3k+1}b_i-b_{i}a_{j+3k+1})P_{2j+1,2(j+k)+1}
+(a_{j+3k+1}c_i-c_{i}a_{j+3k+1})P_{2(j+k+1),n}\\
&=0.
\end{align*}

The relevant cofactors for $G_i^j$ are
\begin{align*}
C[P_{2j+1,2(j+i)+1}]&=
\Big(\sum_{m=2(j+i+1)}^{2(j+k)+1)} A_{2j+1,m}x_m\Big)+a_{j+3k+1}P_{2(j+k+1),n}\\
&\ \ \ -\Big(\sum_{m=1}^{j} b_mP_{2m-1,2m}\Big)\\
C[x_r]&=-\Big(\sum_{m=1}^{j}b_{i}P_{2m-1,2m}\Big)-A_{2j+1,r}P_{2j+1,r-2}
-A_{r-1,r}x_{r-1}\\
&\ \ \ +\Big(\sum_{m=r+1}^{2(j+k)+1} A_{r,m}x_m\Big)+A_{2j+1,2(j+k+1)}P_{2(j+k+1),n}
\end{align*}
which yield
\begin{align*}
C[G_i^j]&=C\left[\frac{x_{2(j+i)}^{a_{j+3i-1}}P_{2j+1,2(j+i)+1}^{a_{j+3i-2}-a_{j+3i-1}+a_{j+3i}}}{P_{2j+1,2(j+i)-1}^{a_{j+3i}} x_{2(j+i)+1}^{a_{j+3i-2}}}\right]\\
&=a_{j+3i-1}C[x_{2(j+i)}]+(a_{j+3i-2}-a_{j+3i-1}+a_{j+3i})C[P_{2j+1,2(j+i)+1}]\\
&\ \ \ -a_{j+3i}C[P_{2j+1,2(j+i)-1}]-a_{j+3i-2}C[x_{2(j+i)+1}]\\
&=a_{j+3i-1}\Big(-\Big(\sum_{m=1}^{j}b_{i}P_{2m-1,2m}\Big)
-A_{2j+1,2(j+i)}P_{2j+1,2(j+i-1)}\\
&\ \ \ -A_{2(j+i)-1,2(j+i)}x_{2(j+i)-1}+\Big(\sum_{m=2(j+i)+1}^{2(j+k)+1} A_{2(j+i),m}x_m\Big)\\
&\ \ \ +A_{2j+1,2(j+k+1)}P_{2(j+k+1),n}\Big) +(a_{j+3i-2}-a_{j+3i-1} +a_{j+3i})\\
&\ \ \ \cdot\Big(\Big(\sum_{m=2(j+i+1)}^{2(j+k)+1} A_{2j+1,m}x_m\Big)
+a_{j+3k+1}P_{2(j+k+1),n}-\Big(\sum_{m=1}^{j} b_mP_{2m-1,2m}\Big)\Big)\\
&\ \ \ -a_{j+3i}\Big(\Big(\sum_{m=2(j+i)}^{2(j+k)+1} A_{2j+1,m}x_m\Big)+a_{j+3k+1}P_{2(j+k+1),n}-\Big(\sum_{m=1}^{j} b_mP_{2m-1,2m}\Big)\Big)\\
&\ \ \ -a_{j+3i-2}\Big(-\Big(\sum_{m=1}^{j}b_{i}P_{2m-1,2m}\Big)
-A_{2j+1,2(j+i)+1}P_{2j+1,2(j+i)-1}\\
&\ \ \ -A_{2(j+i),2(j+i)+1}x_{2(j+i)}+\Big(\sum_{m=2(j+i)+2}^{2(j+k)+1} A_{2(j+i)+1,m}x_m\Big)+A_{2j+1,2(j+k+1)}P_{2(j+k+1),n}\Big)\\
&=0,
\end{align*}
after collecting terms, and using
\begin{align*}
A_{2 j +1,2(j + i)} = a_{j +3 i -2},\
A_{2(j + i) -1,2(j + i)} &= a_{j +3 i -2},\
A_{2 j +1,2(j + i) +1} = a_{j +3 i -1},\\ 
A_{2 (j + i) ,2 (j + i) +1} = a_{j +3 i},\
A_{2 j +1,2 (j + i) +2} &=  a_{j +3 i +1}.
\end{align*}
For
\[
G_i^{j,k}=\frac{x_{2(j+k+i)+1}^{a_{j+3(k+i)}}P_{2(j+k+1),2(j+k+i+1)}^{a_{j+3(k+i)-1}-a_{j+3(k+i)}+a_{j+3(k+i)+1}}}{P_{2(j+k+1),2(j+k+i)}^{a_{j+3(k+i)+1}} x_{2(j+k+i+1)}^{a_{j+3(k+i)-1}}}
\]
we use the cofactors
\begin{align*}
C[P_{2(j+k+1),2(j+k+i)}]&=-\Big(\sum_{m=1}^{j} c_mP_{2m-1,2m}\Big)-a_{j+3k+1}P_{2j+1,2(j+k)+1}\\
&\ \ \ +
\sum_{m=2(j+k+i)+1}^{n} A_{2(j+k+1),m}x_m\\
C[x_r]&=-\Big(\sum_{m=1}^{j}c_{i}P_{2m-1,2m}\Big)-a_{j+3k+1}P_{2j+1,2(j+k)+1}\\
&\ \ \ +\sum_{m=2(j+k+1)}^{n} A_{r,m}x_m.
\end{align*}
We find
\begin{align*}
C[G_i^{j,k}]&=a_{j+3(k+i)}C[x_{2(j+k+i)+1}]\\
&\ \ \ +(a_{j+3(k+i)-1}-a_{j+3(k+i)}+a_{j+3(k+i)+1})C[P_{2(j+k+1),2(j+k+i+1)}]\\
&\ \ \ -a_{j+3(k+i)+1}C[P_{2(j+k+1),2(j+k+i)}]-a_{j+3(k+i)-1}C[x_{2(j+k+i+1)}]\\
&=a_{j+3(k+i)}\Big(-\Big(\sum_{m=1}^{j}c_{i}P_{2m-1,2m}\Big) -a_{j+3k+1}P_{2j+1,2(j+k)+1}\\
&\ \ \ +\sum_{m=2(j+k+1)}^{n} A_{2(j+k+i)+1,m}x_m\Big)+(a_{j+3(k+i)-1}-a_{j+3(k+i)}+a_{j+3(k+i)+1})\\
&\ \ \ \cdot\Big(-\Big(\sum_{m=1}^{j} c_mP_{2m-1,2m}\Big)-a_{j+3k+1}P_{2j+1,2(j+k)+1}+
\sum_{m=2(j+k+i+1)+1}^{n} A_{2(j+k+1),m}x_m\Big)\\
&\ \ \ -a_{j+3(k+i)+1}\Big(-\Big(\sum_{m=1}^{j} c_mP_{2m-1,2m}\Big)-a_{j+3k+1}P_{2j+1,2(j+k)+1}\\
&\ \ \ 
+
\sum_{m=2(j+k+i)+1}^{n} A_{2(j+k+1),m}x_m\Big)-a_{j+3(k+i)-1}\Big(-\Big(\sum_{m=1}^{j}c_{i}P_{2m-1,2m}\Big) \\
&\ \ \ 
-a_{j+3k+1}P_{2j+1,2(j+k)+1}+\sum_{m=2(j+k+1)}^{n} A_{2(j+k+i+1),m}x_m\Big)\\
&=0,
\end{align*}
by collecting terms and using, for $i=1,\ldots,l$,
\begin{align*}
A_{2(j+k+i)+1,m}&=\begin{cases} -a_{j+3(k+i)-1} &m=2(j+k+1),\ldots,2(j+k+i) \\
a_{j+3(k+i)+1} &m=2(j+k+i+1)
\end{cases}\\
A_{2(j+k+i+1),m}&=\begin{cases} -a_{j+3(k+i)} &m=2(j+k+1),\ldots,2(j+k+i) \\
-a_{j+3(k+i)+1} &m=2(j+k+i)+1
\end{cases}\\
A_{2(j+k+i),m}&=\begin{cases} a_{j+3(k+i)-1} &m=2(j+k+i)+1 \\
a_{j+3(k+i)} &m=2(j+k+i+1)
\end{cases}
\end{align*}
and, for $i=1,\ldots,l$ and $m=2(j+k+i)+3,\ldots,n$,
\[
A_{2(j+k+i)+1,m}=A_{2(j+k+i),m}=A_{2(j+k+i+1),m}.
\]
Rest us to prove the involutivity. The proof that $\{\overline{F}_m,\overline{F}_t\}=0$ is a similar calculation as \eqref{fifk}, as the commutation relations \eqref{tbs} generalise to
\begin{align*}
\{P_{2m-1,2m},P_{2j+1,n-2l-1}\}&=b_mP_{2m-1,2m}P_{2j+1,n-2l-1},\\
\{P_{2m-1,2m},P_{n-2l,n}\}&=c_m P_{2m-1,2m}P_{n-2l,n},\\
\{P_{2m-1,2m},P_{2t-1,2t}\}&=a_{m,t} P_{2m-1,2m}P_{2t-1,2t},\\
\{P_{2j+1,n-2l-1},P_{n-2l,n}\}&=a_{j+3k+1}P_{2j+1,n-2l-1}P_{n-2l,n},
\end{align*}
where $a_{m,t}$ is defined by \eqref{raim}. Since the integrals $G^j_m,G^{j,k}_t$ do not depend on the variables $x_1,\ldots,x_j$ the only relevant part of the matrix $\A$ is the bottom-right part $\A^{k,l}_{j+1}$ which is the same as in Prop. \ref{P3} but shifted. Hence the $G$-integrals Poisson commute. They also commute with the
$\overline{F}$-integrals because for all $i=1,\ldots,j$, $1\leq m\leq k$,  $1\leq t\leq l$, we have $\{x_i,G^j_m\}=\{x_i,G^{j,k}_t\}=0$ as in both cases the derivation $e\mapsto\{x_i,e\}$ acts as a scaling and the $G$-integrals are homogeneous of weight zero.
\end{proof}

\subsection{Odd-dimensional homogeneous Liouville integrable LV systems} \label{Sodd}
If the matrix $\A$ does not have full rank, Casimirs can be found by determining the null-space of $\A$ (which, as $\A$ is antisymmetric, equals the null-space of its transpose). Indeed, if $\sum_j A_{i,j}q_j=0$ for all $i$, and $\C=\prod_h x_h^{q_h}$, then
\[
\{x_i,\C\} = x_i \C \left(\sum_{j} A_{i,j} q_j \right)=0.
\]
We will obtain odd $n=2m-1$ dimensional systems from even $n+1=2m$ dimensional systems by reduction, i.e. setting $x_{2m}=0$. Integrals of the $2m$-dimensional system which do not depend on $x_{2m}$, will be integrals of the reduced system. The reduced system has one less integral, but it makes up for that by acquiring a Casimir.

\subsubsection{The odd-dimensional homogeneous $[j,0,0]$ LV system} \label{S0j00}
The matrix of the $n=2j+1$ dimensional LV system of type $[j,0,0]$ is obtained from the matrix of the type $[j,0,0]$ system by deleting the last row and last column. It has $3j+1$ parameters (the same number as the $2j+2$-dimensional type $[j,0,0]$ system). Integrals are obtained from \eqref{fki}; we define
\begin{align}
Q_i&=\left(\frac{F_i^{b_{i+1}}}{F_{i+1}^{b_i}}\right)^{1/a_{j+1}} \notag \\
&=\frac{\left(x_{2 i -1}+x_{2 i}\right)^{b_{i +1}} }{\left(x_{2 i +1}+x_{2 i +2}\right)^{b_{i}}}x_{n}^{a_{i,i+1}}, \label{Qi}  
\end{align}
where $a_{i,i+1}$ is defined in \eqref{aik}.
\begin{proposition} \label{P5}
The odd-dimensional type $[j,0,0]$ LV system is Liouville integrable, with $j-1$ integrals $Q_1,\ldots,Q_{j-1}$ mutually in involution. The Casimir is given by
\begin{equation} \label{C1}
\C=\prod_{i=1}^n x_i^{q_i}, 
\end{equation}
where $q_{2k-1}=-q_{2k}=b_k/a_k$ for $k=1,\ldots,j$, and $q_{n}=1$.
\end{proposition}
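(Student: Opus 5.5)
The plan is to deduce everything from Proposition \ref{P1} by restricting to the invariant hyperplane $x_{2j+2}=0$, and then to check the Casimir by direct linear algebra. Write $\tilde{\A}$ for the $(2j+1)\times(2j+1)$ matrix obtained from \eqref{A1} by deleting the last row and column, so that $n=2j+1$ and the old variable $x_{n-1}$ of Proposition \ref{P1} is the new $x_n$. Since $\dot x_{2j+2}$ is proportional to $x_{2j+2}$, the hyperplane $x_{2j+2}=0$ is invariant. The restricted vector field is the odd-dimensional LV system with matrix $\tilde{\A}$. The log-canonical bracket \eqref{PBA} with matrix $\tilde{\A}$ is the restriction of the bracket with matrix \eqref{A1} to functions of $x_1,\ldots,x_{2j+1}$.

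\textbf{Integrals.} In \eqref{fki} the variable $x_{2j+2}$ occurs only through the factor $x_{2j+2}^{b_i}$. In the combination $F_i^{b_{i+1}}/F_{i+1}^{b_i}$ this factor cancels, and taking the $1/a_{j+1}$ power gives exactly \eqref{Qi}. The exponent of $x_n$ there is $(b_ic_{i+1}-b_{i+1}c_i)/a_{j+1}=a_{i,i+1}$. Thus $Q_i$ is an integral of the $2(j+1)$-dimensional system and is independent of $x_{2j+2}$. Its time derivative along the restricted flow equals the restriction of its derivative along the full flow, which vanishes. The same holds for $H$ with $x_{2j+2}$ dropped.

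\textbf{Involutivity.} The $Q_i$ are functions of the $F$'s, so $\{Q_i,Q_k\}=0$ in the full bracket by \eqref{fifk}. Because $Q_i,Q_k$ depend only on $x_1,\ldots,x_{2j+1}$, the Leibniz expansion of their bracket uses only $A_{r,s}$ with $r,s\le 2j+1$. Hence this bracket coincides with the bracket defined by $\tilde{\A}$, and the $Q_i$ are in involution. They also commute with $H$, since they are integrals.

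\textbf{Casimir.} Following the discussion at the start of this subsection, it suffices to check $\tilde{\A}\,q=0$ for the stated vector $q$. I will verify this row by row.
\begin{itemize}
\item Row $2k-1$. The $\J_k$ block contributes $a_kq_{2k}=-b_k$. Every $\pm\K$ block is constant on $2\times2$ blocks, so it pairs with $q_{2m-1}+q_{2m}=0$ and gives nothing. The first column of $\L_j$ contributes $b_kq_n=b_k$. The total is $0$.
\item Row $2k$. Similarly the total is $-a_kq_{2k-1}+b_k=0$.
\item The last row. It gives $-\sum_m b_m(q_{2m-1}+q_{2m})=0$.
\end{itemize}
Hence $\C$ Poisson commutes with every $x_i$.

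\textbf{Counting and independence.} We need $(n-s)/2=j$ commuting integrals with $s=1$, namely $H,Q_1,\ldots,Q_{j-1}$. This requires the rank of $\tilde{\A}$ to be exactly $2j$. For that I would exhibit one parameter choice where the kernel is one-dimensional, for instance $b_k=c_k=0$, which makes $\tilde{\A}$ block diagonal with blocks $\J_1,\ldots,\J_j$ and a zero. Lower semicontinuity of the kernel dimension then gives rank $2j$ for generic parameters.

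\textbf{Functional independence.} This is the step I expect to be the main obstacle, since it has to be argued rather than computed. I would use the triangular structure: $\ln Q_i$ involves $P_{2i-1,2i}$ and $P_{2i+1,2i+2}$ nontrivially, so the differentials $d\ln Q_1,\ldots,d\ln Q_{j-1}$ have a triangular pattern in the directions $\partial_{x_{2i-1}}$ when $b_{i+1}\neq0$. Together with $dH$, which has a nonzero $\partial_{x_n}$ component not present in this triangular block, this gives $j$ independent differentials at a generic point, for generic parameters.
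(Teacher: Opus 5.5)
Your proposal is correct and follows the paper's own route. Involutivity of the $Q_i$ is inherited from that of the $F_i$ in Proposition~\ref{P1}, and the Casimir is verified by checking $\tilde{\A}q=0$ row by row, with $q_{2m-1}+q_{2m}=0$ killing the $\K$-blocks. Your additional steps are ones the paper leaves implicit: checking the last row, the rank count, and functional independence. They are sound up to two small corrections. First, it is the rank that is lower semicontinuous (the kernel dimension is upper semicontinuous). Second, the $dQ_i$ also have $\partial_{x_n}$-components $a_{i,i+1}/x_n$, so the $j\times j$ minor in the columns $x_1,x_3,\ldots,x_{2j-3},x_n$ is not purely triangular. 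It equals the triangular determinant times $1+O(1/x_n)$, which is still nonzero at generic points.
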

\begin{proof}
The commutativity of the $Q$-integrals follows from the commutativity of the $F$-integrals. For the Casimir we need to show that the $q$-vector is in the null-space of $\A$. We have
\begin{align*}
\sum_{l=1}^{n} A_{2k-1,l} q_l&=
\Big(\sum_{l=1}^{2k-2} A_{2k-1,l} q_l\Big) +
A_{2k-1,2k-1}q_{2k-1}+A_{2k-1,2k}q_{2k}\\
&\ \ \ +
\Big(\sum_{l=2k+1}^{n-1} A_{2k-1,l} q_l\Big)
+ A_{2k-1,n} q_n\\
&=
0 + 0 + a_k (-\frac{b_k}{a_k}) +
0
+ b_k\\
&=0,
\end{align*}
as the terms in the summations cancel each other pairwise. The proof that $\sum_{l=1}^{n} A_{2k,l} q_l=0$ is similar.
\end{proof}

\subsubsection{The odd-dimensional homogeneous $[0,k,0]$ LV system}
\begin{proposition} \label{P6}
A Casimir for the $n=2k+1$ dimensional LV system of type $[0,k,0]$, with matrix $\A=\N^{2k+1}$ defined by Eq. \eqref{DN}, is given by
\begin{equation}\label{cq2}
\C=\prod_{i=1}^n x_i^{q_i},\quad q_i=\begin{cases}
a_3 & i=1 \\
-a_{3j-1}\prod_{l=1}^{j-1} \frac{a_{3l-2}-a_{3l-1}+a_{3l}}{a_{3l+3}} & i=2j \\
a_{3j-2}\prod_{l=1}^{j-1} \frac{a_{3l-2}-a_{3l-1}+a_{3l}}{a_{3l+3}} & i=2j+1.
\end{cases}
\end{equation}
The system is Liouville integrable, with $k-1$ pairwise commuting integrals given by $G_j$, $j=1,2,\ldots k-1$, as given by Eq. \eqref{Kj}.
\end{proposition}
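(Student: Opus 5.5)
The plan has two parts: first show that the exponent vector $q$ in \eqref{cq2} lies in the null-space of $\A=\N^{2k+1}$, which makes $\C$ a Casimir by the argument at the start of Section \ref{Sodd}; then transfer the integrals $G_1,\ldots,G_{k-1}$ from Prop. \ref{P2} to the reduced system and count them.

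\emph{Casimir.} Write $\pi_j=\prod_{l=1}^{j-1}\frac{a_{3l-2}-a_{3l-1}+a_{3l}}{a_{3l+3}}$ (so $\pi_1=1$), giving $q_1=a_3$, $q_{2j}=-a_{3j-1}\pi_j$ and $q_{2j+1}=a_{3j-2}\pi_j$. I will use two facts from the proof of Prop. \ref{P2}. The first is the entries \eqref{Aa3j}: $A_{1,2j}=a_{3j-2}$, $A_{1,2j+1}=a_{3j-1}$, $A_{2j,2j+1}=a_{3j}$. The second is the structural property $A_{h,i}=A_{1,i}$ whenever $i>h+1$, or $i=h+1$ with $h$ odd.

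The first key step is a telescoping identity for the partial sums $S_{2j-1}=q_1+\cdots+q_{2j-1}$. I claim $S_{2j-1}=a_{3j}\pi_j$. This holds for $j=1$, and induction gives it in general:
\[
S_{2j+1}=a_{3j}\pi_j+(a_{3j-2}-a_{3j-1})\pi_j=a_{3j+3}\pi_{j+1}.
\]
Next, for a row $r$, split $\sum_l A_{r,l}q_l$ into three parts.

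\begin{itemize}
\item The columns $l>r+1$ come in pairs $(2i,2i+1)$ on which $A_{r,\cdot}=A_{1,\cdot}$. Each pair contributes $a_{3i-2}(-a_{3i-1}\pi_i)+a_{3i-1}a_{3i-2}\pi_i=0$.
\item The columns $l<r$ (below the diagonal) give $-A_{1,r}S_{r-1}$, after antisymmetry and the structural property.
\item The remaining terms are local to the pair containing $r$.
\end{itemize}

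For $r=2j$ this leaves $-a_{3j-2}S_{2j-1}+a_{3j}q_{2j+1}=-a_{3j-2}a_{3j}\pi_j+a_{3j}a_{3j-2}\pi_j=0$. For $r=2j+1$ it leaves $-a_{3j-1}S_{2j-1}-a_{3j}q_{2j}=0$ in the same way. Row $1$ consists only of the paired terms, so it vanishes as well.

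\emph{Integrals and involution.} The odd system is the reduction $x_{2k+2}=0$ of the $[0,k,0]$ system of Prop. \ref{P2}, obtained by deleting the last row and column. The hyperplane $x_{2k+2}=0$ is a Poisson submanifold, since $\{x_{2k+2},\cdot\}$ is proportional to $x_{2k+2}$, and the reduced bracket is the log-canonical bracket of the truncated matrix. For $j\le k-1$ the integral $G_j$ involves only $x_1,\ldots,x_{2j+1}$, with $2j+1\le 2k-1$, so it does not depend on $x_{2k+2}$. Its vanishing cofactor and its vanishing brackets with the other $G_i$, established in Prop. \ref{P2}, therefore restrict directly. Together with $H$ this gives $k$ commuting integrals, and with the single Casimir this matches the count $\frac{n-1}{2}=k$ required for $n=2k+1$.

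\emph{Main obstacle.} The step I expect to be hardest is confirming that $\A$ has rank exactly $2k$ and that $H,G_1,\ldots,G_{k-1},\C$ are functionally independent. I would handle this by evaluating a suitable Jacobian minor at a generic point, or by arguing that the $2k\times 2k$ principal minor equals the Pfaffian-square of the full-rank even system of Prop. \ref{P2} with one fewer block, which is nonzero for generic parameters. The index bookkeeping in the row computation, in particular the boundary rows $r=2k$ and $r=2k+1$, where no pairs follow, also needs care.
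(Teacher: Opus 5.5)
Your proposal is correct and is essentially the paper's proof: the columns beyond the block $\{2j,2j+1\}$ cancel pairwise, your identity $S_{2j-1}=a_{3j}\pi_j$ is exactly the paper's inductive identity $\sum_{l=1}^{i-1}q_l=\frac{a_{3i/2}}{a_{3i/2-2}}\,q_{i+1}$, and the integrals $G_1,\ldots,G_{k-1}$ are inherited from Prop.~\ref{P2} (the paper does not carry out the rank and functional-independence checks you flag either). One small wording fix: for odd $r=2j+1$, the paired columns are $l\geq r+1$, not $l>r+1$, and the below-diagonal part is $-A_{1,r}S_{r-2}$, not $-A_{1,r}S_{r-1}$, because column $2j$ is the local partner; your explicit case computation already handles this correctly.
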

\begin{proof} The commutativity of the $G$-integrals was proven in Prop. \ref{P2}. We show that the $q$-vector in \eqref{cq2} is in the null-space of $\A$. Let $i$ be even. We claim that
\begin{equation} \label{cl1}
\sum_{j=1}^{i+1} A_{i,j}q_j = 0,\quad \sum_{j=i+2}^{n} A_{i,j}q_j = 0,
\end{equation}
from which the result follows. The second equality of \eqref{cl1} holds as the terms vanish pairwise. The first equality of \eqref{cl1} boils down to
\[
  \sum_{j=1}^{i-1} q_j = \frac{a_{3i/2}}{a_{3i/2-2}} q_{i+1}
\]
which can be proven by induction. Similarly, when $i$ is odd we have $\sum_{j=1}^{i-1} A_{i,j}q_j = 0$ and $\sum_{j=i+1}^{n} A_{i,j}q_j = 0$.
\end{proof}

\subsubsection{The odd-dimensional homogeneous $[0,k,l]$ LV system} \label{Soh0kl}
The matrix $\A$ of the $n=2(k+l)+1$ dimensional LV system of type $[0,k,l]$ is obtained from the matrix of the type $[0,k,l]$ system by deleting the last row and last column. It has $3(k+l)-1$ parameters.
\begin{proposition} \label{P7}
The odd-dimensional type $[0,k,l]$ LV system is Liouville integrable, with $k+l-1$ commuting integrals $G_1,\ldots,G_k,G^k_1,\ldots,G^k_{l-1}$, given in Eq. \eqref{Kj} and Eq. \eqref{gki}. The Casimir is given by $\C=\prod_{i=1}^n x_i^{q_i}$, with 
\begin{equation}\label{cq3}
q_i=\begin{cases}
a_{3 k +3 l -1} \left(\prod_{j =1}^k\frac{a_{3 j}}{a_{3 j -2}-a_{3 j -1}+a_{3 j}}\right), & i=1 \\
-\frac{a_{3 k +3 l -1} a_{3 m -1}}{a_{3 m}} \left(\prod_{j =m}^k \frac{a_{3 j}}{a_{3 j -2}-a_{3 j -1}+a_{3 j}}\right), & i=2m,\ m=1,\ldots,k \\
\frac{a_{3 k +3 l -1} a_{3 m -2}}{a_{3 m}} \left(\prod_{j =m}^k \frac{a_{3 j}}{a_{3 j -2}-a_{3 j -1}+a_{3 j}}\right), & i=2m+1,\ m=1,\ldots,k \\
-a_{3 k +1} \left(\prod_{j =k +1}^{k +l -1} \frac{a_{3 j +1}}{a_{3 j -1}-a_{3 j}+a_{3 j +1}}\right), & i=2k+2 \\
\frac{a_{3 k +1} a_{3 m}}{a_{3 m +1}} \left(\prod_{j =m}^{k +l -1} \frac{a_{3 j +1}}{a_{3 j -1}-a_{3 j}+a_{3 j +1}}\right), & i=2m+1,\ m=k+1,\ldots,k+l-1 \\
-\frac{a_{3 k +1} a_{3 m -1}}{a_{3 m +1}} \left(\prod_{j =m}^{k +l -1} \frac{a_{3 j +1}}{a_{3 j -1}-a_{3 j}+a_{3 j +1}}\right), & i=2m+2,\ m=k+1,\ldots,k+l-1 \\
a_{3 k +1}, & i=2(k+l)+1.
\end{cases}
\end{equation}
\end{proposition}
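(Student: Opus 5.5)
We prove Proposition \ref{P7} in two parts, following the pattern of Props. \ref{P5} and \ref{P6}. First we show that the listed integrals survive the reduction $x_{2(k+l+1)}=0$ and still commute. Then we show that the vector $q$ of \eqref{cq3} spans the null-space of the reduced $(2(k+l)+1)\times(2(k+l)+1)$ matrix $\A$.

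\emph{Integrals.} The reduced matrix is the upper-left principal submatrix of the matrix \eqref{Akl}. The log-canonical bracket \eqref{PBA} on $x_1,\ldots,x_{n}$ is therefore the restriction of the bracket on the $2(k+l+1)$-dimensional space to functions independent of $x_{2(k+l+1)}$. The integrals $G_1,\ldots,G_k$ involve only $x_1,\ldots,x_{2k+1}$. The integrals $G^k_1,\ldots,G^k_{l-1}$ involve only the variables up to $x_{2(k+l-1)+3}=x_{2(k+l)+1}$, since $G^k_i$ uses $x_{2k+1+2i}$, $x_{2k+2+2i}$ and sums $P_{2k+2,\cdot}$ ending at $2(k+i)+2$. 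So none of them depend on $x_{2(k+l+1)}$, whereas $G^k_l$ does and is dropped.

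The cofactor computations in the proof of Prop. \ref{P3} are linear in the cofactors. Setting $x_{2(k+l+1)}=0$ in those identities gives $C[G]=0$ for the reduced system, because the cofactors of the reduced system are exactly the old cofactors with $x_{2(k+l+1)}=0$. The same holds for the bracket identities, so the mutual involutivity follows verbatim from Prop. \ref{P3}. Functional independence of the $k+l-1$ integrals together with $H$ is inherited as well: each $G$ introduces a new variable $x_{2i}$ or $x_{2k+1+2i}$ not present in the earlier ones.

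\emph{Casimir.} We must verify $\sum_{h}A_{i,h}q_h=0$ for every row $i$. We split the rows into the block $i\le 2k+1$ and the block $i\ge 2k+2$, and split each row sum into its contribution from the upper-left block $\N^{2k+1}_1$ and from the lower-right block $\N^{2l-1}_{3k+2}$ (truncated). The off-diagonal block is the constant $a_{3k+1}$.

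For $i\le 2k+1$, the off-diagonal contribution is $a_{3k+1}\sum_{h\ge 2k+2}q_h$. For $i\ge 2k+2$, it is $-a_{3k+1}\sum_{h\le 2k+1}q_h$. So we need two key telescoping identities:
\begin{itemize}
\item the partial sums $S_r=\sum_{h=1}^{r}q_h$ over the first block, for odd $r$, have the closed product form;
\item the analogous sums for the second block are also explicit. In particular, the total sum of the second block is proportional to $a_{3k+3l-1}$, and the total sum of the first block is proportional to $a_{3k+1}$. This is what makes the cross terms cancel, matching the prefactors $a_{3k+3l-1}$ and $a_{3k+1}$ in \eqref{cq3}.
\end{itemize}
We prove both closed forms by induction on $m$, exactly as the claim $\sum_{j=1}^{i-1}q_j=\frac{a_{3i/2}}{a_{3i/2-2}}q_{i+1}$ in Prop. \ref{P6}. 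The induction step uses $q_{2m}+q_{2m+1}$, which produces the factor $a_{3m-2}-a_{3m-1}+a_{3m}$ over $a_{3m}$.

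With these identities in hand, each row reduces to a short check. For even $i=2m\le 2k$, the row sum is $-a_{3m-2}S_{2m-1}+a_{3m}q_{2m+1}+{}$(pairwise cancelling terms $A_{1,h}(q_{2p}+q_{2p+1})$ with the same coefficient) ${}+a_{3k+1}\cdot(\text{second-block sum})$. The odd rows are treated similarly, and the second block is the mirror image with the opposite sign conventions of \eqref{cq3}.

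\emph{Main obstacle.} We expect the hardest step to be the index bookkeeping in the second block. There the truncation removes the final row and column, so the last variable $x_{2(k+l)+1}$ (with $q=a_{3k+1}$) plays the role of "$x_1$" read backwards. We would handle this by reversing the order of the second-block variables and checking that \eqref{cq3} then takes the shape of \eqref{cq2}. This reduces the second-block verification to Prop. \ref{P6}, leaving only the cross terms, which are settled by the two total-sum identities. Finally, since the Casimir exponent vector lies in the one-dimensional null-space of this odd rank-$(n-1)$ matrix, we obtain $s=1$ and $\frac{n-1}{2}=k+l$ commuting integrals including $H$, which gives Liouville integrability.
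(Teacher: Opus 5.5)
Your treatment of the integrals is fine and matches the paper. None of $G_1,\dots,G_k,G^k_1,\dots,G^k_{l-1}$ involves $x_{2(k+l+1)}$ (in fact $G^k_{l-1}$ only reaches $x_{2(k+l)}$), so invariance and involutivity are inherited from Prop.~\ref{P3}.

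The gap is in the Casimir verification. The two total-sum identities you rely on are not the true ones, and the proposed reduction of the second block to Prop.~\ref{P6} cannot work. Write $\Phi_m=\prod_{j=m}^{k}\frac{a_{3j}}{a_{3j-2}-a_{3j-1}+a_{3j}}$ and $\Pi_m=\prod_{j=m}^{k+l-1}\frac{a_{3j+1}}{a_{3j-1}-a_{3j}+a_{3j+1}}$, with empty products equal to $1$. Then \eqref{cq3} gives $q_{2m}+q_{2m+1}=a_{3k+3l-1}(\Phi_{m+1}-\Phi_m)$ and $q_{2m+1}+q_{2m+2}=a_{3k+1}(\Pi_m-\Pi_{m+1})$, whence
\[
\sum_{h=1}^{2k+1}q_h=a_{3k+3l-1},\qquad \sum_{h=2k+2}^{n}q_h=0 .
\]
The second identity is indispensable. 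In rows $i\le 2k+1$ the first-block part vanishes by itself, because the first-block part of $q$ is a multiple of \eqref{cq2} for $\N^{2k+1}$. So the cross term $a_{3k+1}\sum_{h\ge 2k+2}q_h$ must vanish outright. With your claimed nonzero second-block sum, these rows would not close.

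For rows $i\ge 2k+2$ the cross term is $-a_{3k+1}a_{3k+3l-1}\ne 0$. The second-block contribution therefore has to be nonzero, which rules out reducing it to a kernel statement. The $2l\times 2l$ block $\N^{2l}_{3k+2}$ is the matrix of an even $[0,l-1,0]$ system and is generically invertible. Reversing its variables does not even produce an $\N$-type matrix.

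The correct split keeps the original order and peels off the last variable. The vector $(q_{2k+2},\dots,q_{n-1})$ is proportional to the shifted vector \eqref{cq2} (with $a_i\mapsto a_{i+3k+1}$) for the odd block $\N^{2l-1}_{3k+2}$. Hence $\sum_{h=2k+2}^{n-1}A_{i,h}q_h=0$ for $2k+2\le i\le n-1$, while $A_{i,n}q_n=a_{3k+3l-1}a_{3k+1}$ cancels the cross term. For $i=n$, using the vanishing second-block sum, $\sum_{h=2k+2}^{n-1}A_{n,h}q_h=-a_{3k+3l-1}(0-q_n)=a_{3k+3l-1}a_{3k+1}$, which again cancels. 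This is the splitting encoded in \eqref{qqq}.

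A smaller slip in the first block concerns the pairwise cancellation. It is $a_{3p-2}q_{2p}+a_{3p-1}q_{2p+1}=0$, with different coefficients $A_{1,2p}\ne A_{1,2p+1}$. It is not a common coefficient times $q_{2p}+q_{2p+1}$, which is nonzero.
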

\begin{proof}
The commuting integrals are the same as those in Prop. \ref{P3}. We have
\begin{equation} \label{qqq}
\begin{cases}
    \sum_{j=1}^{i-1} A_{i,j}q_j
    = \sum_{j=i+1}^{2k+1} A_{i,j}q_j
    = 0 &1 \equiv i \leq 2k+1, \\
    \sum_{j=1}^{i+1} A_{i,j}q_j
    = \sum_{j=i+2}^{2k+1} A_{i,j}q_j
    = 0 &0 \equiv i \leq 2k+1, \\
    \sum_{j=2k+2}^{n} A_{i,j}q_j = 0 & i \leq 2k+1, \\
    \sum_{j=1}^{2k+1} A_{i,j}q_j = - A_{i,n}q_n & i > 2k+1, \\
    \sum_{j=2k+2}^{i+1} A_{i,j}q_j
    = \sum_{j=i+2}^{n-1} A_{i,j}q_j
    = 0 &1 \equiv i > 2k+1, \\
    \sum_{j=2k+2}^{i-1} A_{i,j}q_j
    = \sum_{j=i+1}^{n-1} A_{i,j}q_j
    = 0 &0 \equiv i > 2k+1,
\end{cases}
\end{equation}
from which it follows that the $q$-vector \eqref{cq3} is in the null-space of $\A$.
\end{proof}
\subsubsection{The odd-dimensional homogeneous $[j,k,l]$ LV system} \label{224}
The matrix of the $n=2(j+k+l)+1$ dimensional LV system of type $[j,k,l]$ is obtained from the matrix of the even-dimensional type $[j,k,l]$ system, by deleting the last row and last column. It has $3(j+k+l)-1$ parameters. We define
\begin{align}
\overline{Q}_i&=\left(\frac{\overline{F}_i^{b_{i+1}}}{\overline{F}_{i+1}^{b_i}}\right)^{1/a_{j+3k+1}} \notag \\
&=\frac{P_{2 i -1,2 i}^{b_{i +1}} }{P_{2 i +1,2 i +2}^{b_{i}}}P_{2j+1,2(j+k)+1}^{a_{i,i+1}}, \label{oQi}
\end{align}
where $a_{i,i+1}$ is defined in \eqref{raim}.
\begin{proposition} \label{P8}
The odd-dimensional type $[j,k,l]$ LV system is Liouville integrable. The Casimir is
$\C=\prod_{i=1}^n x_i^{q_i}$, with 
\begin{equation}\label{cq4}
q_i=\begin{cases}
\frac{b_m}{a_m} & i=2m-1,\ m=1,\ldots,j  \\
-\frac{b_m}{a_m} & i=2m,\ m=1,\ldots,j  \\
\prod_{h=1}^k \frac{a_{j +3 h}}{a_{j +3 h -2}-a_{j +3 h -1}+a_{j +3 h}} & i=2j+1 \\
-\frac{a_{j +3 m -1}}{a_{j +3 m}}
\prod_{h=m}^k \frac{a_{j +3 h}}{a_{j +3 h -2}-a_{j +3 h -1}+a_{j +3 h}}
& i=2j+2m,\ m=1,\ldots,k\\
\frac{a_{j +3 m -2}}{a_{j +3 m}}
\prod_{h=m}^k \frac{a_{j +3 h}}{a_{j +3 h -2}-a_{j +3 h -1}+a_{j +3 h}}
& i=2j+2m+1,\ m=1,\ldots,k\\
-\frac{a_{j +3 k +1}}{a_{j +3 k +3 l -1}}
\prod_{h=k+1}^{k+l-1} \frac{a_{j +3 h +1}}{a_{j +3 h -1}-a_{j +3 h}+a_{j +3 h +1}}
& i=2(j+k)+2\\
\frac{a_{j +3 k +1} a_{j +3 k +3m}}{a_{j +3 k +3 l -1} a_{j +3 k +3m+1}}
\prod_{h=k+m}^{k+l-1}
\frac{a_{j +3 h +1}}{a_{j +3 h -1}-a_{j +3 h}+a_{j +3 h +1}}
& i=2(j+k+m)+1,\ m=1,\ldots,l-1\\
-\frac{a_{j +3 k +1} a_{j+3k+3m-1}}{a_{j +3 k +3 l -1} a_{j +3 k +3m +1}}
\prod_{h=k+m}^{k+l-1}
\frac{a_{j +3 h +1}}{a_{j +3 h -1}-a_{j +3 h}+a_{j +3 h +1}}
& i=2(j+k+m)+2,\ m=1,\ldots,l-1\\
\frac{a_{j +3 k +1}}{a_{j +3 k +3 l -1}}
& i=2(j+k+l)+1,
\end{cases}
\end{equation}
and $j+k+l-1$ commuting integrals  are $\overline{Q}_1,\ldots,\overline{Q}_{j-1},G^j_1,\ldots,G^j_k,G^{j,k}_1,\ldots,G^{j,k}_{l-1},K$, defined by Eqs. \eqref{oQi}, \eqref{gki}, \eqref{gjki} and
\begin{equation} \label{KK}
K=\prod_{i=1}^j \left(\frac{x_{2i-1}}{x_{2i}}\right)^{b_i/a_i}
\left(x_1+x_2\right)^{a_{j+3k+1}/c_1}
\left(x_{2(j+k+1)}+\cdots+x_n\right)^{b_1/c_1}\prod_{i=2(j+k+1)}^n x_i^{q_i}.
\end{equation}
\end{proposition}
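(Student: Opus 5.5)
We must show three things: that $\C$ is a Casimir, that the listed functions are integrals, and that they pairwise commute. For the Liouville count, note that $n=2(j+k+l)+1$ and the rank of $\A$ is $n-1$. We therefore need one Casimir and $(n-1)/2=j+k+l$ commuting integrals including $H$. The list contains $(j-1)+k+(l-1)+1=j+k+l-1$ functions, so together with $H$ the count is right.

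\textbf{Casimir.} Following the proofs of Props. \ref{P5}, \ref{P6} and \ref{P7}, I would check $\sum_m A_{i,m}q_m=0$ row by row, splitting the index range into three blocks.
\begin{itemize}
\item For rows $i\le 2j$, pair the entries $q_{2m-1}=-q_{2m}=b_m/a_m$ exactly as in Prop. \ref{P5}. The off-diagonal $\K$ blocks are constant on $2\times 2$ blocks, so these pairs cancel, and the $\J$ block contributes $-b_i$ or $-c_i$ (up to sign). Since $\L_{j,k,l}$ is constant on each of the two outer blocks, this must be balanced by $b_i S_1+c_i S_2$, where $S_1=\sum_{m=2j+1}^{2(j+k)+1}q_m$ and $S_2=\sum_{m=2(j+k+1)}^{n}q_m$. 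Hence the first requirement is $S_1=1$ and $S_2=0$, up to the orientation sign.
\item For rows in the middle and last blocks, the contribution of the first $2j$ coordinates vanishes, again because the $q$'s cancel in pairs. What remains are exactly the conditions \eqref{qqq} of Prop. \ref{P7}, applied to the shifted matrix $\A^{k,l}_{j+1}$ with its last row and column deleted. The $q$-values on these blocks are a rescaling of \eqref{cq3}, with $a\mapsto\sigma_a^{j}a$ and the normalisation divided by $a_{j+3k+3l-1}$ and $a_{j+3k+1}$ respectively.
\end{itemize}
I would verify $S_1=1$ by the telescoping induction already used for the first equality in \eqref{cl1}, and $S_2=0$ by the analogous statement for the second block.

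\textbf{Integrals.} The functions $G^j_i$ and $G^{j,k}_i$ with $i\le l-1$ do not involve $x_n$, and neither does $\overline{Q}_i$, because $\overline{Q}_i$ is built from $\overline{F}_i,\overline{F}_{i+1}$ so that $P_{2(j+k+1),n}$ cancels. Therefore, by Prop. \ref{P4} and the reduction argument, all of these remain integrals after setting $x_{2(j+k+l+1)}=0$. Their mutual involutivity is inherited from Prop. \ref{P4}, since brackets of functions independent of $x_{n+1}$ are unaffected by the reduction.

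The new function $K$ replaces the lost $\overline{F}$-type integral. I would treat it as a product of second integrals and compute its cofactor. The factors $(x_{2i-1}/x_{2i})^{b_i/a_i}$ are the first part of the Casimir, which has cofactor $0$. Writing $\C$ this way, $K$ equals $\C$ times $(x_1+x_2)^{a_{j+3k+1}/c_1}P_{2(j+k+1),n}^{b_1/c_1}$ times the inverse of the middle-block factor $\prod_{i=2j+1}^{2(j+k)+1}x_i^{q_i}$. Using the cofactors of $P_{2i-1,2i}$ and $P_{2(j+k+1),n}$ from the proof of Prop. \ref{P4}, together with a cofactor computation for the middle-block monomial, I expect the cofactor of $K$ to vanish. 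This works because the middle-block monomial is essentially a function of $P_{2j+1,2(j+k)+1}$ and $G^j$-type data, so effectively $K\sim\overline{F}_1$ expressed without $x_{n+1}$.

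For commutativity of $K$ with the others, I would use the scaling argument from the end of Prop. \ref{P4}. The $G$'s are homogeneous of weight zero, and each $x_i$ with $i\le 2j$, as well as $P_{2i-1,2i}$, acts on them by scaling, so only the middle/last-block monomial of $K$ matters; that part is handled via $\C$ being a Casimir. For $\{K,\overline{Q}_i\}$, I would use the bracket relations listed for the $P$'s in Prop. \ref{P4}.

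\textbf{Main obstacle.} I expect the hardest step to be the correct identification of $K$ and the verification that $\{K,\cdot\}=0$, along with the bookkeeping of the product formulas in $q_i$ (signs and the $a_{j+3k+3l-1}$ normalisation). My first move would be to express $K$ as $\C^{\pm1}$ times the $x_n$-free combination of $\overline{F}_1$ and $G$-type integrals, which would reduce everything to results already proved.
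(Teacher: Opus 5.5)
\textbf{Verdict: there is a gap, in the involutivity of $K$; the rest of your plan follows the paper.}

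Your Casimir check, the inheritance of $\overline{Q}_i$, $G^j_i$ and $G^{j,k}_i$ ($i\le l-1$), and the cofactor computation for $K$ all follow the paper. For the Casimir, rows $i\le 2j$ reduce to $\sum_{i=2j+1}^{2(j+k)+1}q_i=1$ and $\sum_{i=2(j+k+1)}^{n}q_i=0$, and the other rows reduce to the relations of Prop.~\ref{P7}. Your rewriting $K=\C\,P_{1,2}^{a_{j+3k+1}/c_1}P_{2(j+k+1),n}^{b_1/c_1}\prod_{i=2j+1}^{2(j+k)+1}x_i^{-q_i}$ is harmless, since by \eqref{qqqqq} the middle monomial has the same cofactor as $P_{2j+1,2(j+k)+1}$. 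There are two slips:
\begin{itemize}
\item the $\J$-block contributes $-b_m$ to both rows $2m-1$ and $2m$, never $-c_m$;
\item the $q$'s on the middle and last blocks are the shifted \eqref{cq3} times one common factor $1/a_{j+3k+3l-1}$, not block-dependent factors. A block-dependent rescaling would not give a null vector.
\end{itemize}

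The gap is clearest for $G^{j,k}_1,\ldots,G^{j,k}_{l-1}$. You argue that after the scalings only the middle/last-block monomial of $K$ matters and that $\C$ handles it. This overlooks the factor $P_{2(j+k+1),n}^{b_1/c_1}$, which is neither a monomial nor a scaling on weight-zero functions of the last block. Carrying out the scalings with the block sums \eqref{qqqqq} gives the paper's identity $\{x_i,K\}=\frac{b_1}{c_1}\frac{K}{P}\{x_i,P\}$ for $i\ge 2(j+k)+2$, where $P=P_{2(j+k+1),n}$. So you still need $\{P_{2(j+k+1),n},G^{j,k}_t\}=0$ for $t\le l-1$ in the odd system. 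This is true by \eqref{KiP} and \eqref{Kix} applied to the shifted last block, but it must be shown.

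There is a similar omission for $\overline{Q}_i$: the $P$-brackets of Prop.~\ref{P4} do not cover the monomial factors of $K$. The paper instead checks directly that $\{x_1+x_2,K\}=0$ and $\{x_i,K\}=0$ for $3\le i\le 2(j+k)+1$.

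Your remark $K\sim\overline{F}_1$ is correct and would settle all of this at once. Choose exponents so that the partial sums $P_{2j+1,2(j+h)+1}$ with $1\le h<k$ cancel in a product of powers of $G^j_1,\ldots,G^j_k$. This gives $\prod_{i=2j+1}^{2(j+k)+1}x_i^{q_i}=P_{2j+1,2(j+k)+1}\,\Phi$, with $\Phi$ a product of powers of the $G^j_h$. Hence $K=\C\,(\overline{F}_1|_{x_{n+1}=0})^{1/c_1}\,\Phi^{-1}$.

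However, $\overline{F}_1$ depends on $x_{n+1}$, so the principle you cite (brackets of $x_{n+1}$-independent functions survive the reduction) does not apply. What you need is that $\{x_{n+1}=0\}$ is a Poisson submanifold of the even system, since there $\{x_i,x_{n+1}\}=A_{i,n+1}x_ix_{n+1}$. Then restriction commutes with the bracket for functions regular on it. With that stated, invariance of $K$ and all its involutions follow from Prop.~\ref{P4}, which is a cleaner route than the paper's direct computation.
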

\begin{proof}  The commuting integrals $\overline{Q}_1,\ldots,\overline{Q}_{j-1},G^j_1,\ldots,G^j_k,G^{j,k}_1,\ldots,G^{j,k}_{l-1}$ are inherited from Prop. \ref{P4}. The $q$-vector in \eqref{cq4} is in the null-space of $\A$, due to
\begin{equation} \label{qqqqq}
\begin{cases}
   - \sum_{h=1}^{2j} A_{i,h}q_h = \sum_{h=2j+1}^{2j+2k+1} A_{i,h}q_h = b_{\lfloor(i+1)/2\rfloor} & i \leq 2j \\
\sum_{h=2(j+k+1)}^n A_{i,h}q_h = 0 & i \leq 2j \\
\sum_{h=1}^{2j} A_{i,h}q_h = 0 & i > 2j \\
\sum_{h=2j+1}^{2(j+k)+1} A_{i,h}q_h = \sum_{h=2(j+k+1)}^n A_{i,h}q_h = 0 & 2j < i \leq 2(j+k)+1 \\
- \sum_{h=2j+1}^{2(j+k)+1} A_{i,h}q_h = \sum_{h=2(j+k+1)}^n A_{i,h}q_h = a_{j+3k+1} & i > 2(j+k)+1.
\end{cases}
\end{equation}
The function $K$ is an integral, as we have
\begin{align*}
C[x_{2i-1}/x_{2i}]&=a_i(x_{2i-1}+x_{2i})\\
C[x_1+x_2]&=\sum_{h=3}^nA_{1,h}x_h\\
C[x_{2(j+k+1)}+\cdots+x_n]&=-a_{j+3k+1}(x_{2j+1}+\cdots+x_{2(j+k)+1})-\sum_{i=1}^j c_i(x_{2i-1}+x_{2i})\\
C[\prod_{i=2(j+k+1)}^n x_i^{q_i}]&=-a_{j+3k+1}(x_{2(j+k+1)}+\cdots+x_{n}),
\end{align*}
and hence
\begin{align*}
C[K]&=\sum_{i=1}^j b_i(x_{2i-1}+x_{2i}) + \frac{a_{j+3k+1}}{c_1} \Big( \sum_{i=2}^j \frac{b_1c_i-b_ic_1}{a_{j+3k+1}}(x_{2i-1}+x_{2i})+b_1(x_{2j+1}+\cdots\\
&\ \ \ 
+x_{2(j+k)+1})+c_1(x_{2(j+k+1)}+\cdots+x_{n})\Big) - \frac{b_1}{c_1}\Big(a_{j+3k+1}(x_{2j+1}+\cdots\\
&\ \ \ +x_{2(j+k)+1})+\sum_{i=1}^j c_i(x_{2i-1}+x_{2i})\Big) -a_{j+3k+1} (x_{2(j+k+1)}+\cdots+x_{n})\\
&=0.
\end{align*}
The integral $K$ commutes with the integrals $\overline{Q}_1,\ldots,\overline{Q}_{j-1},G^j_1,\ldots,G^j_k$, because
\[
\{x_1+x_2,K\}=0,\quad \{x_i,K\}=0,\quad i=3,\ldots,2(j+k)+1,
\]
and it commutes with the integrals $G^{j,k}_1,\ldots,G^{j,k}_{l-1}$ due to
\[
\{x_i,K\}=\frac{b_1}{c_1}\frac{K}{x_{2(j+k+1)}+\cdots+x_n}x_i\sum_{m=2(j+k+1)}^n
A_{i,m}x_m,\quad i=2(j+k)+2,\ldots,n.
\]
\end{proof}
We note that the type $[j,k,0]$ system arises as a special case of the above.
\section{Nonhomogeneous Liouville integrable LV systems} \label{Snonh}
The family of Lotka-Volterra equations \eqref{LV} is Hamiltonian, or Poisson, if $\A$ is skew and $\b\in\text{Im}(\A)$. In terms of the Poisson bracket \eqref{PBA}, the system (\ref{LV}) can be written as $\dot{\x}=\{\x,H\}$, where the Hamiltonian is given by \cite{FerOli,MQ}
\begin{equation}\label{hamLV}
H = \sum_{i=1}^{n} (x_i + k_i \ln x_i), \qquad \text{ with } \A\k=\b.
\end{equation}

A function $P(\x)$ is homogeneous of weight $d$ if $P(\lambda\x)=\lambda^dP(\x)$.
The following theorem is an improved version of \cite[Theorem 3.1]{KMQ}. Let $I=\{i_1,\ldots,i_k\}\subset \mathbb{N}_n$, and denote $\x_I=(x_{i_1},\ldots,x_{i_k})$. 
\begin{theorem} \label{GODEs}
Let $P(\x_I)$ be a homogeneous second integral of weight $d$ with cofactor $C(\x)$ for the system of ODEs $\dot{\x}=\f(\x)$. Then $P(\x_I)$ is a second integral for the system $\dot{\x}=\f(\x)+r(\x,t)\x$, with cofactor $C(\x)+dr(\x,t)$, where $r$ is a scalar function of $\x,t$.
\end{theorem}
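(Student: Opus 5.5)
The plan is a direct computation of the time derivative of $P$ along the perturbed flow, using Euler's identity for homogeneous functions. Since $P$ depends only on the variables $\x_I$, along any flow $\dot{\x}=\mathbf{g}(\x,t)$ we have
\[
\dot P=\sum_{h=1}^{k}\frac{\partial P}{\partial x_{i_h}}\,g_{i_h}(\x,t).
\]
We apply this with $\mathbf{g}=\f+r\x$ and split the sum into the $\f$-part and the $r\x$-part.

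The $\f$-part is $\sum_{h}\frac{\partial P}{\partial x_{i_h}}f_{i_h}(\x)$. This is exactly the derivative of $P$ along the unperturbed system, so by hypothesis it equals $C(\x)P(\x_I)$. The remaining part is
\[
r(\x,t)\sum_{h=1}^{k}x_{i_h}\frac{\partial P}{\partial x_{i_h}}(\x_I).
\]
Because $r$ is a scalar function, it factors out of the sum. We then differentiate $P(\lambda\x_I)=\lambda^dP(\x_I)$ with respect to $\lambda$ at $\lambda=1$; this gives Euler's relation $\sum_h x_{i_h}\partial_{x_{i_h}}P=dP$. So the second part equals $d\,r(\x,t)P$.

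Adding the two parts gives $\dot P=(C(\x)+d\,r(\x,t))P$. This says precisely that $P$ is a second integral of the perturbed system with cofactor $C+dr$.

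The only point needing care is that homogeneity is meant with respect to the variables $\x_I$ only. The argument then needs $P$ to be independent of the other coordinates, so that the perturbation $r x_i$ for $i\notin I$ does not contribute; this is guaranteed by $P=P(\x_I)$. We should also assume $P$ is differentiable on the relevant domain, for instance the positive orthant when $P$ involves non-integer powers, so that Euler's identity applies. There is no real obstacle beyond noting that $r$ may depend on $t$ and on all of $\x$; since it enters only as a multiplicative scalar in front of the Euler sum, this causes no difficulty.
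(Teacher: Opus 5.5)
Your proof is correct. The paper states this theorem without proof, presenting it as an improved version of Theorem 3.1 of \cite{KMQ}, and the chain rule combined with Euler's identity $\sum_{i\in I}x_i\,\partial P/\partial x_i=dP$ is exactly the computation the result rests on. Your observation that only the components $f_i+r x_i$ with $i\in I$ enter also justifies the paper's remark that the equations for $\x_{I^c}$ may be altered arbitrarily; the paper relies on this when it keeps integrals independent of $x_n$ under inhomogeneous terms $\b=(r,\ldots,r,s)$.
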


It follows that if a homogeneous LV system admits an integral $F$ which is homogeneous of weight 0, then the inhomogeneous LV system obtained by adding $\b=r\x$ to the right-hand-side admits the same integral. Note that, in Theorem \ref{GODEs}, the only relevant equations are the ones for $\x_I$, the equations for the complementary variables $\x_{I^c}$ can be altered differently.

\subsection{Even-dimensional inhomogeneous Liouville integrable LV systems}
\subsubsection{The even-dimensional inhomogeneous $[j,0,0]$ LV system} \label{SNj00}
The $n=2(j+1)$ dimensional inhomogeneous LV system of type $[j,0,0]$, of the form \eqref{LV} with constant $r_i=r$, has matrix
\begin{equation} \label{mnhj00ei}
\A=\begin{pmatrix}
J_1 & K_{1,2} & K_{1,3} & \cdots & K_{1,j} & L_1 \\
-K_{1,2} & J_2 & K_{2,3} & \cdots & K_{2,j} & L_2 \\
-K_{1,3} & - K_{2,3} & J_3 & \cdots & K_{3,j} & L_3 \\
\vdots & \vdots & \vdots & \ddots & \vdots & \vdots \\
-K_{1,j} & -K_{2,j} & -K_{3,j} & \cdots & J_{j} & L_{j} \\
-L_1^T & -L_2^T & -L_3^T & \cdots & -L_{j}^T & J_{j+1}
\end{pmatrix}
\end{equation}
where
\begin{equation} \label{JKLeij00}
J_i=\begin{pmatrix}
0 & a_i \\
-a_i & 0
\end{pmatrix},\
K_{i,k}=(c_i-c_k)\1^{2,2},\
L_i=\begin{pmatrix}
c_i - a_{j+1} & c_i \\
c_i - a_{j+1} & c_i
\end{pmatrix}.
\end{equation}
\begin{proposition}
The $n$-dimensional inhomogeneous LV system of type $[j,0,0]$ is a Hamiltonian system, with bracket \eqref{PBA} and Hamiltonian
\[
H=\sum_{i=1}^n x_i + \frac{r}{a_{j+1}} \ln\left(\frac{x_n}{x_{n-1}}\right),
\]
and pairwise commuting integrals
\[
F_i=(x_{2i-1} + x_{2i})^{a_{j+1}}x_{n-1}^{-c_i}x_n^{(c_i - a_{j+1})},\quad i=1,2,\ldots, j.
\]
\end{proposition}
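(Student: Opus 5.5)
The plan is to reduce everything to Proposition \ref{P1} and Theorem \ref{GODEs}, by noticing that the matrix \eqref{mnhj00ei} is exactly the matrix \eqref{A1} of the homogeneous type $[j,0,0]$ system under the specialisation
\[
b_i=c_i-a_{j+1},\qquad i=1,\ldots,j.
\]
The blocks $L_i$ in \eqref{JKLeij00} are then the $i$-th $2\times 2$ block of $\L_j$ in \eqref{Lj}. For the off-diagonal blocks I will check that \eqref{aik} becomes
\[
a_{i,k}=\frac{(c_i-a_{j+1})c_k-(c_k-a_{j+1})c_i}{a_{j+1}}=c_i-c_k,
\]
so that $\K^i_k(j+1)=K_{i,k}$. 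Under the same substitution the integrals $F_i$ of \eqref{fki} become precisely the stated $F_i=(x_{2i-1}+x_{2i})^{a_{j+1}}x_{n-1}^{-c_i}x_n^{c_i-a_{j+1}}$.

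Step 1 (Hamiltonian structure). By \eqref{hamLV} I need a vector $\k$ with $\A\k=\b=r\mathbf{1}$. I will take $\k=(0,\ldots,0,-r/a_{j+1},r/a_{j+1})^T$, which reproduces exactly the logarithmic term of the stated $H$. For a row $i\le 2j$ only the last two columns contribute, giving $(c_i-a_{j+1})(-r/a_{j+1})+c_i(r/a_{j+1})=r$. Row $n-1$ gives $a_{j+1}\cdot r/a_{j+1}=r$, and row $n$ gives $-a_{j+1}\cdot(-r/a_{j+1})=r$. Hence $\dot{\x}=\{\x,H\}$ with the bracket \eqref{PBA}.

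Step 2 (invariance of the $F_i$). For $r=0$, Proposition \ref{P1} with the above specialisation shows that each $F_i$ is an integral, i.e.\ a second integral with cofactor $0$. Its weight is $a_{j+1}-c_i+(c_i-a_{j+1})=0$. The inhomogeneous system is the homogeneous one plus $r\x$. Theorem \ref{GODEs}, applied with the constant scalar function $r(\x,t)=r$, then gives cofactor $0+0\cdot r=0$. So the $F_i$ remain integrals.

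Step 3 (involutivity and counting). The Poisson bracket \eqref{PBA} depends only on $\A$, not on $\b$. Therefore $\{F_i,F_k\}=0$ is literally the computation \eqref{fifk} of Proposition \ref{P1}. Each $F_i$ Poisson commutes with $H$ because it is an integral of the flow of $H$. This yields $j+1=n/2$ commuting integrals, matching the full-rank count in the definition of Liouville integrability.

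The only point needing care is functional independence, since $H$ now contains logarithms. I would argue it generically. The $F_i$ are independent because each involves the distinct factor $x_{2i-1}+x_{2i}$. The differential of $H$ contains $dx_{2i-1}$ and $dx_{2i}$ with equal coefficients $1$, while every $dF_i$ depends on $x_{2i-1},x_{2i}$ only through their sum. So independence of $H$ from the $F_i$ reduces to checking a single minor in the variables $x_{n-1},x_n$ together with one pair $x_{2i-1},x_{2i}$. I expect this rank check to be the main, though mild, obstacle. I would also note that full rank of $\A$, and hence the absence of Casimirs, holds generically.
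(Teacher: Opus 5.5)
Your proposal is correct and follows the same route as the paper: specialise $b_i=c_i-a_{j+1}$ in \eqref{A1} and \eqref{fki}, take $\k=r(0,\ldots,0,-1,1)/a_{j+1}$ so that $\A\k=r\1$, and apply Theorem \ref{GODEs} using that the $F_i$ are now homogeneous of weight $0$, with involutivity inherited from Proposition \ref{P1}. Your explicit checks that $a_{i,k}=c_i-c_k$ and that $\A\k=r\1$, and your remark on functional independence, fill in details the paper's short proof leaves to the reader.
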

\begin{proof}
With $\k=r(0,\ldots,0,-1,1)/a_{j+1}$ we have $(\A\k)_i=r$ for all $i$. The matrix \eqref{mnhj00ei} is obtained from \eqref{A1} by substitution of $b_i=c_i-a_{j+1}$. Similarly, the commuting integrals are obtained from \eqref{fki}. The functions $F_i$ have become homogeneous of weight 0 and hence, due to Theorem \ref{GODEs}, they remain integrals after adding $\b=r\x$. 
\end{proof}
\subsubsection{The even-dimensional inhomogeneous type $[j,k,l]$ LV system} \label{SNjkl}
The integrals \eqref{gjki} are homogeneous of weight 0. The only adjustment to be made to the $[j,k,l]$ LV system is the substitution $b_i=c_i-a_{j+3k+1}$, $i=1,\ldots,j$, so that the integrals \eqref{Fb}, become
\begin{equation} \label{Fh}
\overline{F}_i=P_{2i-1,2i}^{a_{j+3k+1}}P_{2j+1,2(j+k)+1}^{-c_{i}}P_{2(j+k+1),n}^{c_i-a_{j+3k+1}}.
\end{equation}
\begin{proposition}
The inhomogeneous $n=2(j+k+l+1)$ dimensional LV system of type $[j,k,l]$ has matrix \eqref{m35}, in which $b_i=c_i-a_{j+3k+1}$, $i=1,\ldots,j$ and a Hamiltonian of the form \eqref{hamLV}, with
\[
k_i=\begin{cases}
0  &i\leq 2j,\\
-\frac{r}{a_{j+3k+1}} \prod_{h=1}^k \frac{a_{j+3h}}{a_{j+3h-2}-a_{j+3h-1}+a_{j+3h}} &i=2j+1,\\
\frac{ra_{j+3m-1}}{a_{j+3k+1}a_{j+3m}} \prod_{h=m}^k \frac{a_{j+3h}}{a_{j+3h-2}-a_{j+3h-1}+a_{j+3h}} &i=2(j+m), m=1,\ldots,k,\\
-\frac{ra_{j+3m-2}}{a_{j+3k+1}a_{j+3m}} \prod_{h=m}^k \frac{a_{j+3h}}{a_{j+3h-2}-a_{j+3h-1}+a_{j+3h}} &i=2(j+m)+1, m=1,\ldots,k,\\
\frac{r}{a_{j+3k+1}} \prod_{h=k+1}^{k+l} \frac{a_{j+3h}}{a_{j+3h-2}-a_{j+3h-1}+a_{j+3h}} &i=2(j+k+1),\\
-\frac{ra_{j+3(k+m)}}{a_{j+3k+1}a_{j+3(k+m)+1}} \prod_{h=k+m}^{k+l} \frac{a_{j+3h}}{a_{j+3h-2}-a_{j+3h-1}+a_{j+3h}} &i=2(j+k+m)+1, m=1,\ldots,l,\\
\frac{ra_{j+3(k+m)-1}}{a_{j+3k+1}a_{j+3(k+m)+1}} \prod_{h=k+m}^{k+l} \frac{a_{j+3h}}{a_{j+3h-2}-a_{j+3h-1}+a_{j+3h}} &i=2(j+k+m+1), m=1,\ldots,l.
\end{cases}
\]
It admits $j+k+l$ integrals $\overline{F}_1,\ldots,\overline{F}_j,G^j_1,\ldots,G^j_k,G^{j,k}_1,\ldots,G^{j,k}_l$, which are mutually in involution.
\end{proposition}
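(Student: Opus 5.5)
Three things have to be checked: the vector $\k$ solves $\A\k=\b$ with $\b=r(1,\ldots,1)^T$; the integrals survive adding $\b$; and they remain in involution. The matrix is \eqref{m35} with the specialisation $b_i=c_i-a_{j+3k+1}$. This is a polynomial substitution into the parameters of Prop. \ref{P4}, so all identities proved there persist. In particular $\overline{F}_1,\ldots,\overline{F}_j,G^j_1,\ldots,G^j_k,G^{j,k}_1,\ldots,G^{j,k}_l$ are integrals of the homogeneous system and pairwise Poisson commute with respect to \eqref{PBA}. The bracket depends only on $\A$, so involutivity needs no further work.

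\textbf{Homogeneity.} The $G$-integrals \eqref{gjki} have total exponent $a-(a'-a''+a''')\ldots$. Explicitly, for $G_i$ the sum is $a_{3i-1}+(a_{3i-2}-a_{3i-1}+a_{3i})-a_{3i}-a_{3i-2}=0$, so they have weight $0$. For \eqref{Fh} the weight is $a_{j+3k+1}-c_i+c_i-a_{j+3k+1}=0$. Theorem \ref{GODEs} with $r(\x,t)=r$ then shows each is still a second integral of the inhomogeneous system. Its cofactor is $0+0\cdot r=0$, so it is an integral.

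\textbf{The vector $\k$ (main obstacle).} Here I would reuse the null-vector computation of Prop. \ref{P8}, applied to the even-dimensional matrix. Write $\k=\frac{r}{a_{j+3k+1}}(0,\ldots,0,-\mathbf{u},\mathbf{v})$, with $\mathbf{u}$ supported on the block $2j+1,\ldots,2(j+k)+1$ and $\mathbf{v}$ on the block $2(j+k+1),\ldots,n$. The $\mathbf{u}$-entries are exactly the $q_i$ of \eqref{cq4} on that block. Proposition \ref{P6}/\eqref{qqq} gives that $\mathbf{u}$ is in the kernel of the $\N$-block $\N^{2k+1}_{j+1}$ (the odd-size $[0,k,0]$ Casimir), normalised so that $\sum_i u_i=1$. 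The same holds for $\mathbf{v}$ with respect to the odd-size block $\N^{2l+1}$, of size $2l+1$.

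I then verify $(\A\k)_i=r$ row-block by row-block:
\begin{itemize}
\item For $i$ in the middle block, the internal sum vanishes by the kernel property. The off-diagonal block is constant $a_{j+3k+1}$, so the contribution is $\frac{r}{a_{j+3k+1}}a_{j+3k+1}\sum v=r$.
\item For $i$ in the last block the situation is symmetric: it gives $-a_{j+3k+1}\cdot(-\sum u)\cdot\frac{r}{a_{j+3k+1}}=r$.
\item For $i\le 2j$, the row of $\L_{j,k,l}$ gives $\frac{r}{a_{j+3k+1}}(-b_m\sum u+c_m\sum v)=\frac{r}{a_{j+3k+1}}(c_m-b_m)=r$. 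This is exactly where the substitution $b_m=c_m-a_{j+3k+1}$ is needed.
\end{itemize}

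The delicate part is checking the normalisations $\sum u=\sum v=1$ and the kernel property from the explicit products. I would do this by induction on $k$ (respectively $l$), using the identity $\sum_{h<i}q_h=\frac{a_{3i/2}}{a_{3i/2-2}}q_{i+1}$ from the proof of Prop. \ref{P6}, shifted appropriately. Finally, $H$ from \eqref{hamLV} generates the system since $\A\k=\b$, which completes the proof.
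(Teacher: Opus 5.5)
Your proposal follows the paper's proof. The paper also obtains $\overline F_i$, $G^j_i$, $G^{j,k}_i$ and their involutivity from Prop.~\ref{P4} via $b_i=c_i-a_{j+3k+1}$, and it uses weight zero together with Theorem~\ref{GODEs}. It checks $\A\k=\b$ by exactly the two block sums your three bullets compute. Your null-vector and normalisation reasoning simply makes explicit what the paper leaves implicit.

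The one step that fails as written is ``the same holds for $\mathbf v$''. With the $k_i$ exactly as printed, the product in the three last-block cases is off by one index. At $h=k+1$ the factor contains the coupling constant $a_{j+3k+1}$, which is not a parameter of the block $\N^{2l+1}_{j+3k+2}$. So the printed $\mathbf v$ is not the normalised null vector of that block:
\begin{itemize}
\item For $[0,1,1]$ ($j=0$, $k=l=1$), $\mathbf v$ is proportional to $(a_7,-a_6,a_5)$, hence lies in the kernel. But $\sum v=\frac{a_6(a_5-a_6+a_7)}{a_7(a_4-a_5+a_6)}\neq 1$, so $(\A\k)_1=r\sum v\neq r$.
\item For $l\geq 2$ the ratios between successive triples are also wrong, and even the kernel property fails.
\end{itemize}

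To make your argument correct, define $\mathbf v$ as the image of $\mathbf u$ under $\sigma_a^{3k+1}$ with $k$ replaced by $l$. Concretely, in the last three cases replace the factor $\frac{a_{j+3h}}{a_{j+3h-2}-a_{j+3h-1}+a_{j+3h}}$ by $\frac{a_{j+3h+1}}{a_{j+3h-1}-a_{j+3h}+a_{j+3h+1}}$; the prefactors are already right. The paper's own block-sum identities also hold only under this reading. It matches the paper's explicit $[0,1,1]$ Hamiltonian in Section~\ref{see6}, where $k_4=\frac{ra_7}{a_4(a_5-a_6+a_7)}$. With this $\mathbf v$, your induction gives both the kernel property and $\sum v=1$, and the proof is complete.
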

\begin{proof}
The proof is similar to the previous one. The relation $\A\k=\b$ follows from
\begin{align*}
  \sum_{h=2j+1}^{2(j+k)+1} A_{i,h}k_h &= \begin{cases}
r-rc_{\lfloor (i+1)/2 \rfloor}/a_{j+3k+1}  &i\leq 2j,\\
0  &2j<i\leq 2(j+k)+1,\\
r  & 2(j+k)+1<i \leq n,  
  \end{cases} \\
  \sum_{h=2(j+k+1)}^n A_{i,h}k_h &= \begin{cases}
rc_{\lfloor (i+1)/2 \rfloor}/a_{j+3k+1}  &i\leq 2j,\\
r  &2j<i\leq 2(j+k)+1,\\
0  & 2(j+k)+1<i \leq n.  
  \end{cases}
\end{align*}
\end{proof}
We note that the LV systems of type $[0,k,l]$ are a special case of the $[j,k,l]$ system. 

\subsection{Odd-dimensional inhomogeneous LV systems}
For odd-dimensional $[j,k,l]$ LV systems, the vector $\b=r\1$ is not in the image of $\A$. We will consider vectors that are in the image of $\A$ of the form $\b=(r,\ldots,r,s)$ where $s$ may depend on the parameters of the system. This means that, according to Theorem \ref{GODEs}, homogeneous integrals of weight 0 that do not depend on $x_n$ survive inhomogenisation.  

\subsubsection{The odd-dimensional inhomogeneous $[j,0,0]$ LV system}
The $n=2j+1$ dimensional inhomogeneous LV system of type $[j,0,0]$ has matrix
\begin{equation} \label{mnhj00oi}
\A=\begin{pmatrix}
J_1 & K_{1,2} & K_{1,3} & \cdots & K_{1,j} & L_1 \\
-K_{1,2} & J_2 & K_{2,3} & \cdots & K_{2,j} & L_2 \\
-K_{1,3} & - K_{2,3} & J_3 & \cdots & K_{3,j} & L_3 \\
\vdots & \vdots & \vdots & \ddots & \vdots & \vdots \\
-K_{1,j} & -K_{2,j} & -K_{3,j} & \cdots & J_{j} & L_{j} \\
-L_1^T & -L_2^T & -L_3^T & \cdots & -L_{j}^T & J_{j+1}
\end{pmatrix}
\end{equation}
where
\begin{equation} \label{JKLo}
J_i=\begin{pmatrix}
0 & a_i \\
-a_i & 0
\end{pmatrix},\
K_{i,k}=(c_i-c_k)\1^{2,2},\
L_i=a_{j+1}\1^{2,1}.
\end{equation}
It is obtained by deleting the last row and column of \eqref{A1}, and the substitution $b_i=a_{j+1}$, $i=1,\ldots,j$. We obtain 1 Casimir, $\C=\prod_{i=1}^n x_i^{q_i}$, with 
\[
q_i=\begin{cases}
1/a_m & i=2m-1,\quad m=1,\ldots,j+1, \\
-1/a_m & i=2m,\quad m=1,\ldots,j.
\end{cases}
\]
The vector $\b=r\1$ is not in the image of $\A$, but we can take
\[
r_i=\begin{cases} r & i < n \\
0 & i=n. \end{cases}
\]
The Hamiltonian is $H=\sum_{i=1}^{2j+1} x_i + k_i \ln(x_i)$, with
\[
k_i=\begin{cases}
-rq_i& i<n\\
0 & i=n.
\end{cases}
\]
and the functions
\begin{align}
R_i&=\C\left(\frac{F_i}{F_{i+1}}\right)^{1/(a_{j+1}(c_i-c_{i+1}))} \notag \\
&=\left(\frac{x_{2 i -1}+x_{2 i}}{x_{2i+1}+x_{2i+2}}
\right)^{1/(c_i-c_{i+1})} \prod_{i=1}^j \left(
\frac{x_{2 i -1}}{x_{2 i}}\right)^{1/a_i} \label{Ri}
\end{align}
are pairwise commuting integrals that do not depend on the variable $x_n$. This yields the following result.
\begin{proposition} \label{noj}
The inhomogeneous $n=2j+1$ dimensional $[j,0,0]$ LV system is Liouville integrable.
\end{proposition}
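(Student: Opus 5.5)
The plan is to verify the ingredients of the Definition of Liouville integrability one at a time: the Poisson structure, the Hamiltonian, the Casimir, and $(n-1)/2=j$ commuting, functionally independent integrals.

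\emph{Hamiltonian structure.} I will start from the matrix \eqref{mnhj00oi}. It is antisymmetric, so \eqref{PBA} is a Poisson bracket. It has rank $2j$, and its null vector is the $q$-vector of $\C$. I will check $\A q=0$ directly, following the proof of Prop. \ref{P5} with $b_k=a_{j+1}$. The $J_m$ block contributes $a_m(-1/a_m)$ or $-a_m(1/a_m)$. The $K$ blocks cancel pairwise, since $q_{2m-1}+q_{2m}=0$. The last column contributes $a_{j+1}q_n=1$. The last row gives $-a_{j+1}\sum_{m\le j}(q_{2m-1}+q_{2m})=0$.

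Next I need $\A\k=\b$ with $k_i=-rq_i$ for $i<n$ and $k_n=0$. Since $\A q=0$, this is the same as $\A(\k+rq)=\b$, where $\k+rq=(0,\ldots,0,r/a_{j+1})$. The last column of $\A$ is $(a_{j+1},\ldots,a_{j+1},0)^T$, so row $i<n$ gives $r$ and row $n$ gives $0$, as required. Then \eqref{hamLV} shows that the system is Hamiltonian with the stated $H$, and $\C$ is a Casimir.

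\emph{Integrals.} In the homogeneous even-dimensional system \eqref{A1} with $b_i=a_{j+1}$, the $F_i$ of \eqref{fki} are integrals in involution. Setting $x_{n+1}=0$, the ratio $F_i/F_{i+1}=P_{2i-1,2i}^{a_{j+1}}P_{2i+1,2i+2}^{-a_{j+1}}x_n^{c_{i+1}-c_i}$ no longer involves $x_{n+1}$. So it survives the reduction as an integral of the odd homogeneous system, and these ratios still Poisson commute. This is the argument of Prop. \ref{P5}.

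Multiplying by the Casimir to the power indicated in \eqref{Ri} removes the $x_n$-dependence. The exponent of $x_n$ in $\C$ is $q_n=1/a_{j+1}$, which cancels the exponent $-1/a_{j+1}$ coming from $(F_i/F_{i+1})^{1/(a_{j+1}(c_i-c_{i+1}))}$. The resulting $R_i$ is homogeneous of weight $0$, because the two linear factors carry opposite powers and the $x_{2m-1}/x_{2m}$ factors have weight zero. Brackets with $\C$ vanish, so the $R_i$, $i=1,\ldots,j-1$, still commute pairwise and are integrals of the homogeneous reduced system. By Theorem \ref{GODEs}, applied with the set $I=\{1,\ldots,n-1\}$ and the scalar $r(\x,t)=r$, adding $r x_i$ to the first $n-1$ equations preserves them as integrals (the cofactor $0+0\cdot r$ vanishes). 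The equation for $x_n$ is irrelevant, since $x_n$ does not appear in the $R_i$.

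\emph{Counting and main obstacle.} This gives one Casimir and $j$ integrals in involution, namely $H,R_1,\ldots,R_{j-1}$. It remains to check that $\{R_i,H\}=0$, which holds because the $R_i$ are integrals. The step I expect to be hardest is functional independence. My first attempt is to look at the leading dependence: $R_i$ involves $P_{2i-1,2i}$ and $P_{2i+1,2i+2}$ through a triangular (chain) pattern, so the differentials $dR_1,\ldots,dR_{j-1}$ are independent at a generic point. Including $dH$ (which involves $x_n$) and $d\C$ should then give $j+1$ generic independent differentials. I will confirm this by evaluating the Jacobian at a convenient point, under the genericity assumptions $a_m\ne0$ and $c_i\ne c_{i+1}$.
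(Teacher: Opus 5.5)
Your proposal is correct and follows the same route as the paper. The paper justifies the proposition through the discussion that precedes it: the Casimir $\C$ with $q_n=1/a_{j+1}$, the Hamiltonian with $k_i=-rq_i$ for $i<n$ and $k_n=0$, and the integrals $R_i=\C(F_i/F_{i+1})^{1/(a_{j+1}(c_i-c_{i+1}))}$, which are homogeneous of weight zero, do not involve $x_n$, and therefore survive the term $\b=(r,\ldots,r,0)$ by Theorem \ref{GODEs}.

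You also supply details the paper leaves implicit: the explicit checks of $\A q=0$ and of $\A\k=\b$ via $\k+rq=(0,\ldots,0,r/a_{j+1})$, and the sketch of functional independence. One further point in your favour: deriving the matrix from \eqref{A1} with $b_i=a_{j+1}$, as you do, gives off-diagonal blocks $(c_k-c_i)\1^{2,2}$ rather than the $(c_i-c_k)\1^{2,2}$ printed in \eqref{JKLo}. It is with your sign that \eqref{Ri} is actually an integral.
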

We note that there is one additional DP, $P_{1,n-1}$, which yields an extra integral,
\[
\frac{P_{1,n-1}}{P_{1,2}}\prod_{i=2}^j \left(
\frac{x_{2 i -1}}{x_{2 i}}\right)^{(c_i-c_1)/a_i}.
\] 

\subsubsection{The odd-dimensional inhomogeneous $[0,k,0]$ LV system}
To the homogeneous $n=2k+1$ dimensional $[0,k,0]$ LV system we can add a inhomogeneous term with
\begin{equation} \label{bbb}
r_i=\begin{cases} r & i<n \\
r\frac{a_{3k-1} - a_{3k}}{a_{3k-2}}
 & i=n. \end{cases}
\end{equation}
The Hamiltonian is
\[
H=\sum_{i=1}^{2j+1} x_i + r\frac{a_{3k} - a_{3k-1}}{a_{3k-2}a_{3k}} \ln(x_{n-1})
+\frac{r}{a_{3k}}  \ln(x_{n}).
\]
Note that the highest index $h$ of $x_h$ in the function $G_j$ as defined by \eqref{Kj} is $2j+1<n$ for $j\leq k-1$.
\begin{proposition} \label{nok}
The inhomogeneous $n=2k+1$ dimensional $[0,k,0]$ LV system is Liouville integrable, with $k-1$ pairwise commuting integrals $G_j$, $j=1,2,\ldots k-1$, as given by Eq. \eqref{Kj}, and Casimir \eqref{cq2}.
\end{proposition}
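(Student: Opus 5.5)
The plan has four steps. First we check that the system is Hamiltonian, by verifying $\A\k=\b$. Next we observe that the Casimir \eqref{cq2} and the Poisson commutativity of the $G_j$ carry over from the homogeneous case. Then we show, via Theorem \ref{GODEs}, that the $G_j$ survive the inhomogeneous term. Finally we count: $n=2k+1$ with one Casimir needs $(n-1)/2=k$ commuting, functionally independent integrals. These are $H$ together with $G_1,\ldots,G_{k-1}$. Functional independence is inherited from Prop. \ref{P6}, since the $G_j$ involve successively new variables $x_{2j},x_{2j+1}$ and $H$ now contains logarithms of $x_{n-1},x_n$.

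The first step is where the actual computation lies. From the displayed Hamiltonian we read off $k_i=0$ for $i<n-1$, together with
\[
k_{n-1}=r\frac{a_{3k}-a_{3k-1}}{a_{3k-2}a_{3k}},\qquad k_n=\frac{r}{a_{3k}}.
\]
Hence $(\A\k)_i=A_{i,n-1}k_{n-1}+A_{i,n}k_n$, and we read the entries from \eqref{DN} with $n-1=2k$ and $n=2k+1$.

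For $i<n-1$ we have $A_{i,2k}=a_{3k-1}$. The entry $A_{i,2k+1}$ equals $a_{3k-2}$ for $i<2k$; note this uses the case ``$k\le 2j$, $l=2j+1$'', so every $i\le 2k-1$ gets $a_{3k-2}$. This gives
\[
(\A\k)_i=\frac{r}{a_{3k}}\Big(\frac{a_{3k-1}(a_{3k}-a_{3k-1})}{a_{3k-2}}+a_{3k-2}\Big).
\]
This does not obviously equal $r$, so I would recheck the indexing of \eqref{DN} and \eqref{Aa3j}. By \eqref{Aa3j}, $A_{1,2j}=a_{3j-2}$ and $A_{1,2j+1}=a_{3j-1}$, so the intended assignment is $A_{i,2k}=a_{3k-2}$ and $A_{i,2k+1}=a_{3k-1}$ for $i<2k$, with $A_{2k,2k+1}=a_{3k}$. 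With these values,
\[
(\A\k)_i=\frac{r}{a_{3k}}\big((a_{3k}-a_{3k-1})+a_{3k-1}\big)=r.
\]
For $i=n-1$ we get $A_{2k,2k+1}k_n=r$. For $i=n$ we get
\[
-a_{3k}k_{n-1}=r\frac{a_{3k-1}-a_{3k}}{a_{3k-2}},
\]
which matches \eqref{bbb}. Thus $\b\in\mathrm{Im}(\A)$, and $H$ of the form \eqref{hamLV} generates the system with the log-canonical bracket \eqref{PBA}. The bracket is unchanged from Prop. \ref{P6}, so the Casimir \eqref{cq2} remains a Casimir, and the $G_j$ still Poisson commute with each other.

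For invariance of the $G_j$, each $G_j$ with $j\le k-1$ depends only on $x_1,\ldots,x_{2j+1}$, and $2j+1<n$. It is homogeneous of weight $0$, since the exponents of the numerator and the denominator agree: $a_{3j-1}+(a_{3j-2}-a_{3j-1}+a_{3j})=a_{3j}+a_{3j-2}$. On these variables the inhomogeneous term is exactly $r\x$. By Theorem \ref{GODEs} and the remark after it (only the equations for $\x_I$ matter), $G_j$ stays an integral with cofactor $0+0\cdot r=0$. Equivalently, $\{G_j,\sum_i k_i\ln x_i\}=0$.

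The main obstacle is the index bookkeeping in \eqref{DN} for the last column, which determines whether $(\A\k)_i=r$ for $i<n-1$. I would settle it by using the pattern \eqref{Aa3j}, and then verify it explicitly in small cases, say $k=1,2$.
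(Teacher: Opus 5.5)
Your proposal is correct and follows the paper's own (implicit) argument. The displayed Hamiltonian gives $\b\in\mathrm{Im}(\A)$, the Casimir \eqref{cq2} and the involutivity of the $G_j$ carry over from Prop.~\ref{P6}, and each $G_j$ with $j\le k-1$ is weight-zero homogeneous in $x_1,\ldots,x_{2j+1}$ with $2j+1<n$, so Theorem~\ref{GODEs} keeps it an integral. You also resolve the clash between \eqref{DN} and \eqref{Aa3j} correctly by taking $A_{i,2k}=a_{3k-2}$ and $A_{i,2k+1}=a_{3k-1}$ for $i<2k$, and $A_{2k,2k+1}=a_{3k}$; the explicit $n=5$ matrices in Section~\ref{see5} confirm this, and with it your check of $\A\k=\b$ goes through.
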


\subsubsection{The odd-dimensional inhomogeneous $[0,k,l]$ LV system}
To the $n=2(k+l)+1$ dimensional $[0,k,l]$ LV system ($l\neq0$) we can add a inhomogeneous term with
\begin{equation} \label{nonht}
r_i=\begin{cases} r & i<n \\
r\frac{a_{3k+1} - a_{3(k+l)-1}}{a_{3k+1}}
 & i=n. \end{cases}
\end{equation}
\begin{proposition} \label{nokl}
The inhomogeneous $n=2(k+l)+1$ dimensional $[0,k,l]$ LV system, with \eqref{nonht}, is Liouville integrable. The Hamiltonian has the form \eqref{hamLV}, with
\begin{equation} \label{kkk}
k_i=\begin{cases}
0 & i\leq 2k+1 \\
r\frac{a_{3k+1} - a_{3(k+l)-1}}{a_{3k+1}^2a_{3(k+l)-1}}
q_i & 2k+1 < i < n\\
\frac{r}{a_{3k+1}a_{3(k+l)-1}} & i=n,
\end{cases}
\end{equation}
there are $k+l-1$ commuting integrals $G_1,\ldots,G_k,G^k_1,\ldots,G^k_{l-1}$, given in Eq. \eqref{Kj} and Eq. \eqref{gki}, and one Casimir \eqref{cq3}.
\end{proposition}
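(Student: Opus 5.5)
The plan has three parts: check that $\b\in\mathrm{Im}(\A)$ by verifying $\A\k=\b$ with $\k$ from \eqref{kkk}, invoke Theorem \ref{GODEs} to transport the homogeneous integrals, and reuse the involutivity and Casimir from Prop. \ref{P7}.

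For the Hamiltonian, I would use the row-sum identities \eqref{qqq} from the proof of Prop. \ref{P7}, which split each row of $\A$ into the blocks $h\le 2k+1$ and $h\ge 2k+2$. The components $k_i$ vanish for $i\le 2k+1$. For $2k+1<i<n$ they are proportional to $q_i$, with a factor $\lambda = r(a_{3k+1}-a_{3(k+l)-1})/(a_{3k+1}^2a_{3(k+l)-1})$, and $k_n$ is separately prescribed. Writing $\k=\lambda \mathbf{q}^{(2)}+\mu e_n$, where $\mathbf{q}^{(2)}$ is the restriction of $\mathbf{q}$ to indices $2k+2,\ldots,n$, we get
\[
(\A\k)_i=\lambda\sum_{h=2k+2}^{n}A_{i,h}q_h+(k_n-\lambda q_n)A_{i,n}.
\]
The cases then go as follows.
\begin{itemize}
\item For $i\le 2k+1$, the first sum vanishes by \eqref{qqq}. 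Moreover $A_{i,n}=a_{3k+1}$, since the off-diagonal block is constant. So $(\A\k)_i=(k_n-\lambda q_n)a_{3k+1}$, with $q_n=a_{3k+1}$. This should simplify to $r$, and I will check that it does.
\item For $2k+1<i<n$, \eqref{qqq} gives $\sum_{h\ge 2k+2}A_{i,h}q_h=A_{i,n}q_n+\ldots$. I would use the relation $\sum_{j\le 2k+1}A_{i,j}q_j=-A_{i,n}q_n$, together with null-vector property $\sum_h A_{i,h}q_h=0$, to get $\sum_{h\ge2k+2}A_{i,h}q_h=a_{3k+1}\cdot(\text{const})$. Here each $A_{i,j}$ with $j\le 2k+1$ equals $-a_{3k+1}$, and $\sum_{j\le2k+1}q_j$ is a telescoped constant computable from \eqref{cq3}. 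The remaining term $(k_n-\lambda q_n)A_{i,n}$ involves $A_{i,n}$, which equals $a_{3(k+l)-1}$ or a related entry of $\N^{2l+1}_{3k+2}$ restricted to the last column. Matching these to $r$ is where the specific coefficients in \eqref{kkk} enter.
\item For $i=n$, the same computation yields $r_n=r(a_{3k+1}-a_{3(k+l)-1})/a_{3k+1}$.
\end{itemize}
I expect this bookkeeping to be the main obstacle: identifying $\sum_{j\le 2k+1}q_j$ and the last-column entries of the truncated $\N$ block. I would attempt it first by evaluating small cases ($k=l=1$) to pin down the constants, then prove the general identity by the same induction used for \eqref{cl1}.

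Next, the integrals $G_1,\ldots,G_k$ involve only $x_1,\ldots,x_{2k+1}$. The integrals $G^k_1,\ldots,G^k_{l-1}$ involve variables up to index $2(k+l)-1+1<n$. All of them are homogeneous of weight $0$ (the exponents in \eqref{Kj} sum to zero), and none depends on $x_n$. On the indices they involve, $r_i=r$, so the inhomogeneous vector field restricted to $\x_I$ equals the homogeneous one plus $r\x_I$. By Theorem \ref{GODEs} and the remark following it, the $G$'s remain integrals with cofactor $0+0\cdot r=0$.

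Finally, involutivity and the Casimir come directly from Prop. \ref{P7}. The bracket \eqref{PBA} is unchanged, so the $G$'s still Poisson commute, and \eqref{cq3} remains a Casimir. Together with $H$, this gives $k+l$ commuting integrals and one Casimir in dimension $2(k+l)+1$. I would confirm functional independence from the distinct new variables each $G$ introduces ($x_{2j},x_{2j+1}$ resp. their shifts) and from the $\ln x_n$ term in $H$. This establishes Liouville integrability.
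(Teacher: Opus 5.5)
Your route is the same as the paper's. You split $(\A\k)_i$ along the blocks using \eqref{qqq}, transport the weight-zero, $x_n$-independent $G$'s by Theorem~\ref{GODEs}, and inherit involutivity and the Casimir from Prop.~\ref{P7}. The last two parts are fine. The gap is in the first part, which is the whole content of the paper's proof. You defer the computation of $\A\k$ and assert that the rows $i\le 2k+1$ ``should simplify to $r$'', but with \eqref{kkk} as printed they do not.

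The constants you flagged as obstacles are immediate. Column $n$ is the even local column $2l$ of the lower $\N$-block, so all its entries above the diagonal equal the shifted $A_{1,2l}=a_{3l-2}$ of \eqref{Aa3j}. Hence $A_{i,n}=a_{3(k+l)-1}$ for every $2k+1<i<n$. Since $A_{i,j}=-a_{3k+1}$ for $j\le 2k+1<i$, the fourth line of \eqref{qqq} at any such $i$ gives $\sum_{j\le 2k+1}q_j=a_{3(k+l)-1}$ directly, with no induction. Consequently $\sum_{h\ge 2k+2}A_{i,h}q_h$ equals $0$ for $i\le 2k+1$ and $a_{3k+1}a_{3(k+l)-1}$ for all $i>2k+1$, including $i=n$. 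At $i=n$, do not read the fourth line of \eqref{qqq} literally: $A_{n,n}q_n=0$ would give the wrong value.

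Write $\alpha=a_{3k+1}$, $\beta=a_{3(k+l)-1}$, and keep your $\lambda=r(\alpha-\beta)/(\alpha^2\beta)$. Your own decomposition then gives
\[
(\A\k)_i=\alpha k_n-\frac{r(\alpha-\beta)}{\beta}\ \ (i\le 2k+1),\qquad
(\A\k)_i=\beta k_n\ \ (2k+1<i<n),\qquad
(\A\k)_n=\lambda\alpha\beta=r\,\frac{\alpha-\beta}{\alpha}.
\]
The last row matches \eqref{nonht} for any choice of $k_n$. The first two equal $r$ if and only if $k_n=r/a_{3(k+l)-1}$.

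The printed value $k_n=r/(a_{3k+1}a_{3(k+l)-1})$ gives $(\A\k)_i=r/a_{3k+1}$ for $2k+1<i<n$, so the last entry of \eqref{kkk} is a misprint. The $n=5$, $[0,1,1]$ example in Section~\ref{see5} confirms the corrected value: its Hamiltonian has coefficient $r/a_5$ on $\ln x_5$, and there $a_{3(k+l)-1}=a_5$. With $k_n$ corrected, your argument closes and coincides with the paper's.
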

\begin{proof}
The proof that $\A\k=\b$, with $\k$ given by \eqref{kkk} and $\b$ given by \eqref{bbb}, relies on
\begin{align*}
\sum_{h=1}^{2k+1} A_{i,h}q_h &=
\begin{cases}
0 & i\leq 2k+1 \\
-a_{3k+1}a_{3(k+l)-1} & i>2k+1,
\end{cases}\\
\sum_{h=2k+1}^{n-1} A_{i,h}q_h &=
\begin{cases}
-a_{3k+1}^2 & i\leq 2k+1 \\
0 & 2k+1<i<n\\
a_{3k+1}a_{3(k+l)-1} & i=n,
\end{cases}\\
A_{i,n}q_n &=
\begin{cases}
a_{3k+1}^2 & i\leq 2k+1 \\
a_{3k+1}a_{3(k+l)-1} & 2k+1<i<n\\
0 & i=n,
\end{cases}
\end{align*}
which are consequenses of \eqref{qqq}.
\end{proof}

\subsubsection{The odd-dimensional inhomogeneous $[j,k,l]$ LV system}
The matrix of the inhomogeneous $n=2(j+k+l)+1$ dimensional $[j,k,l]$ LV system is obtained from the inhomogeneous even-dimensional $[j,k,l]$ system described in section \ref{SNjkl}, by deleting the last row and column. The matrix is obtained from the homogeneous case by substitution of $b_i=c_i-a_{j+3k+1}$. The inhomogeneous term is given by
\begin{equation} \label{nonhtjkl}
r_i=\begin{cases} r & i<n \\
r\frac{a_{j+3k+1} - a_{j+3(k+l)-1}}{a_{j+3k+1}}
 & i=n. \end{cases}
\end{equation}
\begin{proposition} \label{gnojkl}
Let $jkl\neq0$. The inhomogeneous $n=2(j+k+l)+1$ dimensional $[j,k,l]$ LV system with inhomogeneous term given by \eqref{nonhtjkl}, has a Hamiltonian of the form \eqref{hamLV}, with
\begin{equation} \label{kkkkk}
k_i=\begin{cases}
-r\frac{a_{j+3k+1}-c_m}{a_{j+3k+1}a_m}
 & i=2m-1, m=1,\ldots,j\\
r\frac{a_{j+3k+1}-c_m}{a_{j+3k+1}a_m}
 & i=2m, m=1,\ldots,j\\
0 & i=2j+1,\ldots,2(j+k)+1\\
r\frac{a_{j+3k+1} - a_{j+3(k+l)-1}}{a_{j+3k+1}^2}
q_i & i=2(j+k)+2,\ldots, n-1\\
\frac{r}{a_{j+3(k+l)-1}} & i=n.
\end{cases}
\end{equation}
There are $j+k+l-2$ commuting integrals $\overline{Q}_1,\ldots,\overline{Q}_{j-1},G_1,\ldots,G_k,G^k_1,\ldots,G^k_{l-1}$, where
\begin{align}
\overline{Q}_i&=\left(\frac{\overline{F}_i^{c_{i+1}-a_{j+3k+1}}}{\overline{F}_{i+1}^{c_i-a_{j+3k+1}}}\right)^{1/a_{j+3k+1}} \notag \\
&=\frac{P_{2 i -1,2 i}^{c_{i+1}-a_{j+3k+1}} }{P_{2 i +1,2 i +2}^{c_i-a_{j+3k+1}}} P_{2j+1,2(j+k)+1}^{c_i-c_{i+1}}, \label{noQi}
\end{align}
and the functions $G$ are given by Eqs. \eqref{Kj}, \eqref{gki}, and there is one Casimir \eqref{cq3}.
\end{proposition}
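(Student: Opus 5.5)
The plan has three parts: verify the Hamiltonian structure, check that the listed functions survive the inhomogeneous term, and confirm that their number plus the Casimir gives Liouville integrability.

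\textbf{Hamiltonian structure.} Write the odd-dimensional matrix $\A$ in block form. The blocks are the $2j$ pair variables $x_1,\ldots,x_{2j}$, the middle block $x_{2j+1},\ldots,x_{2(j+k)+1}$, and the last block $x_{2(j+k+1)},\ldots,x_n$, where $x_n$ is the deleted variable's neighbour. We must show $\A\k=\b$ with $\k$ as in \eqref{kkkkk} and $\b$ as in \eqref{nonhtjkl}. Split $\k=\k^{(1)}+\k^{(3)}$, with $\k^{(1)}$ supported on the pair variables and $\k^{(3)}$ on the last block.

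For $\k^{(1)}$, note that $k_{2m-1}=-k_{2m}$ and that within a pair the rows of $\A$ outside $\J_m$ coincide. Hence $(\A\k^{(1)})_i$ vanishes for $i>2j$. For $i\in\{2m-1,2m\}$, only the $\J_m$ entry survives, giving $a_m k_{2m}=r(a_{j+3k+1}-c_m)/a_{j+3k+1}$. This is exactly the deficit left by the $b_m=c_m-a_{j+3k+1}$ couplings.

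For $\k^{(3)}$, it is proportional to the last-block part of the Casimir vector \eqref{cq4}/\eqref{cq3}, with $k_n$ adjusted. I will use the block identities \eqref{qqqqq} (last line) and the analogues from the proof of Prop. \ref{nokl}:
\begin{itemize}
\item $\sum_{h\ge 2(j+k+1)}A_{i,h}q_h$ equals $a_{j+3k+1}$ times a constant for rows in the middle block;
\item it equals $c_{\lfloor (i+1)/2\rfloor}$ times that constant for pair rows;
\item it vanishes internally on the last block except for the $x_n$ contributions.
\end{itemize}
Adding the two contributions row by row should then give $r$ for $i<n$. For $i=n$ it should give $r(a_{j+3k+1}-a_{j+3(k+l)-1})/a_{j+3k+1}$, because of the one missing entry of the $N$-block coupling $x_n$ with $x_{n-1}$, weighted by $a_{j+3(k+l)-1}$. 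This row-by-row bookkeeping, especially the pair rows where the $c_m$ terms must cancel against $\k^{(1)}$, is the main obstacle. I would organise it exactly as in the proof of the even-dimensional $[j,k,l]$ inhomogeneous proposition, by computing the partial sums over each block separately.

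\textbf{Integrals.} The functions $\overline{Q}_i$ are built from \eqref{Fh} by the same combination as \eqref{oQi}, with $b_i=c_i-a_{j+3k+1}$. The exponent of $P_{2(j+k+1),n}$ cancels, so they do not involve $x_n$ (nor the last block). Their total weight is $(c_{i+1}-a)-(c_i-a)+(c_i-c_{i+1})=0$. The $G$-integrals are weight $0$ by construction. They avoid $x_n$ because $G^{j,k}_{i}$ with $i\le l-1$ involves indices only up to $n-1$.

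Hence Theorem \ref{GODEs} applies: adding $r\x_I$ to the equations for all variables other than $x_n$ keeps these functions integrals, with cofactor $C+0\cdot r=0$. The change in the $\dot x_n$ equation is irrelevant, since the functions do not involve $x_n$. Involutivity is purely a property of the bracket \eqref{PBA}, which is unchanged by the substitution, so it is inherited from Prop. \ref{P8}. The Casimir is likewise unchanged, with \eqref{cq4} specialised.

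\textbf{Counting.} The rank of $\A$ is $n-1$, since the null space is spanned by the Casimir. Liouville integrability therefore requires $(n-1)/2=j+k+l$ commuting independent integrals including $H$. We have the $(j-1)+k+(l-1)=j+k+l-2$ listed integrals plus $H$, which leaves one short. I would supplement them with an analogue of $K$ from \eqref{KK}, checking whether it has weight $0$ and avoids $x_n$. If it does not, I would conclude integrability with the listed integrals (as the statement asserts), after checking functional independence on generic points via the independent DP factors $P_{2i-1,2i}$, $x_{2(j+i)}$, and so on.
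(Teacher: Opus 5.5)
The parts of your proposal that address what the proposition actually claims are correct and follow the paper's own argument. The paper's proof consists only of the block partial sums, obtained from \eqref{qqqqq}, that yield $\A\k=\b$. It leaves implicit, exactly as you describe, that the weight-zero, $x_n$-free integrals survive by Theorem~\ref{GODEs}, and that involutivity and the Casimir are inherited from Prop.~\ref{P8}.

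Your bookkeeping does close, provided the sums in your bullets stop at $h=n-1$. Write $a=a_{j+3k+1}$ and $a'=a_{j+3(k+l)-1}$.
\begin{itemize}
\item The middle rows of \eqref{qqqqq} give $\sum_{h=2(j+k+1)}^{n}q_h=0$.
\item Hence $\sum_{h=2(j+k+1)}^{n-1}A_{i,h}q_h$ equals $-c_ma/a'$, $-a^2/a'$, $0$ and $a$ on pair rows, middle rows, last-block rows below $n$, and row $n$ respectively.
\item On the same rows, $A_{i,n}$ equals $c_m$, $a$, $a'$ and $0$.
\item With $k_n=r/a'$, the last-block part of $\k$ therefore contributes $rc_m/a$, $r$, $r$ and $r(a-a')/a$.
\item The pair part of $\k$ supplies the missing $r-rc_m/a$ on pair rows.
\end{itemize}
As you implicitly assumed, \eqref{cq3} and $G_i,G^k_i$ in the statement should be read as \eqref{cq4} (with $b_i=c_i-a$) and $G^j_i,G^{j,k}_i$.

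The error is your final ``Counting'' paragraph. The proposition does not assert Liouville integrability. On the contrary, the paper remarks right after it that these integrals are \emph{not} sufficient, and your own count shows why: $H$ plus $j+k+l-2$ commuting integrals gives $j+k+l-1$, one fewer than the $(n-1)/2=j+k+l$ required. Concluding integrability ``with the listed integrals'' after a functional-independence check is therefore a non sequitur, because independence cannot replace a missing commuting integral.

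Nor does $K$ from \eqref{KK} fill the gap. Since $\sum_{h=2(j+k+1)}^{n}q_h=0$, its weight is $(a+b_1)/c_1$, which equals $1$ once $b_1=c_1-a$. It also depends on $x_n$, whose equation carries $s=r(a-a')/a\neq r$. A direct computation gives $K$ the cofactor $r+(s-r)\bigl(q_n+\frac{b_1}{c_1}\frac{x_n}{P_{2(j+k+1),n}}\bigr)=(s-r)\frac{b_1}{c_1}\frac{x_n}{P_{2(j+k+1),n}}\neq0$ in the inhomogeneous system. This agrees with the paper's remark that $K$ is not an integral there.

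Drop that paragraph and state exactly what is claimed: a Hamiltonian of the form \eqref{hamLV}, $j+k+l-2$ commuting integrals, and one Casimir.
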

\begin{proof}
The proof that $\A\k=\b$, with $\k$ given by \eqref{kkkkk} and $\b$ given by \eqref{nonhtjkl}, relies on
\begin{align*}
\sum_{h=1}^{2j} A_{i,h}q_h &=
\begin{cases}
a_{j+3k+1}-c_{\lfloor(i+1)/2\rfloor} & i\leq 2j \\
0 & i>2j,
\end{cases}\\
\sum_{h=2j+1}^{2(2+k)+1} A_{i,h}q_h &=
\begin{cases}
c_{\lfloor(i+1)/2\rfloor}-a_{j+3k+1} & i\leq 2j \\
0 & 2j<i\leq 2(j+k)+1 \\
-a_{j+3k+1} & 2(j+k)+1<i\leq n,
\end{cases}\\
\sum_{h=2(j+k+1)}^{n-1} A_{i,h}q_h &=
\begin{cases}
-a_{3k+1}c_{\lfloor(i+1)/2\rfloor}/a_{3(k+l)-1}& i\leq 2j \\
-a_{3k+1}^2/a_{3(k+l)-1} & 2j<i\leq 2(j+k)+1 \\
0 & 2(j+k)+1<i< n\\
a_{3k+1}& i=n,
\end{cases}\\
A_{i,n}q_n &=
\begin{cases}
a_{3k+1}c_{\lfloor(i+1)/2\rfloor}/a_{3(k+l)-1} & i\leq 2j \\
a_{3k+1}^2/a_{3(k+l)-1} & 2j<i\leq 2(j+k)+1 \\
a_{3k+1}& 2(j+k)+1<i< n\\
0 & i=n,
\end{cases}
\end{align*}
which are consequences of \eqref{qqqqq}.
\end{proof}

We note that the function $K$, given by \eqref{KK}, is not an integral for the inhomogeneous odd $[j,k,l]$ system. The number of integrals for the general odd $[j,k,l]$ system, with inhomogeneous term \eqref{nonhtjkl}, given in Prop. \ref{gnojkl} is not sufficient for Liouville integrability.
For the odd-dimensional $[j,k,0]$ system, with inhomogeneous term
\[
r_i=\begin{cases} r & i<n \\
r\frac{a_{j+3k-1} - a_{j+3k}}{a_{j+3k-2}}
 & i=n, \end{cases}
\]
we have also not been able to construct sufficiently many commuting integrals. There may be other inhomogeneous terms, i.e. not of the form
$\b=(r,\ldots,r,s)$ for which the number of integrals is sufficient. An example is given in Section \ref{see5}, item 2.
   
\section{Explicit examples of Liouville integrable LV systems} \label{explex}
The number of triples $[j,k,l]$ with $k\geq l$ and $j+k+l=m-1$ is
\[
P(m)=\frac{m^{2}}{4}+\frac{m}{2}+\frac{1-\left(-1\right)^{m}}{8}.
\]
whereas the number of triples $[j,k,l]$ with $k\geq l$, $jk=0$ and $j+k+l=m-1$ is
\[
Q(m)=\frac{m}{2}+\frac{7+(-1)^m}{4}.
\]
The values of $P(m)$ and $Q(m)$ for $m\leq 10$ are given in Table \ref{tabl}.
\begin{table}[h]
\begin{center}
\begin{tabular}{c|cccccccccc}
m & 2 & 3 & 4 & 5 & 6 & 7 & 8 & 9 & 10 \\
\hline
P(m) & 2 & 4 & 6 & 9 & 12 & 16 & 20 & 25 & 30 \\
Q(m) & 2 & 3 & 3 & 4 & 4 & 5 & 5 & 6 & 6\\
\hline
\end{tabular}
\vspace{2mm}

\parbox{12.5cm}{\caption{\label{tabl} The number $P(m)$ of homogeneous $[j,k,l]$ LV systems in dimensions $2m$ and $2m-1$, as well as the number $Q(m)$ of inhomogeneous $2m-1$ dimensional Liouville integrable $[j,k,l]$ LV systems, with inhomogeneous term of the form $\b=(r,\ldots,r,s)$.}}
\end{center}
\end{table}

We illustrate the theory by providing explicitly the families of Lotka-Volterra systems of type $[j,k,l]$, in dimensions $n=3,4,5,6$. These correspond to $n=2m$ or $n=2m-1$ dimensional systems with $j+k+l=m-1=1,2$, i.e. for $m=2$:
\[
[j,k,l]=[1,0,0] \text{ or } [0,1,0],
\]
and for $m=3$:
\[
[j,k,l]=[2,0,0], [1,1,0], [0,1,1]  \text{ or } [0,2,0].
\]
In the following sections, each homogeneous system is understood to have Hamiltonian $H=x_1+\cdots+x_n$. For the inhomogeneous systems the Hamiltonians will be given.
\subsection{The even $[j,k,l]$ LV systems with $j+k+l=1$ ($n=4$)} \label{see4}
\begin{enumerate}
\item The homogeneous [1,0,0] system has matrix
\[
\A=\left[\begin{array}{cccc}
0 & a_{1} & b_{1} & c_{1} 
\\
 -a_{1} & 0 & b_{1} & c_{1} 
\\
 -b_{1} & -b_{1} & 0 & a_{2} 
\\
 -c_{1} & -c_{1} & -a_{2} & 0 
\end{array}\right]
\]
and integral
\[
\left(x_{1}+x_{2}\right)^{a_{2}} x_{3}^{-c_{1}} x_{4}^{b_{1}}.
\]
One can add the vector $\b=(r,r,r,r)$ if one takes $b_1=c_1-a_2$.
\item
The homogeneous [0,1,0] system has matrix
\[
\A=\left[\begin{array}{cccc}
0 & a_{1} & a_{2} & a_{4} 
\\
 -a_{1} & 0 & a_{3} & a_{4} 
\\
 -a_{2} & -a_{3} & 0 & a_{4} 
\\
 -a_{4} & -a_{4} & -a_{4} & 0 
\end{array}\right]
\]
and integral
\[
x_{1}^{-a_{3}} x_{2}^{a_{2}} x_{3}^{-a_{1}} \left(x_{1}+x_{2}+x_{3}\right)^{a_{1}-a_{2}+a_{3}}.
\]
The vector $\b=(r,r,r,r)$ can be added without any further constraints.
\end{enumerate}
\subsection{The odd $[j,k,l]$ LV systems with $j+k+l=1$ ($n=3$)} \label{see3}
\begin{enumerate}
\item The homogeneous [1,0,0] system has matrix
\[
\A=\left[\begin{array}{ccc}
0 & a_{1} & b_{1} 
\\
 -a_{1} & 0 & b_{1} 
\\
 -b_{1} & -b_{1} & 0 
\end{array}\right]
\]
and Casimir $x_{1}^{b_{1}} x_{2}^{-b_{1}} x_{3}^{a_{1}}$. We can add the inhomogeneous term $\b=(r,r,0)$ and then the system has Hamiltonian
\[
H=x_{1}+x_{2}+x_{3}+\frac{r}{a_{1}} \ln\left(\frac{x_{2}}{x_{1}}\right).
\]
\item The homogeneous [0,1,0] system has matrix
\[
\A=\left[\begin{array}{ccc}
0 & a_{1} & a_{2} 
\\
 -a_{1} & 0 & a_{3} 
\\
 -a_{2} & -a_{3} & 0 
\end{array}\right],
\]
and contains the $[1,0,0]$ system as a special case. It has Casimir $x_{1}^{a_{3}} x_{2}^{-a_{2}} x_{3}^{a_{1}}$, and inhomogeneous generalisation $\b=\left(r,r,r\frac{a_{2}-a_{3}}{a_{1}}\right)$ with Hamiltonian
\[
H=x_{1}+x_{2}+x_{3}-r\frac{a_{2}-a_{3}}{a_{3} a_{1}} \ln \left(x_{2}\right)+\frac{r}{a_{3}} \ln\left(x_{3}\right).
\]
\end{enumerate}

\subsection{The even $[j,k,l]$ LV systems with $j+k+l=2$ ($n=6$)} \label{see6}
\begin{enumerate}
\item The homogeneous [2,0,0] LV system has matrix
\[
\A=\left[\begin{array}{cccccc}
0 & a_{1} & \frac{b_{1} c_{2}-b_{2} c_{1}}{a_{3}} & \frac{b_{1} c_{2}-b_{2} c_{1}}{a_{3}} & b_{1} & c_{1} 
\\
 -a_{1} & 0 & \frac{b_{1} c_{2}-b_{2} c_{1}}{a_{3}} & \frac{b_{1} c_{2}-b_{2} c_{1}}{a_{3}} & b_{1} & c_{1} 
\\
 -\frac{b_{1} c_{2}-b_{2} c_{1}}{a_{3}} & -\frac{b_{1} c_{2}-b_{2} c_{1}}{a_{3}} & 0 & a_{2} & b_{2} & c_{2} 
\\
 -\frac{b_{1} c_{2}-b_{2} c_{1}}{a_{3}} & -\frac{b_{1} c_{2}-b_{2} c_{1}}{a_{3}} & -a_{2} & 0 & b_{2} & c_{2} 
\\
 -b_{1} & -b_{1} & -b_{2} & -b_{2} & 0 & a_{3} 
\\
 -c_{1} & -c_{1} & -c_{2} & -c_{2} & -a_{3} & 0 
\end{array}\right]
\]
and commuting integrals
\[
\left(x_{1}+x_{2}\right)^{a_{3}} x_{5}^{-c_{1}} x_{6}^{b_{1}},\quad
\left(x_{3}+x_{4}\right)^{a_{3}} x_{5}^{-c_{2}} x_{6}^{b_{2}}.
\]
To the subsystem with $b_1=c_1-a_3,b_2=c_2-a_3$, which has matrix
\[
\left[\begin{array}{cccccc}
0 & a_{1} & c_{1}-c_{2} & c_{1}-c_{2} & c_{1}-a_{3} & c_{1} 
\\
 -a_{1} & 0 & c_{1}-c_{2} & c_{1}-c_{2} & c_{1}-a_{3} & c_{1} 
\\
 -c_{1}+c_{2} & -c_{1}+c_{2} & 0 & a_{2} & c_{2}-a_{3} & c_{2} 
\\
 -c_{1}+c_{2} & -c_{1}+c_{2} & -a_{2} & 0 & c_{2}-a_{3} & c_{2} 
\\
 -c_{1}+a_{3} & -c_{1}+a_{3} & -c_{2}+a_{3} & -c_{2}+a_{3} & 0 & a_{3} 
\\
 -c_{1} & -c_{1} & -c_{2} & -c_{2} & -a_{3} & 0 
\end{array}\right]
\]
one can add a inhomogeneous term $\b=(r,r,r,r,r,r)$. It has Hamiltonian \[
H=x_{1}+x_{2}+x_{3}+x_{4}+x_{5}+x_{6}+\frac{r}{a_{3}} \ln\left(\frac{x_{6}}{x_{5}}\right),
\] and commuting integrals 
$\left(x_{1}+x_{2}\right)^{a_{3}} x_{5}^{-c_{1}} x_{6}^{c_{1}-a_{3}}$ and $\left(x_{3}+x_{4}\right)^{a_{3}} x_{5}^{-c_{2}} x_{6}^{c_{2}-a_{3}}$.
\item The homogeneous [1,1,0] system has matrix
\[
\A=\left[\begin{array}{cccccc}
0 & a_{1} & b_{1} & b_{1} & b_{1} & c_{1} 
\\
 -a_{1} & 0 & b_{1} & b_{1} & b_{1} & c_{1} 
\\
 -b_{1} & -b_{1} & 0 & a_{2} & a_{3} & a_{5} 
\\
 -b_{1} & -b_{1} & -a_{2} & 0 & a_{4} & a_{5} 
\\
 -b_{1} & -b_{1} & -a_{3} & -a_{4} & 0 & a_{5} 
\\
 -c_{1} & -c_{1} & -a_{5} & -a_{5} & -a_{5} & 0 
\end{array}\right]
\]
and commuting integrals
\[
\left(x_{1}+x_{2}\right)^{a_{5}} \left(x_{3}+x_{4}+x_{5}\right)^{-c_{1}} x_{6}^{b_{1}},\quad 
x_{3}^{-a_{4}} x_{4}^{a_{3}} x_{5}^{-a_{2}} \left(x_{3}+x_{4}+x_{5}\right)^{a_{2}-a_{3}+a_{4}}.
\]
The inhomogeneous [1,1,0] system has matrix
\[
\A=\left[\begin{array}{cccccc}
0 & a_{1} & c_{1}-a_{5} & c_{1}-a_{5} & c_{1}-a_{5} & c_{1} 
\\
 -a_{1} & 0 & c_{1}-a_{5} & c_{1}-a_{5} & c_{1}-a_{5} & c_{1} 
\\
 -c_{1}+a_{5} & -c_{1}+a_{5} & 0 & a_{2} & a_{3} & a_{5} 
\\
 -c_{1}+a_{5} & -c_{1}+a_{5} & -a_{2} & 0 & a_{4} & a_{5} 
\\
 -c_{1}+a_{5} & -c_{1}+a_{5} & -a_{3} & -a_{4} & 0 & a_{5} 
\\
 -c_{1} & -c_{1} & -a_{5} & -a_{5} & -a_{5} & 0 
\end{array}\right],\quad \b=\left[\begin{array}{c}
r\\ r\\ r\\ r\\ r\\ r
\end{array}\right],
\]
Hamiltonian
\[
x_{1}+x_{2}+x_{3}+x_{4}+x_{5}+x_{6}+\frac{b}{a_5}
\left(\frac{-a_{4} \ln\left(x_{3}\right)+a_{3} \ln\left(x_{4}\right)-a_{2} \ln\left(x_{5}\right)}{ a_{2}-a_{3}+a_{4}}+\ln\left(x_{6}\right)\right)
\]
and integrals
\[
\left(x_{1}+x_{2}\right)^{a_{5}} \left(x_{3}+x_{4}+x_{5}\right)^{-c_{1}} x_{6}^{c_{1}-a_{5}}$, $x_{3}^{-a_{4}} x_{4}^{a_{3}} x_{5}^{-a_{2}} \left(x_{3}+x_{4}+x_{5}\right)^{a_{2}-a_{3}+a_{4}}.
\]

\item The homogeneous [0,1,1] system has matrix
\[
\A=\left[\begin{array}{cccccc}
0 & a_{1} & a_{2} & a_{4} & a_{4} & a_{4} 
\\
 -a_{1} & 0 & a_{3} & a_{4} & a_{4} & a_{4} 
\\
 -a_{2} & -a_{3} & 0 & a_{4} & a_{4} & a_{4} 
\\
 -a_{4} & -a_{4} & -a_{4} & 0 & a_{5} & a_{6} 
\\
 -a_{4} & -a_{4} & -a_{4} & -a_{5} & 0 & a_{7} 
\\
 -a_{4} & -a_{4} & -a_{4} & -a_{6} & -a_{7} & 0 
\end{array}\right]
\]
and commuting integrals
\[
x_{1}^{-a_{3}} x_{2}^{a_{2}} x_{3}^{-a_{1}} \left(x_{1}+x_{2}+x_{3}\right)^{a_{1}-a_{2}+a_{3}},\quad 
x_{4}^{-a_{7}} x_{5}^{a_{6}} x_{6}^{-a_{5}} \left(x_{4}+x_{5}+x_{6}\right)^{a_{5}-a_{6}+a_{7}}.
\]
We can allow the inhomogeneous term with $\b=(r,r,r,r,r,r)$, in which case the Hamiltonian is
\begin{align*}
H&=x_{1}+x_{2}+x_{3}+x_{4}+x_{5}+x_{6}+\frac{r}{a_4}
\left(\frac{-a_{3} \ln\left(x_{1}\right)+a_{2} \ln\left(x_{2}\right)-a_{1} \ln\left(x_{3}\right)}{ a_{1}-a_{2}+a_{3}}\right.\\
&\ \ \ \ \ \ \ \ \ \left.+\frac{a_{7} \ln\left(x_{4}\right)-a_{6} \ln\left(x_{5}\right)+a_{5} \ln\left(x_{6}\right)}{ a_{5}-a_{6}+a_{7}} \right).
\end{align*}
\item The homogeneous  [0,2,0] system has matrix
\[
\A=\left[\begin{array}{cccccc}
0 & a_{1} & a_{2} & a_{4} & a_{5} & a_{7} 
\\
 -a_{1} & 0 & a_{3} & a_{4} & a_{5} & a_{7} 
\\
 -a_{2} & -a_{3} & 0 & a_{4} & a_{5} & a_{7} 
\\
 -a_{4} & -a_{4} & -a_{4} & 0 & a_{6} & a_{7} 
\\
 -a_{5} & -a_{5} & -a_{5} & -a_{6} & 0 & a_{7} 
\\
 -a_{7} & -a_{7} & -a_{7} & -a_{7} & -a_{7} & 0 
\end{array}\right]
\]
and commuting integrals
\[
\frac{x_{2}^{a_{2}} \left(x_{1}+x_{2}+x_{3}\right)^{a_{1}-a_{2}+a_{3}}}{x_{1}^{a_{3}}x_{3}^{a_{1}} },\quad 
\frac{ x_{4}^{a_{5}} \left(x_{1}+x_{2}+x_{3}+x_{4}+x_{5}\right)^{a_{4}-a_{5}+a_{6}}
}{\left(x_{1}+x_{2}+x_{3}\right)^{a_{6}}x_{5}^{a_{4}} }.
\]
\end{enumerate}
We can allow the inhomogeneous term with $\b=(r,r,r,r,r,r)$, in which case the Hamiltonian is
\begin{align*}
H&=x_{1}+x_{2}+x_{3}+x_{4}+x_{5}+x_{6}+\frac{r}{a_7}
\left(\frac{a_6(-a_{3} \ln\left(x_{1}\right)+a_{2} \ln\left(x_{2}\right)-a_{1} \ln\left(x_{3}\right))}{ \left(a_{1}-a_{2}+a_{3}\right) \left(a_{4}-a_{5}+a_{6}\right)}\right.\\
&\ \ \ \ \ \ \ \ \ \left.+\frac{a_{5} \ln\left(x_{4}\right)-a_{4} \ln\left(x_{5}\right)}{ a_{4}-a_{5}+a_{6}} + \ln\left(x_{6}\right) \right).
\end{align*}
\subsection{The odd $[j,k,l]$ LV systems with $j+k+l=2$ ($n=5$)} \label{see5}
\begin{enumerate}
\item The homogeneous [2,0,0] system has matrix
\[
\left[\begin{array}{ccccc}
0 & a_{1} & \frac{b_{1} c_{2}-b_{2} c_{1}}{a_{3}} & \frac{b_{1} c_{2}-b_{2} c_{1}}{a_{3}} & b_{1} 
\\
 -a_{1} & 0 & \frac{b_{1} c_{2}-b_{2} c_{1}}{a_{3}} & \frac{b_{1} c_{2}-b_{2} c_{1}}{a_{3}} & b_{1} 
\\
 -\frac{b_{1} c_{2}-b_{2} c_{1}}{a_{3}} & -\frac{b_{1} c_{2}-b_{2} c_{1}}{a_{3}} & 0 & a_{2} & b_{2} 
\\
 -\frac{b_{1} c_{2}-b_{2} c_{1}}{a_{3}} & -\frac{b_{1} c_{2}-b_{2} c_{1}}{a_{3}} & -a_{2} & 0 & b_{2} 
\\
 -b_{1} & -b_{1} & -b_{2} & -b_{2} & 0 
\end{array}\right],
\]
integral $\dfrac{\left(x_{1}+x_{2}\right)^{b_{2}}}{ \left(x_{3}+x_{4}\right)^{b_{1}}}x_{5}^{\frac{c_{2} b_{1}-c_{1} b_{2}}{a_{3}}} $ and Casimir
$
\left(\dfrac{x_{1}}{x_{2}}\right)^{b_{1} a_{2}}
\left(\dfrac{x_{3}}{x_{4}}\right)^{a_{1} b_{2}}
x_{5}^{a_{1} a_{2}}$.

The inhomogeneous [2,0,0] system, with $\b=(r,r,r,r,0)$, has matrix
\[
\A=\left[\begin{array}{ccccc}
0 & a_{1} & c_{2}-c_{1} & c_{2}-c_{1} & a_{3} 
\\
 -a_{1} & 0 & c_{2}-c_{1} & c_{2}-c_{1} & a_{3} 
\\
 c_{1}-c_{2} & c_{1}-c_{2} & 0 & a_{2} & a_{3} 
\\
 c_{1}-c_{2} & c_{1}-c_{2} & -a_{2} & 0 & a_{3} 
\\
 -a_{3} & -a_{3} & -a_{3} & -a_{3} & 0 
\end{array}\right],
\]
Hamiltonian
\[
H=x_{1}+x_{2}+x_{3}+x_4+x_5+\frac{r}{a_{1}} \ln\left(\frac{x_{2}}{x_{1}}\right)+\frac{r}{a_{2}} \ln\left(\frac{x_{4}}{x_{3}}\right),
\]
integrals \[
\left(\frac{x_{1}+x_{2}}{x_{3}+x_{4}}\right)^{1/(c_1-c_2)}
\left(\frac{x_{1}}{x_{2}}\right)^{1/a_1}
\left(\frac{x_{3}}{x_{4}}\right)^{1/a_2}
,\quad
\frac{x_{1}+x_{2}+x_{3}+x_{4}}{x_{1}+x_{2}}
\left(\frac{x_{3}}{x_{4}}\right)^{(c_2-a_1)/a_2}
\]
and Casimir
$
\left(\dfrac{x_{1}}{x_{2}}\right)^{1/a_{1}}
\left(\dfrac{x_{3}}{x_{4}}\right)^{1/a_2}
x_{5}^{1/a_3}$.
\item
The homogeneous [1,1,0] system has matrix
\[
\A=\left[\begin{array}{ccccc}
0 & a_{1} & b_{1} & b_{1} & b_{1} 
\\
 -a_{1} & 0 & b_{1} & b_{1} & b_{1} 
\\
 -b_{1} & -b_{1} & 0 & a_{2} & a_{3} 
\\
 -b_{1} & -b_{1} & -a_{2} & 0 & a_{4} 
\\
 -b_{1} & -b_{1} & -a_{3} & -a_{4} & 0 
\end{array}\right],
\]
integral $\frac{x_{4}^{a_{3}}\left(x_{3}+x_{4}+x_{5}\right)^{a_{2}-a_{3}+a_{4}}}{x_{3}^{a_{4}}x_{5}^{a_{2}} }$ and Casimir $\left(\frac{x_{1}}{x_2}\right)^{\frac{b_{1}}{a_{1}}} \left(\frac{x_{3}^{a_{4}}x_{5}^{a_{2}}}{x_{4}^{a_{3}}}\right)^{1/(a_{2}-a_{3}+a_{4})}$.
One can not add a inhomogeneous term of the form $\b=(r,\ldots,r,s)$. However, one can take $\b=(r,r,0,0,0)$ (without constraints on the parameters) and then the Hamiltonian is
\[
H=x_1+x_2+x_3+x_4+x_5+\frac{r}{a_1}\ln\left(\frac{x_2}{x_1}\right).
\]
\item
The [0,1,1] system has matrix
\[
\A=\left[\begin{array}{ccccc}
0 & a_{1} & a_{2} & a_{4} & a_{4} 
\\
 -a_{1} & 0 & a_{3} & a_{4} & a_{4} 
\\
 -a_{2} & -a_{3} & 0 & a_{4} & a_{4} 
\\
 -a_{4} & -a_{4} & -a_{4} & 0 & a_{5} 
\\
 -a_{4} & -a_{4} & -a_{4} & -a_{5} & 0 
\end{array}\right],
\]
integral $\frac{x_{2}^{a_{2}}\left(x_{1}+x_{2}+x_{3}\right)^{a_{1}-a_{2}+a_{3}} }{x_{1}^{a_{3}}x_{3}^{a_{1}}}$ and Casimir $
\left(\frac{x_{1}^{a_{3}}x_{3}^{a_{1}}}{x_{2}^{a_{2}}}\right)^{a_{5}} \left(\frac{x_{5}}{x_{4}}\right)^{a_{4}(a_{1}-a_{2}+a_{3})} $.
The Hamiltonian $H=x_1+x_2+x_3+x_4+x_5+r\frac{a_5-a_4}{a_4a_5}\ln(x_4)+\frac{r}{a_5}\ln(x_5)$ yields a inhomogeneous term with $\b=\left(r,r,r,r,r\frac{a_4-a_5}{a_4}\right)$.
\item
The [0,2,0] system (which contains the [0,1,1] case) has matrix
\[
\A=\left[\begin{array}{ccccc}
0 & a_{1} & a_{2} & a_{4} & a_{5} 
\\
 -a_{1} & 0 & a_{3} & a_{4} & a_{5} 
\\
 -a_{2} & -a_{3} & 0 & a_{4} & a_{5} 
\\
 -a_{4} & -a_{4} & -a_{4} & 0 & a_{6} 
\\
 -a_{5} & -a_{5} & -a_{5} & -a_{6} & 0 
\end{array}\right],
\]
integral $\frac{x_{2}^{a_{2}}\left(x_{1}+x_{2}+x_{3}\right)^{a_{1}-a_{2}+a_{3}} }{x_{1}^{a_{3}}x_{3}^{a_{1}}}$ and Casimir $
\left(\frac{x_{1}^{a_{3}}x_{3}^{a_{1}}}{x_{2}^{a_{2}}}\right)^{a_6} \left(\frac{x_{5}^{a_{4}}}{x_{4}^{a_{5}}}\right)^{a_{1}-a_{2}+a_{3}} $.
The inhomogeneous term is $\b=\left(r,r,r,r, r\frac{\left(a_{5}-a_{6}\right)}{a_{4}}\right)$ with Hamiltonian
$
H=x_1+x_2+x_3+x_4+x_5+b \frac{a_{6}-a_{5}}{a_{4} a_{6}} \ln\left(x_{4}\right)+\frac{b}{a_{6}}\ln\left(x_{5}\right)$.
\end{enumerate}

\subsection{Examples of Liouville integrable LV systems not of type $[j,k,l]$} \label{Sothex}
There exist Liouville integrable LV systems of high rank with associated hypergraphs which are not in the $[j,k,l]$ family displayed in Figure \ref{F0}. These systems seem to have less parameters, and additional DPs and/or integrals.

In this section, the hyperedges correspond to the DPs of the system, not to associated commuting integrals (which does not even make sense when the matrix $\A$ is singular). We give three examples of $10$-component classes of LV systems.

\begin{enumerate}
\item 
The 10-component LV system with rank 8 matrix 
\begin{equation} \label{SM}
\A=\begin{pmatrix}
0 & a_{1} & a_{1} & a_{1} & a_{2} & a_{2} & a_{2} & a_{3} & a_{3} & a_{3} 
\\
 -a_{1} & 0 & a_{1} & a_{1} & a_{2} & a_{2} & a_{2} & a_{3} & a_{3} & a_{3} 
\\
 -a_{1} & -a_{1} & 0 & 0 & a_{2} & a_{2} & a_{2} & a_{3} & a_{3} & a_{3} 
\\
 -a_{1} & -a_{1} & 0 & 0 & a_{2} & a_{2} & a_{2} & a_{3} & a_{3} & a_{3} 
\\
 -a_{2} & -a_{2} & -a_{2} & -a_{2} & 0 & a_{4} & a_{5} & a_{7} & a_{7} & a_{7} 
\\
 -a_{2} & -a_{2} & -a_{2} & -a_{2} & -a_{4} & 0 & a_{6} & a_{7} & a_{7} & a_{7} 
\\
 -a_{2} & -a_{2} & -a_{2} & -a_{2} & -a_{5} & -a_{6} & 0 & a_{7} & a_{7} & a_{7} 
\\
 -a_{3} & -a_{3} & -a_{3} & -a_{3} & -a_{7} & -a_{7} & -a_{7} & 0 & a_{8} & a_{9} 
\\
 -a_{3} & -a_{3} & -a_{3} & -a_{3} & -a_{7} & -a_{7} & -a_{7} & -a_{8} & 0 & a_{10} 
\\
 -a_{3} & -a_{3} & -a_{3} & -a_{3} & -a_{7} & -a_{7} & -a_{7} & -a_{9} & -a_{10} & 0 
\end{pmatrix}
\end{equation}
admits (apart from the Hamiltonian $H=\sum_{i=1}^{10} x_i$) the following DPs
\[
x_{1}+x_{2}, x_{3}+x_{4}, x_{2}+x_{3}+x_{4}, x_{5}+x_{6}+x_{7}, x_{8}+x_{9}+x_{10}
, x_{1}+x_{2}+x_{3}+x_{4},
\]
that is, it has the associated hypergraph depicted in Fig. \ref{F5}.
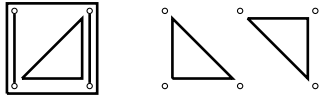
\begin{figure}[h]
\centering
\begin{tikzpicture}[scale=1]
\tikzstyle{nod}= [circle, inner sep=0pt, fill=white, minimum size=2pt, draw]		
\node[nod] (a) at (0,0) {};
\node[nod] (b) at (0,1) {};
\node[nod] (c) at (1,0) {};
\node[nod] (d) at (1,1) {};
\node[nod] (e) at (2,0) {};
\node[nod] (f) at (2,1) {};
\node[nod] (g) at (3,0) {};
\node[nod] (h) at (3,1) {};
\node[nod] (i) at (4,0) {};
\node[nod] (j) at (4,1) {};
\coordinate (a1) at (0.1,0.1) {};
\coordinate (c1) at (.9,0.1) {};
\coordinate (d1) at (.9,.9) {};
\coordinate (e1) at (2.1,0.1) {};
\coordinate (f1) at (2.1,.9) {};
\coordinate (g1) at (2.9,0.1) {};
\coordinate (h1) at (3.1,.9) {};
\coordinate (i1) at (3.9,0.1) {};
\coordinate (j1) at (3.9,0.9) {};
\coordinate (a2) at (-0.1,-0.1) {};
\coordinate (b2) at (-0.1,1.1) {};
\coordinate (c2) at (1.1,-0.1) {};
\coordinate (d2) at (1.1,1.1) {};
\draw[line width=1pt] (a) -- (b);
\draw[line width=1pt] (c) -- (d);
\draw[line width=1pt] (a2) -- (b2) -- (d2) -- (c2) -- (a2);
\draw[line width=1pt] (a1) -- (c1) -- (d1) -- (a1);
\draw[line width=1pt] (e1) -- (f1) -- (g1) -- (e1);
\draw[line width=1pt] (h1) -- (i1) -- (j1) -- (h1);
\end{tikzpicture}
\caption{\label{F5} Hypergraph on $10$ vertices associated to matrix \eqref{SM}.}
\end{figure}
The matrix \eqref{SM} has rank $10-2=8$. The functions
\[
\frac{x_{3}}{x_{4}}, \quad
x_{1}^{-a_{7}} x_{2}^{a_{7}} x_{3}^{-a_{7}} x_{5}^{\frac{a_{3} a_{6}}{a_{4}-a_{5}+a_{6}}} x_{6}^{-\frac{a_{3} a_{5}}{a_{4}-a_{5}+a_{6}}} x_{7}^{\frac{a_{3} a_{4}}{a_{4}-a_{5}+a_{6}}} x_{8}^{-\frac{a_{10} a_{2}}{a_{8}-a_{9}+a_{10}}} x_{9}^{\frac{a_{2} a_{9}}{a_{8}-a_{9}+a_{10}}} x_{10}^{-\frac{a_{8} a_{2}}{a_{8}-a_{9}+a_{10}}}
\]
are Casimirs, and the set of integrals $H$,
\[
\frac{\left(x_{1}+x_{2}+x_{3}+x_{4}\right) x_{2}}{x_{1} x_{3}},\quad 
\frac{\left(x_{5}+x_{6}+x_{7}\right)^{a_{4}-a_{5}+a_{6}} x_{6}^{a_{5}}}{x_{5}^{a_{6}} x_{7}^{a_{4}}},\quad 
\frac{\left(x_{8}+x_{9}+x_{10}\right)^{a_{8}-a_{9}+a_{10}} x_{9}^{a_{9}} }{x_{8}^{a_{10}} x_{10}^{a_{8}}}
\]
is pairwise commuting. Although the hypergraph in Fig. \ref{F5} contains the type $[2,1,1]$ hypergraph, the LV system with matrix \eqref{SM} is not a subsystem of the corresponding LV system.
\item The LV system with rank 8 matrix
\begin{equation} \label{ansa}
\A=\begin{pmatrix}
0 & a_{1} & a_{2} & a_{2} & a_{3} & a_{3} & a_{3} & a_{3} & a_{3} & a_{3} 
\\
 -a_{1} & 0 & a_{2} & a_{2} & a_{3} & a_{3} & a_{3} & a_{3} & a_{3} & a_{3} 
\\
 -a_{2} & -a_{2} & 0 & a_{4} & a_{5} & a_{5} & a_{5} & a_{5} & a_{5} & a_{5} 
\\
 -a_{2} & -a_{2} & -a_{4} & 0 & a_{5} & a_{5} & a_{5} & a_{5} & a_{5} & a_{5} 
\\
 -a_{3} & -a_{3} & -a_{5} & -a_{5} & 0 & a_{6} & a_{7} & a_{7} & a_{7} & a_{7} 
\\
 -a_{3} & -a_{3} & -a_{5} & -a_{5} & -a_{6} & 0 & a_{7} & a_{7} & a_{7} & a_{7} 
\\
 -a_{3} & -a_{3} & -a_{5} & -a_{5} & -a_{7} & -a_{7} & 0 & a_{7} & a_{7} & a_{7} 
\\
 -a_{3} & -a_{3} & -a_{5} & -a_{5} & -a_{7} & -a_{7} & -a_{7} & 0 & a_{8}-a_{9} & a_{8} 
\\
 -a_{3} & -a_{3} & -a_{5} & -a_{5} & -a_{7} & -a_{7} & -a_{7} & -a_{8}+a_{9} & 0 & a_{9} 
\\
 -a_{3} & -a_{3} & -a_{5} & -a_{5} & -a_{7} & -a_{7} & -a_{7} & -a_{8} & -a_{9} & 0 
\end{pmatrix}
\end{equation}
admits the DPs
\[
x_{1}+x_{2},\ x_{3}+x_{4},\ x_{5}+x_{6},\ x_{5}+x_{6}+x_{7},\ 
x_{8}+x_{9}+x_{10},\ x_{7}+x_{8}+x_{9}+x_{10},\ 
x_{5}+x_{6}+x_{7}+x_{8}+x_{9}+x_{10},
\]
see associated hypergraph in Fig. \ref{ahsa}.
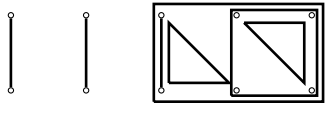
\begin{figure}[h]
\centering
\begin{tikzpicture}[scale=1]
\tikzstyle{nod}= [circle, inner sep=0pt, fill=white, minimum size=2pt, draw]		
\node[nod] (a) at (0,0) {};
\node[nod] (b) at (0,1) {};
\node[nod] (c) at (1,0) {};
\node[nod] (d) at (1,1) {};
\node[nod] (e) at (2,0) {};
\node[nod] (f) at (2,1) {};
\node[nod] (g) at (3,0) {};
\node[nod] (h) at (3,1) {};
\node[nod] (i) at (4,0) {};
\node[nod] (j) at (4,1) {};
\coordinate (e1) at (2.1,0.1) {};
\coordinate (f1) at (2.1,.9) {};
\coordinate (g1) at (2.9,0.1) {};
\coordinate (h1) at (3.1,.9) {};
\coordinate (i1) at (3.9,0.1) {};
\coordinate (j1) at (3.9,0.9) {};
\coordinate (g2) at (2.93,-0.07) {};
\coordinate (h2) at (2.93,1.07) {};
\coordinate (i2) at (4.07,-0.07) {};
\coordinate (j2) at (4.07,1.07) {};
\coordinate (e3) at (1.9,-0.15) {};
\coordinate (f3) at (1.9,1.15) {};
\coordinate (i3) at (4.15,-0.15) {};
\coordinate (j3) at (4.15,1.15) {};
\draw[line width=1pt] (a) -- (b);
\draw[line width=1pt] (c) -- (d);
\draw[line width=1pt] (e) -- (f);
\draw[line width=1pt] (e1) -- (f1) -- (g1) -- (e1);
\draw[line width=1pt] (h1) -- (i1) -- (j1) -- (h1);
\draw[line width=1pt] (g2) -- (h2) -- (j2) -- (i2) -- (g2);
\draw[line width=1pt] (e3) -- (f3) -- (j3) -- (i3) -- (e3);
\end{tikzpicture}
\caption{\label{ahsa} Hypergraph on $10$ vertices associated to matrix \eqref{ansa}.}
\end{figure}
It possesses 8 functionally independent integrals
\begin{align*}
H &= x_{1}+x_{2}+x_{3}+x_{4}+x_{5}+x_{6}+x_{7}+x_{8}+x_{9}+x_{10}\\
C_{1} &= 
\left(\frac{x_{1}}{x_{2}}\right)^{\frac{a_{3}}{a_{1}}}
\left(\frac{x_{3}}{x_{4}}\right)^{\frac{a_{5}}{a_{4}}}
\left(\frac{x_{5}}{x_{6}}\right)^{\frac{a_{7}}{a_{6}}}
x_{7}\left(\frac{x_{10}}{x_{8}}\right)^{\frac{a_{7}}{a_{8}}},\quad 
C_{2} = \frac{x_{9}^{a_{8}} x_{10}^{a_{9}-a_{8}}}{x_{8}^{a_{9}}} \\
R_{1} &=\left(\frac{x_{2}}{x_{1}}\right)^{\frac{a_{3}}{a_{1}}} \left(\frac{x_{4}}{x_{3}}\right)^{\frac{a_{5}}{a_{4}}} \frac{\left(x_{1}+x_{2}\right)^{\frac{a_{5}}{a_{2}}}}{ \left(x_{3}+x_{4}\right)^{\frac{a_{3}}{a_{2}}}} \\
R_{2} &= \left(\frac{x_{1}}{x_{2}}\right)^{\frac{a_{3}}{a_{1}}} \left(\frac{x_{3}}{x_{4}}\right)^{\frac{a_{5}}{a_{4}}} \left(x_{5}+x_{6}+x_{7}+x_{8}+x_{9}+x_{10}\right) \\
R_{3} &= \frac{\left(x_{5}+x_{6}+x_{7}\right)^{a_{6}} x_{6}^{a_{7}}}{x_{5}^{a_{7}}x_{7}^{a_{6}}},\quad 
R_{4} = \frac{\left(x_{7}+x_{8}+x_{9}+x_{10}\right)^{a_{8}} x_{8}^{a_{7}}}{x_{7}^{a_{8}}x_{10}^{a_{7}}} \\
R_{5} &= \left(\frac{x_{1}}{x_{2}}\right)^{\frac{a_{2}}{a_{1}}} \left(x_{3}+x_{4}\right) \left(\frac{x_{6}}{x_{5}}\right)^{\frac{a_{5}}{a_{6}}} \left(\frac{x_{8}+x_{9}+x_{10}}{x_{7}}\right)^{\frac{a_{5}}{a_{7}}},
\end{align*}
where $C_1,C_2$ are Casimirs, and the following brackets vanish
\[
\{R_1,R_2\}=\{R_1,R_3\}=\{R_1,R_4\}=\{R_2,R_3\}=\{R_2,R_4\}=\{R_3,R_5\}=0.
\]
\item
The LV system with nonsingular matrix
\begin{equation} \label{lastone}
\A=\begin{pmatrix}
0 & a_{1} & a_{2} & a_{4} & a_{4} & a_{4} & a_{5} & a_{5} & a_{5} & a_{5} 
\\
 -a_{1} & 0 & a_{3} & a_{4} & a_{4} & a_{4} & a_{5} & a_{5} & a_{5} & a_{5} 
\\
 -a_{2} & -a_{3} & 0 & a_{4} & a_{4} & a_{4} & a_{5} & a_{5} & a_{5} & a_{5} 
\\
 -a_{4} & -a_{4} & -a_{4} & 0 & a_{6} & a_{7} & a_{9} & a_{9} & a_{9} & a_{9} 
\\
 -a_{4} & -a_{4} & -a_{4} & -a_{6} & 0 & a_{8} & a_{9} & a_{9} & a_{9} & a_{9} 
\\
 -a_{4} & -a_{4} & -a_{4} & -a_{7} & -a_{8} & 0 & a_{9} & a_{9} & a_{9} & a_{9} 
\\
 -a_{5} & -a_{5} & -a_{5} & -a_{9} & -a_{9} & -a_{9} & 0 & a_{11} & a_{11} & a_{11} 
\\
 -a_{5} & -a_{5} & -a_{5} & -a_{9} & -a_{9} & -a_{9} & -a_{11} & 0 & a_{10} & a_{11} 
\\
 -a_{5} & -a_{5} & -a_{5} & -a_{9} & -a_{9} & -a_{9} & -a_{11} & -a_{10} & 0 & a_{11} 
\\
 -a_{5} & -a_{5} & -a_{5} & -a_{9} & -a_{9} & -a_{9} & -a_{11} & -a_{11} & -a_{11} & 0
\end{pmatrix}
\end{equation}
admits the DPs
\[
x_{1}+x_{2}+x_{3}, x_{4}+x_{5}+x_{6}, x_{7}+x_{8}+x_{9}
, x_{8}+x_{9}, x_{8}+x_{9}+x_{10}, x_{7}+x_{8}+x_{9}+x_{10},
\]
as depicted in Fig. \ref{F7}.

\begin{figure}[h]
\centering
\begin{tikzpicture}[scale=1]
\tikzstyle{nod}= [circle, inner sep=0pt, fill=white, minimum size=2pt, draw]		
\node[nod] (a) at (0,0) {};
\node[nod] (b) at (0,1) {};
\node[nod] (c) at (1,0) {};
\node[nod] (d) at (1,1) {};
\node[nod] (e) at (2,0) {};
\node[nod] (f) at (2,1) {};
\node[nod] (g) at (3,0) {};
\node[nod] (h) at (3,1) {};
\node[nod] (i) at (4,0) {};
\node[nod] (j) at (4,1) {};
\draw[line width=1pt] (g) -- (j);
\coordinate (a1) at (0.1,0.1) {};
\coordinate (b1) at (0.1,0.9) {};
\coordinate (c1) at (.9,0.1) {};
\draw[line width=1pt] (a1) -- (c1) -- (b1) -- (a1);
\coordinate (d1) at (1.1,0.9) {};
\coordinate (e1) at (1.9,0.1) {};
\coordinate (f1) at (1.9,0.9) {};
\draw[line width=1pt] (f1) -- (e1) -- (d1) -- (f1);
\coordinate (g1) at (3.1,0.2) {};
\coordinate (h1) at (3.1,0.9) {};
\coordinate (j1) at (3.8,0.9) {};
\draw[line width=1pt] (g1) -- (h1) -- (j1) -- (g1);
\coordinate (g2) at (3.2,0.1) {};
\coordinate (j2) at (3.9,0.8) {};
\coordinate (i2) at (3.9,0.1) {};
\draw[line width=1pt] (g2) -- (i2) -- (j2) -- (g2);
\coordinate (g3) at (2.9,-0.1) {};
\coordinate (h3) at (2.9,1.1) {};
\coordinate (i3) at (4.1,-0.1) {};
\coordinate (j3) at (4.1,1.1) {};
\draw[line width=1pt] (g3) -- (h3) -- (j3) -- (i3) -- (g3);
\end{tikzpicture}
\caption{\label{F7} Hypergraph on $10$ vertices associated to matrix \eqref{lastone}.}
\end{figure}
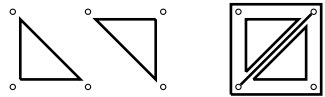

The system has the following set of pairwise commuting integrals
\begin{align*}
T_1&=\left(x_{1}+x_{2}+x_{3}\right)^{-a_{1}+a_{2}-a_{3}} x_{1}^{a_{3}} x_{2}^{-a_{2}} x_{3}^{a_{1}}
,\quad H=x_{1}+x_{2}+\cdots+x_{10},\\
T_2&=\left(x_{4}+x_{5}+x_{6}\right)^{-a_{6}+a_{7}-a_{8}} x_{4}^{a_{8}} x_{5}^{-a_{7}} x_{6}^{a_{6}},\quad
T_3=
\frac{\left(x_{7}+x_{8}+x_{9}\right) \left(x_{8}+x_{9}+x_{10}\right)}{x_{7} x_{10}},\\
T_4&=
\left(\frac{x_1^{a_{3}}x_{3}^{a_{1}}}{x_{2}^{a_{2}}}\right)^{\frac{a_{9}}{a_{1}-a_{2}+a_{3}}} \left(\frac{x_{5}^{a_{7}}}{x_{4}^{a_{8}}x_{6}^{a_{6}}}\right)^{\frac{a_{5}}{a_{6}-a_{7}+a_{8}}}
\left(\frac{x_{7}x_{10}}{x_{8}+x_{9}}\right)^{a_{4}} ,
\end{align*}
and 1 additional integral
$
T_5=x_{8}^{-a_{11}} x_{9}^{a_{11}} x_{10}^{-a_{10}} \left(x_{8}+x_{9}+x_{10}\right)^{a_{10}}$
which commutes with $T_1$ and $T_2$.

The system is not equivalent to a subsystem of the 10-component LV system of type $[2,1,1]$. 
\end{enumerate}

\noindent
We note that the linear transformations that preserve the Lotka-Volterra structure (and induce the equivalence relation introduced in \cite{Hyper}) do not preserve the antisymmetry (nor the skew-symmetrizable property, cf. \cite{FerOli}) of the matrix $\A$. And, they also do not preserve the log-canonical structure of the Poisson bracket.

\section{Concluding remarks} \label{ConclRemks}
The number of different maximal rank Liouville integrable classes of LV systems we provide in this paper grows quadratically in $n$, whereas the number of superintegrable classes of tree-systems \cite{Trees1,Trees2} grows exponentially. The number of parameters in each class, $3n/2-2$, is about half the number of parameters for tree-systems, $3n-2$, and about three quarters of the number of parameters in the Volterra integrable system \eqref{VOLA}, $2n$.
An important difference with Volterra's system is that the rank of his symplectic matrix is minimal (2), and the number of parameters in his linear terms is maximal ($n$). On the contrary, in this paper we have concentrated on symplectic matrices with maximal rank ($n$ resp $n-1$), and we mainly consider one particular form for the linear part $\b=(r,\ldots,r,s)$, i.e. with only 2 parameters.
Although we have identified a large number of integrable classes, we have not exhausted all cases, as the third example in Section \ref{Sothex} shows. The first two examples in Section \ref{Sothex}, of Liouville integrable LV systems with near maximal rank, serve to show that vast territories of untouched ground are still waiting to be charted, namely Liouville integrable LV systems of intermediate rank.
The authors will consider this paper a success, if it encourages the interested reader to join in this search for Liouville integrable systems, in the spirit of Jarmo Hietarinta.

\subsection*{Acknowledgements} We are grateful for the hospitality of Da-jun Zhang and Cheng Zhang of Shanghai University, where this paper was finalised. P.H. van der Kamp was supported by NSFC grant No. 12271334. We thank Ian Marquette for helpful discussions, and anonymous referees for providing historical context, in particular the work of Volterra \cite{Vol2}.

\label{lastpage}
\end{document}